\def\citep#1{\cite{#1}}
\newif\ifpreprint
\def\preprintstart{}
\def\preprintstop{}
\newcommand{\GenericSeparation}[1]%
        {\mathrel{\raisebox{-0.2em}{$\substack{\text{\small $\Vert$} \\[-1.7mm] \line(1,0){10}\\ #1}$}}}
\newcommand{\Tsep}{\GenericSeparation{t}}
\renewcommand{\AGENT}{{\mathbb A}}
\newcommand{\AgentSubsetW}{W}
\newcommand{\Precedence}{\ParentRelation}
\renewcommand{\Precedence}{\EDGE}
\renewcommand{\AGENT}{\VERTEX}
\renewcommand{\agent}{\alpha}
\renewcommand{\agentbis}{\gamma}
\renewcommand{\bgent}{\gamma}
\renewcommand{\Bgent}{\Gamma}
\renewcommand{\cgent}{\lambda}
\renewcommand{\Cgent}{\Lambda}
\renewcommand{\relation}{\mathcal{R}}
\renewcommand{\relationter}{\relationbis}
\newcommand{\UpperIndex}{\AgentSubsetW}
\newcommand{\upperAgentSubsetW}{^{\text{\tiny $\UpperIndex$}}}
\newcommand{\MinusupperAgentSubsetW}{^{\text{\tiny$-\UpperIndex$}}}
\newcommand{\ParentalPrecedence}{\Precedence\upperAgentSubsetW}
\newcommand{\ConverseParentalPrecedence}{\Precedence\MinusupperAgentSubsetW}
\newcommand{\tinyTC}[4]{{#1^{\text{\tiny $#2 #3 #4$}}}}
\newcommand{\TransitiveClosureRelation}[2]{\tinyTC{#1}{}{#2}{+}}
\newcommand{\TransitiveClosureConverseRelation}[2]{\tinyTC{#1}{-}{#2}{+}}
\newcommand{\TransitiveReflexiveClosureRelation}[2]{\tinyTC{#1}{}{#2}{*}}
\newcommand{\TransitiveReflexiveClosureConverseRelation}[2]{\tinyTC{#1}{-}{#2}{*}}
\newcommand{\TopologicalRelation}{\relationter\upperAgentSubsetW}
\newcommand{\ConverseTopologicalRelation}{\relationter\MinusupperAgentSubsetW}
\newcommand{\TransitiveClosureParentalPrecedence}%
{\TransitiveClosureRelation{\Precedence}{\UpperIndex}}
\newcommand{\TransitiveClosureConverseParentalPrecedence}%
{\TransitiveClosureConverseRelation{\Precedence}{\UpperIndex}}
\newcommand{\TransitiveReflexiveClosureParentalPrecedence}%
{\TransitiveReflexiveClosureRelation{\Precedence}{\UpperIndex}}
\newcommand{\TransitiveReflexiveClosureConverseParentalPrecedence}%
{\TransitiveReflexiveClosureConverseRelation{\Precedence}{\UpperIndex}}
\newcommand{\ConditionalAncestor}{\TransitiveReflexiveClosureParentalPrecedence}
\newcommand{\ConverseConditionalAncestor}{\TransitiveReflexiveClosureConverseParentalPrecedence}
\newcommand{\ACTIVE}{\mathcal{A}}
\newcommand{\ConditionalActive}{\ACTIVE\upperAgentSubsetW}
\newcommand{\ConditionalAscendent}{\mathcal{B}\upperAgentSubsetW}
\newcommand{\ConverseConditionalAscendent}{\mathcal{B}\MinusupperAgentSubsetW}
\newcommand{\ConditionalCommonCause}{\mathcal{K}\upperAgentSubsetW}
\newcommand{\ConditionalActiveTwo}{\ConditionalCommonCause}
\newcommand{\ConverseConditionalCommonCause}{\ConditionalCommonCause}
\newcommand{\ConverseConditionalActiveTwo}{\ConverseConditionalCommonCause}
\newcommand{\Cousinhood}{\mathcal{C}\upperAgentSubsetW}
\newcommand{\ConditionalActiveThree}{%
  \Bp{
    \bp{\ConditionalAscendent \cup \ConditionalActiveTwo}
    \Cousinhood
    \bp{\ConverseConditionalAscendent \cup \ConverseConditionalActiveTwo}
  }
}
\newcommand{\ConditionalActiveNew}{\ConditionalActive}
\newcommand{\topology}{{\cal T}} 
\newcommand{\ClopenSet}{F}
\newcommand{\ClosedSet}{C}
\newcommand{\OpenSet}{O}
\newcommand{\TopologyLower}[1]{\topology_{{\text{\tiny $#1$}}}}
\newcommand{\TopologicalClosure}[2]{\overline{#2}^{\text{\tiny $#1$}}}
\newcommand{\OrientedGraph}{\VERTEX,\EDGE}
\newcommand{\npOrientedGraph}{\np{\OrientedGraph}}
\path\renewcommand{\path}{\varrho}
\else\newcommand{\path}{\varrho}
\newcommand{\RelTheta}{\Theta\upperAgentSubsetW}
\newcommand{\ConverseRelTheta}{\Theta\MinusupperAgentSubsetW}
\newcommand{\ConditionalActiveThreeTheta}{\np{\RelTheta\Cousinhood\ConverseRelTheta}}
\def\ConditionalUp{\npTransitiveReflexiveClosure{\Converse{\Precedence}\Delta_{\Complementary\AgentSubsetW}} \Converse{\Precedence}}
\def\ConditionalDown{{\Precedence} \npTransitiveReflexiveClosure{\Delta_{\Complementary\AgentSubsetW}\Precedence}}
\title{Topological Conditional Separation}
\author{%
  Michel De Lara$^\dagger$,
  Jean-Philippe Chancelier\footnote{CERMICS, Ecole des Ponts, Marne-la-Vall\'ee, France},
  Benjamin Heymann\footnote{Criteo AI Lab, Paris, France}
}
\date{\today}
\begin{document}

\maketitle

\begin{abstract}

  Pearl's d-separation is a foundational notion to study conditional
  independence between random variables.
  We define the topological conditional separation and we show that it
  is equivalent to the d-separation, extended beyond acyclic graphs, be they
  finite or infinite.
\end{abstract}

\section{Introduction}
As the world shifts toward more and more data-driven decision-making,
causal inference is taking more space in applied sciences, statistics and machine learning. 
This is because it allows for  better, more robust decision-making, and
provides a way to interpret the data that goes beyond correlation
\citep{pearl2018book}. 
In his seminal work~\citep{pearl1995causal}, Pearl  builds on graphical
models~\citep{cowell2006probabilistic} to propose the so-called
do-calculus, and he notably introduces the notion of d-separation on a
Directed Acyclic Graph (DAG).

This paper has two companion papers~\cite{Chancelier-De-Lara-Heymann-2021,Heymann-De-Lara-Chancelier-2021},
Altogether, the three of them aim at providing
another perspective on conditional independence and do-calculus. 
In this paper, we consider directed graphs (DGs), not necessarily acyclic, and
we introduce a suitable topology on the set of vertices. Then, we define 
the new notion of topological conditional separation on DGs, and we 
prove its equivalence with an extension of Pearl's d-separation on DGs.
The topological separation is practical because it just requires to check that
two sets are disjoint. By contrast, the d-separation requires to check that
\emph{all} the paths that connect two variables are blocked. 
Moreover, as its name suggests, the topological separation has a
theoretical interpretation which motivates a detour by the theory of
Alexandrov topologies.

The paper is organized as follows.
In Sect.~\ref{Alexandrov_topology_on_a_graph},
we provide background on binary relations and graphs, and then
we present Alexandrov topologies induced by binary relations.
In Sect.~\ref{Equivalence_between_d-separation_and_t-separation},
we recall the definition of d-separation,
then introduce a suitable topology on the set of vertices,
and define a new notion of conditional topological separation (t-separation).
Then, we show that d-separation and t-separation are equivalent,
and we put forward a practical characterization of t-separation between subsets of
vertices.
We provide additional material on Alexandrov topologies 
in Appendix~\ref{Additional_results_alexandrov}, and we 
relegate technical lemmas in Appendix~\ref{Additional_Lemmas}.

\section{Alexandrov topology on a graph}
\label{Alexandrov_topology_on_a_graph}

In~\S\ref{Background_on_binary_relations_and_graphs}, we provide background on
binary relations and graphs.
In~\S\ref{Topologies_induced_by_binary_relations},
we present Alexandrov topologies induced by binary relations.

\subsection{Background on binary relations, graphs and topologies}
\label{Background_on_binary_relations_and_graphs}

We use the notation \( \ic{r,s}=\na{r,r+1,\ldots,s-1,s} \) for any two integers
$r$, $s$ such that $r \leq s$.

\subsubsection{Binary relations}
\label{Binary_relations}

Let $\AGENT$ be a nonempty set (finite or not). 
We recall that a \emph{(binary) relation}~$\relation$ on~$\AGENT$ is
a subset $\relation \subset \AGENT\times\AGENT $ and that 
\( \bgent\, \relation\, \cgent \) means 
\( \np{\bgent,\cgent} \in \relation \).
For any subset \( \Bgent \subset \AGENT \), 
the \emph{(sub)diagonal relation} is \( \Delta_{\Bgent} = \bset{ \np{\bgent,\cgent} \in \AGENT\times\AGENT }%
{ \bgent=\cgent \in \Bgent } \)
and the \emph{diagonal relation} is \( \Delta=\Delta_{\AGENT} \).
A \emph{foreset} of a relation~$\relation$ is
any set of the form \( \relation \, \cgent = 
\defset{ \bgent \in  \AGENT }{ \bgent\, \relation \, \cgent } \),
where \( \cgent \in \AGENT \), 
or, by extension, of the form \( \relation \, \Cgent = 
\defset{ \bgent \in  \AGENT }{ \exists \cgent \in \Cgent \eqsepv \bgent\,
  \relation \, \cgent } \), where \( \Cgent \subset \AGENT \).
An \emph{afterset} of a relation~$\relation$ is
any set of the form \( \bgent \, \relation = 
\defset{ \cgent \in  \AGENT }{ \bgent\, \relation \, \cgent } \),
where \( \bgent \in \AGENT \), 
or, by extension, of the form \( \Bgent \, \relation = 
\defset{ \cgent \in  \AGENT }{ \exists \bgent \in \Bgent \eqsepv \bgent\,
  \relation \, \cgent } \), where \( \Bgent \subset \AGENT \).
The \emph{opposite} or \emph{complementary~$\Complementary{\relation}$} of a binary
relation~$\relation$ is the relation~$\Complementary{\relation}=\AGENT\times\AGENT\setminus\relation$,
that is, defined by \( \bgent\, \relation^{\mathsf{c}} \, \cgent \iff 
\neg \np{ \bgent\, \relation \, \cgent } \).
The \emph{converse~$\Converse{\relation}$} of a binary relation~$\relation$ is
defined by \( \bgent\, \Converse{\relation} \, \cgent \iff \cgent\, \relation \, \bgent
\).
A relation~$\relation$ is \emph{symmetric} if \( \Converse{\relation}=\relation
\),
and is \emph{anti-symmetric} if \( \Converse{\relation} \cap \relation \subset
\Delta \). 

The \emph{composition} $\relation\relation'$ of two
binary relations~$\relation, \relation'$ on~$\AGENT$ is defined by
\( \bgent (\relation\relation') \cgent \iff
\exists \delta \in  \AGENT \), \( \bgent\, \relation \, \delta \)
and \( \delta\, \relation' \, \cgent \);
then, by induction we define
\( \relation^{n+1}=\relation\relation^{n} \) for \( n \in \NN^* \). 
The \emph{transitive closure} of a binary relation~$\relation$ is
\( \TransitiveClosure{\relation} = \cup_{k=1}^{\infty} \relation^{k} \)
(and $\relation$ is  \emph{transitive} if \( \TransitiveClosure{\relation}=\relation \))
and the \emph{reflexive and transitive closure} is 
\( \TransitiveReflexiveClosure{\relation}= \TransitiveClosure{\relation} \cup
\Delta = \cup_{k=0}^{\infty} \relation^{k} \) with the convention
$\relation^0=\Delta$.
A \emph{partial equivalence relation} is a symmetric and transitive binary
relation (generally denoted by~$\sim$ or~$\equiv$).
An \emph{equivalence relation} is a reflexive, symmetric and transitive binary
relation.

\subsubsection{Preorders}
\label{Preorders}

A \emph{preorder}  (or ``quasi-ordering'')
on~$\AGENT$ is a reflexive and transitive binary relation (generally
denoted by~$\moinsfine$),
whereas an \emph{order} is an anti-symmetric preorder (generally
denoted by~$\leq$).
For a preorder, the foreset (resp. afterset) of a subset \( \Bgent \subset \AGENT \) is called the
\emph{downset} (resp. \emph{upset}) of~$\Bgent$ and is denoted
by~\( \downarrow \! \Bgent \) (resp. by~\( \uparrow \! \Bgent \)):
\begin{equation*}
  \downarrow \!\Bgent
  = \bset{\agent\in\AGENT}{\exists\agentbis\in\Bgent \eqsepv
    \agent\moinsfine\agentbis}
  \eqsepv
  \uparrow \!\Bgent
  = \bset{\agent\in\AGENT}{\exists\agentbis\in\Bgent \eqsepv
    \agentbis\moinsfine\agent}
  \eqfinp     
\end{equation*}
Then, a subset \( \Bgent \subset \AGENT \) is called an \emph{upper set}
(resp. a \emph{lower set}) --- or also an \emph{upward closed set} (resp. \emph{downward
  closed  set}) --- with respect to the preorder~\( \moinsfine \) if 
\( \downarrow\!\Vertex  \, \subset \Vertex\)
(resp. \( \uparrow \!\Vertex \, \subset \Vertex\)) or, equivalently, if
\( \downarrow\!\Vertex  \, =\Vertex\)
(resp. \( \uparrow \!\Vertex \, =\Vertex\)).

\subsubsection{Graphs}

Let $\VERTEX$ be a nonempty set (finite or not), whose elements are called \emph{vertices}.  Let
\( \EDGE \subset \VERTEX\times\VERTEX \) be a relation on~$\VERTEX$, whose
elements are ordered pairs (that is, couples) of vertices called \emph{edges}.
The first element of an edge is the \emph{tail of the edge},
whereas the second one is the \emph{head of the edge}.
Both tail and head are called \emph{endpoints} of the edge,
and we say that the edge connects its endpoints.
We define a \emph{loop} 
as an element of \( \Delta \cap \EDGE \), that is, a loop is an edge that
connects a vertex to itself.

A \emph{graph}, as we use it throughout this paper, is a
couple~$(\VERTEX,\EDGE)$ where \( \EDGE \subset \VERTEX\times\VERTEX \).
This definition is very basic and we now stress proximities and differences with
classic notions in graph theory.
As we define a graph, it may hold a finite or infinite number of vertices;
there is at most one edge that has a couple of ordered vertices as single endpoints,
hence a graph (in our sense) is not a multigraph (in graph theory);
loops are not excluded (since we do not impose $\Delta \cap \EDGE=\emptyset$).
Hence, what we call a graph would be called a directed simple graph permitting
loops in graph theory.

\subsubsection{Topologies}

We refer the reader to \cite[Chapter~4]{Goubault-Larrecq:2013} for notions in topology.
Let $\VERTEX$ be a nonempty set.
The set~\( \topology \subset 2^\VERTEX \) is said to be a \emph{topology}
on~$\VERTEX$ if \( \topology \)  contains both \( \emptyset, \AGENT \) and is stable under 
the union and finite 
intersection operations.
The space \( \np{\VERTEX,\topology} \) is called \emph{topological space}.
Any element \( \OpenSet \in \topology \) is called an \emph{open set}
(more precisely a $\topology$-open set),
and any element in
\begin{equation}
  \topology'= \bset{ \ClosedSet \subset \VERTEX }%
  { \Complementary{\ClosedSet} \in \topology }
  \label{eq:Topology'}
\end{equation}
is called a \emph{closed set} (more precisely a $\topology$-closed set).
For any subset  $\Vertex \subset \VERTEX$, the intersection of all the closed
sets that contain~$\Vertex$ is a closed set called \emph{topological closure}
and denoted by~\( \TopologicalClosure{}{\Vertex} \) (or, when needed,
\( \TopologicalClosure{\topology}{\Vertex} \)). 

%
A \emph{clopen set} (more precisely a $\topology$-clopen set)
is a subset of~$\VERTEX$ which is both closed and open, that
is, an element of \( \topology \cap \topology' \).
A topological space~\( \np{\VERTEX,\topology} \) is said to be disconnected if it is the union of two
disjoint nonempty open sets; otherwise, it is said to be connected.
A subset \( \Vertex \subset \VERTEX \) of~$\VERTEX$ is said to be
\emph{connected} (more precisely $\topology$-connected)
if it is connected under
its \emph{subspace topology} \( \topology\cap\Vertex =
\bset{ \OpenSet\cap\Vertex \in \topology }{ \OpenSet \in \topology } \)
(also called \emph{trace topology} or \emph{relative topology}).
A \emph{connected component} of the topological space
\( \np{\VERTEX,\topology} \)  (also called a $\topology$-connected component)
is a maximal (for the inclusion order) connected subset.
A connected component is necessarily closed
and the connected components of \( \np{\VERTEX,\topology} \) form a partition
of~$\VERTEX$ \cite[Exercise~4.11.13]{Goubault-Larrecq:2013}.
Any clopen set is a union of (possibly infinitely many) connected components.

Let \( \np{\VERTEX_i,\topology_i} \), $i=1,2$ be two topological spaces.
The \emph{product topology} \( \topology_1\otimes\topology_2 \) is the smallest
subset \( \topology \subset 2^{\VERTEX_1\times\VERTEX_2} \)
which is a topology on the product set \( \VERTEX_1\times\VERTEX_2 \) 
and which contains all the finite rectangles
\( \bset{\OpenSet_1\times\OpenSet_2}{\OpenSet_i \in \topology_i, i=1,2} \).

\subsubsubsection{Specialization preorder}

With any topology~\( \topology \) on~$\VERTEX$, one associates the so-called 
\emph{specialization} (or canonical) \emph{preorder}
as the binary relation~\( \moinsfine_\topology\) on~$\VERTEX$ defined by
\cite[\S~4.2.1, Lemma~4.2.7]{Goubault-Larrecq:2013}
\begin{equation}
  \bgent \moinsfine_\topology \cgent \iff
  \bgent \in \TopologicalClosure{\topology}{\cgent}
  \qquad \bp{\forall  \bgent, \cgent \in\VERTEX }
  \eqfinp 
\end{equation}
The relation~\( \moinsfine_\topology\) is reflexive and transitive, hence is a
preorder (hence the notation).
Following the notation in~\S\ref{Binary_relations}
--- with the notation $\downarrow_\topology$ for a downset and $\uparrow_\topology$ for an upset --- 
we have that
\begin{equation}
  \downarrow_\topology \!\cgent =  \TopologicalClosure{\topology}{\cgent}
\eqsepv  \forall \cgent \in\VERTEX
  \eqfinv 
\end{equation}
%
it is readily shown (and well-known \cite[Lemmas~4.2.6 and~4.2.7]{Goubault-Larrecq:2013})
that every open set is an upper set 
and every closed set is a lower set. 

\subsubsubsection{Preorder topology}

It can be shown that, for any preorder~$\moinsfine$ on~$\VERTEX$, the set
\begin{equation}
  \TopologyLower{\moinsfine} = \bset{ \OpenSet \subset \VERTEX }%
  { \uparrow \! \OpenSet \subset \OpenSet }
  \label{eq:TopologyLowermoinsfine}
\end{equation}
is a topology and that it is the finest topology~\( \topology \) that has~$\moinsfine$
as specialization order (that is, such that \(
\moinsfine_{\topology}= \moinsfine \))
\cite[Proposition~4.2.11]{Goubault-Larrecq:2013}.
The topology~\( \TopologyLower{\moinsfine} \) is an Alexandrov topology
as follows.

\subsubsubsection{Alexandrov topology}

The set~\( \topology \subset 2^\VERTEX \) is said to be an \emph{Alexandrov topology}
on~$\VERTEX$ if \( \topology \) contains both \( \emptyset, \AGENT \) and is stable under 
the union and (not necessarily finite) intersection operations.
If \( \topology \) is an Alexandrov topology, then \( \topology' \) in~\eqref{eq:Topology'}
also is an Alexandrov topology, that we call the \emph{dual (Alexandrov)
  topology}~\cite{Bouacida-Echi-Salhi:1996}.

It is established that a topology~\( \topology \) is an Alexandrov topology if
and only if \( \topology =\TopologyLower{\moinsfine_\topology} \), where
\( \moinsfine_\topology \) is the specialization preorder of~\( \topology \),
that is, if and only if 
the open sets (in~$\topology$) are exactly
the upper sets (with respect to~\( \moinsfine_\topology \)) --- or, equivalently,
the closed sets (in~$\topology'$) are exactly
the lower sets (with respect to~\( \moinsfine_\topology \))
\cite[Proposition~4.2.11, Exercise~4.2.13]{Goubault-Larrecq:2013}.
Thus, if \( \topology \) is an Alexandrov topology, we have that 
\begin{equation}
\Bp{  \OpenSet \in \topology \iff 
  \downarrow_\topology \! \OpenSet  \, \subset \OpenSet }
  \mtext{ and }
\Bp{  \ClosedSet \in \topology' \iff  
  \uparrow_\topology \! \ClosedSet \, \subset \ClosedSet }
  \eqfinp 
\end{equation}


In an Alexandrov topology, it can be shown that,
for any family \( \sequence{\Bgent_\scenario}{\scenario\in\SCENARIO} \)
of subsets~$\Bgent_\scenario \subset \VERTEX$, 
the topological closure satisfies
\begin{equation}
  \TopologicalClosure{}{\bigcup_{\scenario\in\SCENARIO}\Bgent_\scenario} =
  \bigcup_{\scenario\in\SCENARIO}\TopologicalClosure{}{\Bgent_\scenario}
  \eqsepv \forall \Bgent_\scenario \subset \VERTEX
  \eqsepv \scenario\in\SCENARIO
  \eqfinp 
  \label{eq:TopologicalClosureAlexandrovTopology}
\end{equation}
Indeed, as \( \Bgent_\scenario \subset
\TopologicalClosure{}{\bigcup_{\scenariobis\in\SCENARIO}\Bgent_{\scenariobis}}
\),
we get that
\( \bigcup_{\scenario\in\SCENARIO}\Bgent_\scenario
\subset \bigcup_{\scenario\in\SCENARIO}\TopologicalClosure{}{\Bgent_\scenario}
\subset
\TopologicalClosure{}{\bigcup_{\scenario\in\SCENARIO}\Bgent_\scenario} \).
As the set \(
\bigcup_{\scenario\in\SCENARIO}\TopologicalClosure{}{\Bgent_\scenario} \)
is closed, by definition of Alexandrov topology, we conclude.
By~\eqref{eq:TopologicalClosureAlexandrovTopology}, it is readily deduced that
\begin{equation}
  \TopologicalClosure{}{\Bgent} = \, \downarrow_\topology \! \Bgent
  \eqsepv \forall \Bgent \subset \VERTEX
  \eqfinp 
  \label{eq:TopologicalClosureAlexandrovTopology=LowerSet}
\end{equation}
%

\subsection{Alexandrov topology induced by a binary relation}
\label{Topologies_induced_by_binary_relations}

As recalled, a topology is an Alexandrov topology if
and only if it is the topology of a preorder 
\cite[Exercise~4.2.13]{Goubault-Larrecq:2013}.
In fact, one can associate a topology with any binary relation
(see~\eqref{eq:TopologyLowerEDGE} below)
and prove that a topology is an Alexandrov topology if
and only if it is the topology of a binary relation 
~\cite[Théorème~1.2]{Bouacida-Echi-Salhi:1996}.
In Proposition~\ref{pr:TopologyLowerEDGE}, we analyze the Alexandrov topology
induced by a binary relation;
we recover known results \cite[Théorème~1.2]{Bouacida-Echi-Salhi:1996}
and we add some new results.
Additional results are provided in Appendix~\ref{Additional_results_alexandrov}.

\begin{proposition}
  \label{pr:TopologyLowerEDGE}
  Let $(\VERTEX,\EDGE)$ be a graph, that is,
$\VERTEX$ is a set and \( \EDGE \subset \VERTEX\times\VERTEX \).

  The following set
  \begin{equation}
    \TopologyLower{\EDGE} = \bset{ \OpenSet \subset \VERTEX }%
    { \OpenSet\EDGE \subset \OpenSet }
    \label{eq:TopologyLowerEDGE}
  \end{equation}
  is an Alexandrov topology on~$\VERTEX$
  with the property that open subsets are characterized by 
  \begin{equation}
    \OpenSet \in \TopologyLower{\EDGE}
    \iff
    \OpenSet\EDGE \subset \OpenSet
    \iff      
    \OpenSet\TransitiveClosure{\EDGE}
    \subset \OpenSet
    \iff
    \OpenSet\TransitiveReflexiveClosure{\EDGE}\subset \OpenSet
    \iff
    \OpenSet\TransitiveReflexiveClosure{\EDGE}= \OpenSet
    \eqfinp
    \label{eq:TopologyLowerEDGE_OpenSet}
  \end{equation}
  In the Alexandrov topology~\( \TopologyLower{\EDGE} \),
  the topological closure\footnote{To alleviate the notation, we have
    denoted the topological closure by~\( \TopologicalClosure{\EDGE}{\Bgent} \)
    instead of~\( \TopologicalClosure{\TopologyLower{\EDGE}}{\Bgent} \).}~\( \TopologicalClosure{\EDGE}{\Bgent} \) 
  of a subset~\( \Bgent \subset \VERTEX\) is given by
  \begin{equation}
    \TopologicalClosure{\EDGE}{\Bgent} =
    \TransitiveReflexiveClosure{\EDGE}\Bgent
    \eqsepv \forall \Bgent  \subset \VERTEX
    \eqfinv 
    \label{eq:TopologicalClosureEDGE}
  \end{equation}
  that is, is the $\TransitiveReflexiveClosure{\EDGE}$-foreset;
the closed subsets are characterized by 
  \begin{equation}
    \ClosedSet \in \TopologyLower{\EDGE}'
    \iff
    \EDGE \ClosedSet \subset \ClosedSet
    \iff
    \TransitiveClosure{\EDGE}\ClosedSet \subset \ClosedSet
    \iff
    \TransitiveReflexiveClosure{\EDGE}\ClosedSet \subset \ClosedSet
    \iff
    \TransitiveReflexiveClosure{\EDGE}\ClosedSet = \ClosedSet
    \iff
    \TopologicalClosure{\EDGE}{\ClosedSet} = \ClosedSet
    \eqfinp 
    \label{eq:TopologyLowerEDGE_ClosedSet}
  \end{equation}
  The Alexandrov topology~\( \TopologyLower{\EDGE} \) satisfies
  \begin{subequations}
    \begin{align}
      \TopologyLower{\EDGE}
      &=
        \TopologyLower{\TransitiveClosure{\EDGE}}
        = \TopologyLower{\TransitiveReflexiveClosure{\EDGE}}
        \eqfinv 
        \intertext{and the dual Alexandrov topology~\( \TopologyLower{\EDGE}' \) satisfies}
        \TopologyLower{\EDGE}'
      &=
        \TopologyLower{\Converse{\EDGE}}
        = \TopologyLower{\npTransitiveClosure{\Converse{\EDGE}}}
        = \TopologyLower{\npTransitiveReflexiveClosure{\Converse{\EDGE}}}
        \eqfinp
    \end{align}
  \end{subequations}
  Regarding the specialization preorder~\( \moinsfine_\topology\), it is
  well-known that, for any preorder~$\moinsfine$ on~$\VERTEX$,
  we have that \( \moinsfine_{\TopologyLower{\moinsfine}}=\moinsfine \)
  \cite[Proposition~4.2.11]{Goubault-Larrecq:2013}.
  More generally, it holds that
  \begin{equation}
    \moinsfine_{\TopologyLower{\EDGE}} =
    \TransitiveReflexiveClosure{\EDGE}
    \eqsepv \forall \EDGE \subset \VERTEX^2
    \eqfinp 
  \end{equation}
\end{proposition}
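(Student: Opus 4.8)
The plan is to reduce everything to one elementary equivalence on the afterset operation. First I would prove that, for \emph{any} relation~$\relation$ on~$\VERTEX$ and \emph{any} subset~$\OpenSet\subset\VERTEX$,
\[
  \OpenSet\relation\subset\OpenSet
  \iff
  \OpenSet\TransitiveClosure{\relation}\subset\OpenSet
  \iff
  \OpenSet\TransitiveReflexiveClosure{\relation}\subset\OpenSet
  \iff
  \OpenSet\TransitiveReflexiveClosure{\relation}=\OpenSet
  \eqfinp
\]
The forward implication is an induction on~$n$: using associativity of composition in the form $\OpenSet\relation^{n+1}=\np{\OpenSet\relation^{n}}\relation$ together with monotonicity of the afterset operation, $\OpenSet\relation\subset\OpenSet$ forces $\OpenSet\relation^{n}\subset\OpenSet$ for every~$n\geq 1$, hence $\OpenSet\TransitiveClosure{\relation}=\bigcup_{k\geq 1}\OpenSet\relation^{k}\subset\OpenSet$ and then $\OpenSet\TransitiveReflexiveClosure{\relation}=\OpenSet\cup\OpenSet\TransitiveClosure{\relation}=\OpenSet$; the reverse implications are immediate from $\relation\subset\TransitiveClosure{\relation}\subset\TransitiveReflexiveClosure{\relation}$. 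Specializing to $\relation=\EDGE$ yields~\eqref{eq:TopologyLowerEDGE_OpenSet}, and, since the set in~\eqref{eq:TopologyLowerEDGE} is defined by one of these conditions, it also yields $\TopologyLower{\EDGE}=\TopologyLower{\TransitiveClosure{\EDGE}}=\TopologyLower{\TransitiveReflexiveClosure{\EDGE}}$.

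Next I would verify that $\TopologyLower{\EDGE}$ is an Alexandrov topology, which needs only three elementary facts about the afterset operation~$\OpenSet\mapsto\OpenSet\EDGE$: it sends~$\emptyset$ to~$\emptyset$, it distributes over arbitrary unions, and it is sub-distributive over arbitrary intersections, i.e. $\np{\bigcap_{i}\OpenSet_{i}}\EDGE\subset\bigcap_{i}\np{\OpenSet_{i}\EDGE}$. From the first two (and $\VERTEX\EDGE\subset\VERTEX$) we get $\emptyset,\VERTEX\in\TopologyLower{\EDGE}$ and stability under arbitrary unions; from the third, if $\OpenSet_{i}\EDGE\subset\OpenSet_{i}$ for all~$i$, then $\np{\bigcap_{i}\OpenSet_{i}}\EDGE\subset\bigcap_{i}\np{\OpenSet_{i}\EDGE}\subset\bigcap_{i}\OpenSet_{i}$, so $\TopologyLower{\EDGE}$ is also stable under arbitrary intersections.

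For the closed sets I would start from~\eqref{eq:Topology'}: $\ClosedSet\in\TopologyLower{\EDGE}'$ iff $\Complementary{\ClosedSet}\EDGE\subset\Complementary{\ClosedSet}$, i.e. every $\EDGE$-successor of a point of~$\Complementary{\ClosedSet}$ lies in~$\Complementary{\ClosedSet}$; contrapositively this says every $\EDGE$-predecessor of a point of~$\ClosedSet$ lies in~$\ClosedSet$, that is, $\EDGE\ClosedSet\subset\ClosedSet$. Since the foreset satisfies $\EDGE\ClosedSet=\ClosedSet\Converse{\EDGE}$, this reads $\ClosedSet\in\TopologyLower{\Converse{\EDGE}}$, whence $\TopologyLower{\EDGE}'=\TopologyLower{\Converse{\EDGE}}$. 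Applying the equivalence of the first paragraph to~$\Converse{\EDGE}$, and using the commutations $\npTransitiveClosure{\Converse{\EDGE}}=\Converse{\TransitiveClosure{\EDGE}}$ and $\npTransitiveReflexiveClosure{\Converse{\EDGE}}=\Converse{\TransitiveReflexiveClosure{\EDGE}}$ together with $\relation\ClosedSet=\ClosedSet\Converse{\relation}$, turns that chain into $\EDGE\ClosedSet\subset\ClosedSet\iff\TransitiveClosure{\EDGE}\ClosedSet\subset\ClosedSet\iff\TransitiveReflexiveClosure{\EDGE}\ClosedSet\subset\ClosedSet\iff\TransitiveReflexiveClosure{\EDGE}\ClosedSet=\ClosedSet$, which is~\eqref{eq:TopologyLowerEDGE_ClosedSet} except for the final clause, and also gives $\TopologyLower{\EDGE}'=\TopologyLower{\npTransitiveClosure{\Converse{\EDGE}}}=\TopologyLower{\npTransitiveReflexiveClosure{\Converse{\EDGE}}}$.

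Finally, for~\eqref{eq:TopologicalClosureEDGE} I would show that $\TransitiveReflexiveClosure{\EDGE}\Bgent$ is the smallest closed set containing~$\Bgent$: it contains~$\Bgent$ because $\Delta\subset\TransitiveReflexiveClosure{\EDGE}$; it is closed because $\EDGE\np{\TransitiveReflexiveClosure{\EDGE}\Bgent}=\np{\EDGE\TransitiveReflexiveClosure{\EDGE}}\Bgent=\TransitiveClosure{\EDGE}\Bgent\subset\TransitiveReflexiveClosure{\EDGE}\Bgent$; and if~$\ClosedSet$ is closed with $\Bgent\subset\ClosedSet$, then $\TransitiveReflexiveClosure{\EDGE}\Bgent\subset\TransitiveReflexiveClosure{\EDGE}\ClosedSet=\ClosedSet$ by the closed-set characterization just proved. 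Hence $\TopologicalClosure{\EDGE}{\Bgent}=\TransitiveReflexiveClosure{\EDGE}\Bgent$, which in particular supplies the missing clause $\TransitiveReflexiveClosure{\EDGE}\ClosedSet=\ClosedSet\iff\TopologicalClosure{\EDGE}{\ClosedSet}=\ClosedSet$ of~\eqref{eq:TopologyLowerEDGE_ClosedSet}. For the specialization preorder, $\bgent\moinsfine_{\TopologyLower{\EDGE}}\cgent$ means $\bgent\in\TopologicalClosure{\EDGE}{\na{\cgent}}=\TransitiveReflexiveClosure{\EDGE}\na{\cgent}$, i.e. $\bgent\,\TransitiveReflexiveClosure{\EDGE}\,\cgent$, so $\moinsfine_{\TopologyLower{\EDGE}}=\TransitiveReflexiveClosure{\EDGE}$; the classical identity $\moinsfine_{\TopologyLower{\moinsfine}}=\moinsfine$ is then the special case where $\EDGE=\moinsfine$ is already a preorder, for which $\TransitiveReflexiveClosure{\moinsfine}=\moinsfine$. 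There is no deep obstacle here; the only step demanding care is the bookkeeping of afterset versus foreset under complementation and converse --- explicitly, the equivalence $\Complementary{\ClosedSet}\EDGE\subset\Complementary{\ClosedSet}\iff\EDGE\ClosedSet\subset\ClosedSet$ and the commutation $\npTransitiveClosure{\Converse{\EDGE}}=\Converse{\TransitiveClosure{\EDGE}}$ --- so I would record these two small facts before invoking them.
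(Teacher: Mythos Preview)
Your proof is correct and follows essentially the same approach as the paper's: both check the Alexandrov axioms via the elementary afterset properties, establish the open/closed equivalence chains by induction plus the complementation trick $\Complementary{\ClosedSet}\EDGE\subset\Complementary{\ClosedSet}\iff\EDGE\ClosedSet\subset\ClosedSet$, and identify the closure as $\TransitiveReflexiveClosure{\EDGE}\Bgent$ by the smallest-closed-set argument. Your organization is slightly cleaner (abstracting the equivalence chain once for an arbitrary relation and then specializing to~$\EDGE$ and~$\Converse{\EDGE}$), and you explicitly derive the specialization-preorder identity $\moinsfine_{\TopologyLower{\EDGE}}=\TransitiveReflexiveClosure{\EDGE}$, which the paper's proof leaves implicit.
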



\begin{proof}
  We prove that the set~\( \TopologyLower{\EDGE} \) in~\eqref{eq:TopologyLowerEDGE}
  contains both \( \emptyset, \VERTEX \) and is stable under 
  the union and intersection operations, be they finite or infinite, 
  which is what is required for an Alexandrov topology.
  Indeed, both \( \emptyset, \VERTEX \in \TopologyLower{\EDGE} \) as 
  \( \emptyset\EDGE=\emptyset\) and
  \( \VERTEX\EDGE \subset \VERTEX \).
  Let \( \sequence{\OpenSet_\scenario}{\scenario\in\SCENARIO} \) be a family in~\( \TopologyLower{\EDGE} \), that is,
  \( \OpenSet_\scenario\EDGE \subset \OpenSet_\scenario \) for all
  \( \scenario\in\SCENARIO \).
  We deduce that \( \np{\cup_{\scenario\in\SCENARIO}\OpenSet_\scenario}\EDGE
  = \cup_{\scenario\in\SCENARIO} \OpenSet_\scenario\EDGE
  \subset \cup_{\scenario\in\SCENARIO}\OpenSet_\scenario \), hence stability by
  union,
  and also that \( \np{\cap_{\scenario\in\SCENARIO}\OpenSet_\scenario}\EDGE
  \subset \cap_{\scenario\in\SCENARIO} \OpenSet_\scenario\EDGE
  \subset \cap_{\scenario\in\SCENARIO}\OpenSet_\scenario \), hence stability by intersection.

  We establish the useful equivalences:
  \begin{subequations}
    \begin{align*}
      \OpenSet\TransitiveReflexiveClosure{\EDGE}
      = \OpenSet
      & \iff
        \OpenSet\TransitiveReflexiveClosure{\EDGE}
        \subset \OpenSet
        \tag{because \( \OpenSet \subset
        \OpenSet\TransitiveReflexiveClosure{\EDGE} \) since
        \( \Delta \subset \TransitiveReflexiveClosure{\EDGE}=
        \TransitiveClosure{\EDGE}\cup\Delta \)}      
      \\
      & \iff      
        \OpenSet\TransitiveClosure{\EDGE}
        \subset \OpenSet
        \tag{because \( \TransitiveReflexiveClosure{\EDGE}=
        \TransitiveClosure{\EDGE}\cup\Delta \) }      
      \\
      & \iff
        \OpenSet\EDGE \subset \OpenSet
        \tag{because \( \TransitiveClosure{\EDGE} = \cup_{k=1}^{\infty}\EDGE^{k}
        \)
        and then by induction}      
      \\
      & \iff
        \EDGE \Complementary{\OpenSet} \subset \Complementary{\OpenSet}
        \intertext{indeed, suppose by contradiction that \( \OpenSet\EDGE \subset
        \OpenSet \) but that there exists \( \agent \in \EDGE \Complementary{\OpenSet}
        \) such that \( \agent \not\in  \Complementary{\OpenSet} \), that is,
        \( \agent \in  \OpenSet \);
        as a consequence, there exists \( \bgent \in \Complementary{\OpenSet} \)
        such that \( \agent \EDGE\bgent \), hence that \( \bgent \in \agent \EDGE\);
        now, as  \( \agent \in \OpenSet \), we get that \( \bgent \in  \agent
        \EDGE \subset \OpenSet\EDGE \subset\OpenSet\) by assumption; but this
        contradicts that \( \bgent \in \Complementary{\OpenSet} \); the reverse
        implication is proved in the same way ; the rest of the equivalences below
        are proved as above}
      & \iff
        \TransitiveClosure{\EDGE}\Complementary{\OpenSet} \subset \Complementary{\OpenSet}
      \\
      &     \iff
        \TransitiveReflexiveClosure{\EDGE}\Complementary{\OpenSet} \subset \Complementary{\OpenSet}
      \\
      & \iff
        \TransitiveReflexiveClosure{\EDGE}\Complementary{\OpenSet} = \Complementary{\OpenSet} 
        \eqfinp 
    \end{align*}
  \end{subequations}
  We deduce that~\eqref{eq:TopologyLowerEDGE_OpenSet} holds true,
  hence also that \( \TopologyLower{\EDGE}=
  \TopologyLower{\TransitiveClosure{\EDGE}}
  = \TopologyLower{\TransitiveReflexiveClosure{\EDGE}} \) 
  by~\eqref{eq:TopologyLowerEDGE},
  and that~\eqref{eq:TopologyLowerEDGE_ClosedSet} holds true,
  hence also that \( \TopologyLower{\EDGE}'= 
  \TopologyLower{\Converse{\EDGE}}
        = \TopologyLower{\npTransitiveClosure{\Converse{\EDGE}}}
        = \TopologyLower{\npTransitiveReflexiveClosure{\Converse{\EDGE}}} \) 
  by~\eqref{eq:TopologyLowerEDGE} and by~\eqref{eq:Topology'}.
  
  Finally, we consider a subset~\( \Bgent \subset \VERTEX\) and we characterize 
  its topological closure~\( \TopologicalClosure{\EDGE}{\Bgent} \), the smallest closed subset 
  that contains~\( \Bgent \). 
  On the one hand, we have that \( \Bgent \subset \TransitiveReflexiveClosure{\EDGE}\Bgent \) 
  since
  \( \TransitiveReflexiveClosure{\EDGE} =
  \TransitiveClosure{\EDGE}\cup\Delta \).
  On the other hand, the set \( \TransitiveReflexiveClosure{\EDGE}\Bgent \) 
  is closed since 
  \( \TransitiveReflexiveClosure{\EDGE}\np{\TransitiveReflexiveClosure{\EDGE}\Bgent}=
  \np{\TransitiveReflexiveClosure{\EDGE}}^2\Bgent
  = \TransitiveReflexiveClosure{\EDGE}\Bgent\), because the relation~\( \TransitiveReflexiveClosure{\EDGE} \) 
  is transitive.
  By definition of the topological closure~\( \TopologicalClosure{\EDGE}{\Bgent} \), 
  we deduce that \( \TopologicalClosure{\EDGE}{\Bgent} \subset \TransitiveReflexiveClosure{\EDGE}\Bgent \).
  Now, let \( \Cgent \subset \VERTEX \) be a closed subset such that 
  \( \Bgent \subset \Cgent \). We necessarily have that 
  \( \TransitiveReflexiveClosure{\EDGE}\Bgent \subset 
  \TransitiveReflexiveClosure{\EDGE}\Cgent =\Cgent\), where the last
  equality is 
by~\eqref{eq:TopologyLowerEDGE_ClosedSet} as $\Cgent$ is closed.
  
  As a consequence, the topological closure~\( \TopologicalClosure{\EDGE}{\Bgent}
  \) will always contain the closed set~\(
  \TransitiveReflexiveClosure{\EDGE}\Cgent \), from which we get that 
  \( \TransitiveReflexiveClosure{\EDGE}\Cgent \subset
  \TopologicalClosure{\EDGE}{\Bgent} \).
  We conclude that \( \TopologicalClosure{\EDGE}{\Bgent} = \TransitiveReflexiveClosure{\EDGE}\Bgent \).
  
  \medskip
  
  This ends the proof.
\end{proof}

\section{Equivalence between d-separation and t-separation}
\label{Equivalence_between_d-separation_and_t-separation}

In~\S\ref{d-_and_t-separation_between_vertices},
we recall the (extended) definition of d-separation,
then introduce a suitable topology on the set of vertices, 
and define a new notion of conditional topological separation (t-separation).
Then, we show that d-separation and t-separation between
vertices (and between subsets of vertices) are equivalent.
In~\S\ref{Characterization_of_t-separation_between_subsets},
we put forward a practical characterization of t-separation between subsets of
vertices.

\subsection{d- and t-separation between vertices}
\label{d-_and_t-separation_between_vertices}

We first recall the (extended) definition of d-separation, 
second define a new notion of conditional topological separation (t-separation)
and third prove their equivalence. 

\subsubsection{d-separation between vertices}
In the companion paper~\cite{Chancelier-De-Lara-Heymann-2021}
we generalize Pearl's d-separation beyond acyclic graphs as follows.

\begin{definition}(\cite[Definition 3]{Chancelier-De-Lara-Heymann-2021})
  \label{de:all_the_relations}
  Let $(\VERTEX,\EDGE)$ be a graph, that is,
$\VERTEX$ is a set and \( \EDGE \subset \VERTEX\times\VERTEX \),
  and let $\AgentSubsetW\subset\VERTEX$ be a subset of vertices.
  We define the \emph{conditional parental relation}~\( \ParentalPrecedence \)
  as
  \begin{subequations}
    \begin{align}
      \ParentalPrecedence 
      &= \Delta_{\Complementary{\AgentSubsetW}}\Precedence
        \mtext{ \qquad that is, }
        \bgent\ParentalPrecedence\cgent \iff
        \bgent\in\Complementary{\AgentSubsetW} \mtext{ and }
        \bgent\Precedence\cgent \qquad \bp{\forall \bgent,\cgent \in \AGENT }
        \eqfinv
        \label{eq:conditional_parental_relation}
        \intertext{the \emph{conditional ascendent relation}
        \( \ConditionalAscendent \) as }
        \ConditionalAscendent &=
                                \ConditionalDown = \Precedence \TransitiveReflexiveClosureParentalPrecedence
                                \mtext{ where } \TransitiveReflexiveClosureParentalPrecedence
                                = \npTransitiveReflexiveClosure{\ParentalPrecedence}
                                \label{eq:conditional_ascendent_relation}
                                \intertext{which relates a descendent with
                                an ascendent by means of 
                                elements in~$\Complementary{\AgentSubsetW}$. 
                                We define their converses~\( \ConverseParentalPrecedence \) and
                                \( \ConverseConditionalAscendent \) as }
                                \ConverseParentalPrecedence
      &= \npConverse{\ParentalPrecedence}
        = \Converse{\Precedence} \Delta_{\Complementary{\AgentSubsetW}}
        \eqfinv
        \label{eq:converse_conditional_parental_relation}
      \\
      \ConverseConditionalAscendent
      &= \Converse{\bp{\ConditionalAscendent}}
        = \ConditionalUp
        = \TransitiveReflexiveClosureConverseParentalPrecedence \Converse{\Precedence}
                                  \mtext{ where }  \TransitiveReflexiveClosureConverseParentalPrecedence
                                = \npTransitiveReflexiveClosure{\ConverseParentalPrecedence}
      \eqfinp
        \label{eq:converse_conditional_ascendent_relation}
        \intertext{With these elementary binary relations,
        we define the \emph{conditional common cause relation}~$\ConditionalCommonCause$ 
        as the symmetric relation}
        \ConditionalCommonCause 
                              &=
                                \ConverseConditionalAscendent \Delta_{\Complementary\AgentSubsetW}
                                \ConditionalAscendent
                                = \TransitiveClosureConverseParentalPrecedence \TransitiveClosureParentalPrecedence
                                \label{eq:common_cause}
                                \eqfinv
      \intertext{the \emph{conditional cousinhood relation}~$\Cousinhood$
      as the partial equivalence relation}
      \Cousinhood
      &=
        \bpTransitiveClosure{\Delta_{\AgentSubsetW} \ConditionalActiveTwo
        \Delta_{\AgentSubsetW} }
        \cup
        \Delta_{\AgentSubsetW}
        \eqfinv
        \label{eq:Cousinhood}
      \intertext{
      and the \emph{conditional active relation}~$\ConditionalActive$ 
      as the symmetric relation}
      \ConditionalActiveNew
      &= \Delta \cup 
        \ConditionalAscendent \cup \ConverseConditionalAscendent \cup \ConditionalActiveTwo
        \cup
        \bp{\ConditionalAscendent \cup \ConditionalActiveTwo}
        \Cousinhood
        \bp{\ConverseConditionalAscendent \cup \ConverseConditionalActiveTwo}
        \eqfinp
        \label{eq:conditional_active_relation}
    \end{align}
  \end{subequations}
\end{definition}

With the conditional active relation~$\ConditionalActive$, we can now define
the notion of d-separation between vertices
(which can readily be extended to d-separation between subsets of vertices).

\begin{definition}(d-separation between vertices,
  \cite[Definition 2]{Chancelier-De-Lara-Heymann-2021})
  \label{de:d-separated}
  Let $(\VERTEX,\EDGE)$ be a graph, that is,
$\VERTEX$ is a set and \( \EDGE \subset \VERTEX\times\VERTEX \),
  and let $\AgentSubsetW\subset\VERTEX$ be a subset of vertices.
  Let $\bgent$, $\cgent\in \AGENT$ be two vertices.
  We denote
  \begin{equation}
    \bgent \GenericSeparation{d} \cgent \mid \AgentSubsetW
    \iff 
    \neg \np{ \bgent\ConditionalActiveNew\cgent }
    \eqfinv
    \label{eq:d-separated}
  \end{equation}
  and we say that the vertices $\bgent$ and $\cgent$ are 
  \emph{d-separated} (\wrt\ $\AgentSubsetW$). 
\end{definition}

\subsubsection{t-separation between vertices}

We introduce a suitable topology on the set of vertices, 
and we define a new notion of conditional topological separation.

Let \( \npOrientedGraph \) be a graph, 
$\AgentSubsetW\subset\VERTEX$ be a subset of vertices,
and \( \ParentalPrecedence \) in~\eqref{eq:conditional_parental_relation}
be the corresponding conditional parental relation.
To alleviate the notation, in the Alexandrov topology
\( \TopologyLower{\ParentalPrecedence} \) in~\eqref{eq:TopologyLowerEDGE},
we use the following.
For any subset~\( \Bgent \subset \AGENT\),
the topological closure is denoted\footnote{
    Instead of~\( \TopologicalClosure{\ParentalPrecedence}{\Bgent} \)
or even of~\( \TopologicalClosure{\TopologyLower{\ParentalPrecedence}}{\Bgent} \).}
by~\( \TopologicalClosure{\AgentSubsetW}{\Bgent} \),
and the downset is denoted\footnote{%
  Instead of~\( \downarrow_{\TopologyLower{\ParentalPrecedence}} \).}
by~\( \downarrow_{\AgentSubsetW}\!\Bgent \).
By~\eqref{eq:TopologicalClosureAlexandrovTopology=LowerSet}
and~\eqref{eq:TopologicalClosureEDGE}, we get that   
\begin{equation}
  \TopologicalClosure{\AgentSubsetW}{\Bgent}
  = \ConditionalAncestor\Bgent
   = \, \downarrow_\AgentSubsetW \! \Bgent
  \eqfinp
  \label{eq:conditional_topology}
\end{equation}
Notice that the subset~\( \AgentSubsetW \) is \(
\TopologyLower{\ParentalPrecedence} \)-open, that is,
\( \AgentSubsetW \in \TopologyLower{\ParentalPrecedence} \).
Indeed, the complementary set~\( \Complementary{\AgentSubsetW} \) is closed
as it satisfies 
\( \ConditionalAncestor\Complementary{\AgentSubsetW} =
\npTransitiveClosure{\ParentalPrecedence}\Complementary{\AgentSubsetW}
\cup \Complementary{\AgentSubsetW} \subset \Complementary{\AgentSubsetW} \), as 
\( \ParentalPrecedence\, \AGENT \subset \Complementary{\AgentSubsetW} \)
because \( \ParentalPrecedence =
\Delta_{\Complementary{\AgentSubsetW}}\Precedence \) by~\eqref{eq:conditional_parental_relation}
and by definition of the subdiagonal relation~$\Delta_{\Complementary{\AgentSubsetW}}$.

With the conditional ascendent relation~\( \ConditionalAscendent \),
the conditional common cause relation~\( \ConverseConditionalActiveTwo \)
the conditional cousinhood relation~$\Cousinhood$
and the \( \TopologyLower{\ParentalPrecedence} \)-topological closure,   
we can now define the notion of t-separation between vertices
(which can readily be extended to t-separation between subsets of vertices).

\begin{definition}[Conditional topological separation between vertices, t-separation]
  \label{de:conditionally_topologically_separated}
  Let \( \npOrientedGraph \) be a graph, 
  and $\AgentSubsetW\subset\VERTEX$ be a subset of vertices.
  We set
  \begin{equation}
    \TopologicalRelation
    =\Delta \cup \Cousinhood
    \bp{\ConverseConditionalAscendent \cup \ConverseConditionalActiveTwo}
    \eqfinp
    \label{eq:TopologicalRelation}
  \end{equation}
  Let $\bgent$, $\cgent\in \AGENT$ be two vertices.
  We denote
  \begin{equation}
    \bgent \Tsep \cgent \mid \AgentSubsetW
    \iff 
    \TopologicalClosure{\AgentSubsetW}{\TopologicalRelation\bgent}
    \cap
    \TopologicalClosure{\AgentSubsetW}{\TopologicalRelation\cgent}
    = \emptyset
    \eqfinv
    \label{eq:conditionally_topologically_separated}
  \end{equation}
  and we say that the vertices $\bgent$ and $\cgent$ are 
  \emph{conditionally topologically separated} (\wrt\ $\AgentSubsetW$) or,
  shortly, \emph{t-separated}. 
\end{definition}

With the above definitions, we now show that the notions of d- and t-separation
are equivalent on the complementary set~$\Complementary{\AgentSubsetW}$.

\begin{theorem}
  \label{th:ConditionalDirectionalSeparation_IFF_topology}
  Let $(\VERTEX,\EDGE)$ be a graph, that is,
$\VERTEX$ is a set and \( \EDGE \subset \VERTEX\times\VERTEX \),
and let $\AgentSubsetW\subset\VERTEX$ be a subset of vertices.
%
   We have the equivalence
  \begin{equation}
    \bgent \GenericSeparation{t} \cgent \mid \AgentSubsetW
 \iff
    \bgent \GenericSeparation{d} \cgent \mid \AgentSubsetW
    \qquad \bp{ \forall \bgent,\cgent \in \Complementary{\AgentSubsetW} }
    \eqfinp
    \label{eq:ConditionalDirectionalSeparation_IFF_topology}      
  \end{equation}
\end{theorem}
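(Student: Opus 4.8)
The plan is to rewrite both sides of~\eqref{eq:ConditionalDirectionalSeparation_IFF_topology} as non-membership of the pair $\np{\bgent,\cgent}$ in a single binary relation. For d-separation this is already the definition~\eqref{eq:d-separated}: $\bgent \GenericSeparation{d} \cgent \mid \AgentSubsetW$ reads $\neg\bp{\bgent\,\ConditionalActiveNew\,\cgent}$. For t-separation, we use~\eqref{eq:conditional_topology} to write $\TopologicalClosure{\AgentSubsetW}{\TopologicalRelation\bgent}=\ConditionalAncestor\np{\TopologicalRelation\bgent}=\np{\ConditionalAncestor\TopologicalRelation}\bgent$, the foreset of the composed relation $\ConditionalAncestor\TopologicalRelation$ at $\bgent$ (and likewise at $\cgent$), and then the elementary observation that two foresets $\relation_1\bgent$ and $\relation_2\cgent$ are disjoint if and only if $\neg\bp{\bgent\,\Converse{\relation_1}\,\relation_2\,\cgent}$. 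By~\eqref{eq:conditionally_topologically_separated}, this turns $\bgent \Tsep \cgent \mid \AgentSubsetW$ into $\neg\bp{\bgent\,S\,\cgent}$ with
\[
  S = \Converse{\np{\ConditionalAncestor\TopologicalRelation}}\,\ConditionalAncestor\,\TopologicalRelation = \Converse{\TopologicalRelation}\,\ConverseConditionalAncestor\,\ConditionalAncestor\,\TopologicalRelation ,
\]
using $\Converse{\ConditionalAncestor}=\ConverseConditionalAncestor$ and anti-distributivity of converse over composition. So it suffices to prove $\Delta_{\Complementary{\AgentSubsetW}}\,S\,\Delta_{\Complementary{\AgentSubsetW}}=\Delta_{\Complementary{\AgentSubsetW}}\,\ConditionalActiveNew\,\Delta_{\Complementary{\AgentSubsetW}}$, the restrictions by $\Delta_{\Complementary{\AgentSubsetW}}$ encoding the hypothesis $\bgent,\cgent\in\Complementary{\AgentSubsetW}$, which is exactly where that hypothesis enters (on all of $\VERTEX$ the two relations genuinely differ).

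\textbf{Unfolding $S$.} Next we expand the factors around $\ConverseConditionalAncestor\,\ConditionalAncestor$. Since $\ConditionalAncestor=\Delta\cup\TransitiveClosureParentalPrecedence$, $\ConverseConditionalAncestor=\Delta\cup\TransitiveClosureConverseParentalPrecedence$ and, by~\eqref{eq:common_cause}, $\ConditionalCommonCause=\TransitiveClosureConverseParentalPrecedence\,\TransitiveClosureParentalPrecedence$, one has $\ConverseConditionalAncestor\,\ConditionalAncestor=\Delta\cup\TransitiveClosureParentalPrecedence\cup\TransitiveClosureConverseParentalPrecedence\cup\ConditionalCommonCause$. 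From~\eqref{eq:TopologicalRelation}, using symmetry of $\Cousinhood$ and of $\ConditionalCommonCause$ and that $\ConverseConditionalActiveTwo=\ConditionalCommonCause$, one gets $\Converse{\TopologicalRelation}=\Delta\cup\np{\ConditionalAscendent\cup\ConditionalCommonCause}\Cousinhood$. Distributing, $S$ becomes a union of sixteen terms $F_1\,M\,F_3$ with $F_1\in\{\Delta,\,\np{\ConditionalAscendent\cup\ConditionalCommonCause}\Cousinhood\}$, $M\in\{\Delta,\,\TransitiveClosureParentalPrecedence,\,\TransitiveClosureConverseParentalPrecedence,\,\ConditionalCommonCause\}$ and $F_3\in\{\Delta,\,\Cousinhood\np{\ConverseConditionalAscendent\cup\ConditionalCommonCause}\}$.

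\textbf{Collapsing the sixteen terms on $\Complementary{\AgentSubsetW}$.} One inclusion is easy: $\Delta$, $\TransitiveClosureParentalPrecedence$, $\TransitiveClosureConverseParentalPrecedence$, $\ConditionalCommonCause$ and $\np{\ConditionalAscendent\cup\ConditionalCommonCause}\Cousinhood\np{\ConverseConditionalAscendent\cup\ConditionalCommonCause}$ all appear among the terms of $S$ (the last via $\Cousinhood\Cousinhood=\Cousinhood$), and since $\Delta_{\Complementary{\AgentSubsetW}}\ConditionalAscendent=\TransitiveClosureParentalPrecedence$ and $\ConverseConditionalAscendent\Delta_{\Complementary{\AgentSubsetW}}=\TransitiveClosureConverseParentalPrecedence$, the remaining summands $\ConditionalAscendent$, $\ConverseConditionalAscendent$ of~\eqref{eq:conditional_active_relation} are caught once both sides are restricted by $\Delta_{\Complementary{\AgentSubsetW}}$; hence $\Delta_{\Complementary{\AgentSubsetW}}\,\ConditionalActiveNew\,\Delta_{\Complementary{\AgentSubsetW}}\subset\Delta_{\Complementary{\AgentSubsetW}}\,S\,\Delta_{\Complementary{\AgentSubsetW}}$. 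For the reverse inclusion we treat the sixteen terms $F_1 M F_3$ one at a time: pre- and post-composed with $\Delta_{\Complementary{\AgentSubsetW}}$, each is either empty or contained in a summand of $\ConditionalActiveNew$. The tools are a short list of absorption identities: $\Cousinhood=\Delta_{\AgentSubsetW}\Cousinhood\Delta_{\AgentSubsetW}$ together with $\Delta_{\AgentSubsetW}\Delta_{\Complementary{\AgentSubsetW}}=\emptyset$ kills every term in which a $\Cousinhood$ abuts an outer $\Delta_{\Complementary{\AgentSubsetW}}$; $\Delta_{\AgentSubsetW}\TransitiveClosureParentalPrecedence=\emptyset=\TransitiveClosureConverseParentalPrecedence\Delta_{\AgentSubsetW}$ kills the terms in which a directed segment would have to start or end inside $\AgentSubsetW$, whereas $\Delta_{\Complementary{\AgentSubsetW}}\TransitiveClosureParentalPrecedence=\TransitiveClosureParentalPrecedence$ and $\TransitiveClosureConverseParentalPrecedence\Delta_{\Complementary{\AgentSubsetW}}=\TransitiveClosureConverseParentalPrecedence$ leave the survivors unchanged; $\Delta_{\AgentSubsetW}\ConditionalCommonCause\Delta_{\AgentSubsetW}\subset\Cousinhood$ (from~\eqref{eq:Cousinhood}) and $\Cousinhood\Cousinhood=\Cousinhood$ merge a central common-cause step into an adjacent $\Cousinhood$; and $\TransitiveClosureParentalPrecedence\subset\ConditionalAscendent$, $\TransitiveClosureConverseParentalPrecedence\subset\ConverseConditionalAscendent$, $\ConditionalCommonCause\subset\ConditionalAscendent\cup\ConditionalCommonCause$ repackage the results as summands of~\eqref{eq:conditional_active_relation}. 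Combining the two inclusions gives $\Delta_{\Complementary{\AgentSubsetW}}\,S\,\Delta_{\Complementary{\AgentSubsetW}}=\Delta_{\Complementary{\AgentSubsetW}}\,\ConditionalActiveNew\,\Delta_{\Complementary{\AgentSubsetW}}$, hence~\eqref{eq:ConditionalDirectionalSeparation_IFF_topology}.

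\textbf{Where the difficulty is.} Once the reduction of the first step is in place, nothing conceptually deep remains; the labour is the disciplined case analysis of the sixteen terms, the genuinely delicate point being to verify that a block of the form $\Cousinhood\,M\,\Cousinhood$ (with $M$ a directed segment or a common-cause step) cannot escape $\np{\ConditionalAscendent\cup\ConditionalCommonCause}\Cousinhood\np{\ConverseConditionalAscendent\cup\ConditionalCommonCause}$. The absorption identities listed above are the elementary but fiddly facts of the kind relegated to Appendix~\ref{Additional_Lemmas}, and threading them through the case analysis is the one step that needs real care.
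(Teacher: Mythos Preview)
Your proposal is correct and its overall architecture coincides with the paper's: both rewrite t-separation as non-membership in $\Converse{\TopologicalRelation}\,\ConverseConditionalAncestor\,\ConditionalAncestor\,\TopologicalRelation$ and then reduce the theorem to the relational identity $\Delta_{\Complementary{\AgentSubsetW}}\,\Converse{\TopologicalRelation}\,\ConverseConditionalAncestor\,\ConditionalAncestor\,\TopologicalRelation\,\Delta_{\Complementary{\AgentSubsetW}}=\Delta_{\Complementary{\AgentSubsetW}}\,\ConditionalActiveNew\,\Delta_{\Complementary{\AgentSubsetW}}$, which is exactly the paper's Lemma~\ref{le:equation_for_topological_equivalence}. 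The difference lies in how that identity is established. The paper expands $\TopologicalRelation$ and $\Converse{\TopologicalRelation}$ only, obtaining four terms, and then invokes the companion-paper identities~\eqref{eq:AA}--\eqref{eq:DeltaAAgamma}, the crucial one being $\Cousinhood\,\ConverseConditionalAncestor\,\ConditionalAncestor\,\Cousinhood=\Cousinhood$. You go one level finer, also expanding $\ConverseConditionalAncestor\,\ConditionalAncestor=\Delta\cup\TransitiveClosureParentalPrecedence\cup\TransitiveClosureConverseParentalPrecedence\cup\ConditionalCommonCause$ into four pieces, and handle the resulting sixteen terms with genuinely elementary absorption facts ($\Cousinhood=\Delta_{\AgentSubsetW}\Cousinhood\Delta_{\AgentSubsetW}$, $\Delta_{\AgentSubsetW}\TransitiveClosureParentalPrecedence=\emptyset$, $\Delta_{\AgentSubsetW}\ConditionalCommonCause\Delta_{\AgentSubsetW}\subset\Cousinhood$, etc.). In effect your case analysis reproves~\eqref{eq:gammaAAgamma},~\eqref{eq:gammaAADelta},~\eqref{eq:DeltaAAgamma} inline from first principles rather than quoting them. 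What your route buys is self-containment (no reliance on the companion paper's lemmas); what the paper's route buys is a shorter, more modular argument once those lemmas are in hand.
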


\begin{proof}
  To prove~\eqref{eq:ConditionalDirectionalSeparation_IFF_topology},
  it is equivalent, by Definition~\ref{de:d-separated}, to prove the equivalence 
    \begin{equation}
    \bgent\Tsep\cgent
    \; \mid \AgentSubsetW  \iff
    \neg \np{ \bgent\ConditionalActiveNew\cgent }
    \qquad \bp{ \forall \bgent,\cgent \in \Complementary{\AgentSubsetW} }
    \eqfinp
    \label{eq:ConditionalDirectionalSeparation_IFF_topology_bis}      
  \end{equation}
For this purpose, we set
  \begin{equation}
    \ConverseTopologicalRelation
    =\Delta \cup
    \bp{\ConditionalAscendent \cup \ConditionalActiveTwo}\Cousinhood
    = \npConverse{\TopologicalRelation}
    \eqfinp 
    \label{eq:ConverseTopologicalRelation}
  \end{equation}
  Let \( \bgent, \cgent \in\AGENT \) be two vertices such that
  \( \bgent, \cgent \in \Complementary{\AgentSubsetW} \)
  . 
  We have 
  \begin{align*}
    \TopologicalClosure{\AgentSubsetW}{\TopologicalRelation\bgent}
    \cap
    \TopologicalClosure{\AgentSubsetW}{\TopologicalRelation\cgent}
    \neq \emptyset
    &  \iff
     \bp{ \ConditionalAncestor\TopologicalRelation\bgent }
      \cap
      \bp{ \ConditionalAncestor\TopologicalRelation\cgent }
      \neq \emptyset
      \intertext{because the topological closure of a subset~$\Bgent \subset \AGENT$ is given by
      \( \TopologicalClosure{\AgentSubsetW}{\Bgent}= \ConditionalAncestor\Bgent
      \) by~\eqref{eq:conditional_topology} }
    &  \iff
    \bp{ \bgent\ConverseTopologicalRelation\npConverse{\ConditionalAncestor} }
      \cap
   \bp{ \ConditionalAncestor\TopologicalRelation\cgent }
      \neq \emptyset
      \tag{by definition of the converse relation}
    \\
    &  \iff
      \bgent \ConverseTopologicalRelation \ConverseConditionalAncestor
      \ConditionalAncestor\TopologicalRelation\cgent
      \intertext{by definition of relation composition and by
      \( \npConverse{\ConditionalAncestor} = \ConverseConditionalAncestor \)}      
    &  \iff
      \bgent\Delta_{\Complementary{\AgentSubsetW}}\ConverseTopologicalRelation\ConverseConditionalAncestor
      \ConditionalAncestor\TopologicalRelation\Delta_{\Complementary{\AgentSubsetW}}\cgent
      \tag{ because  \( \bgent, \cgent \in \Complementary{\AgentSubsetW} \)
      by assumption
      }      
    \\
    &  \iff
      \bgent\Delta_{\Complementary{\AgentSubsetW}}
      \ConditionalActive
      \Delta_{\Complementary{\AgentSubsetW}}\cgent
      \tag{by~\eqref{eq:equation_for_topological_equivalence} in Appendix~\ref{Additional_Lemmas}}
    \\
    &  \iff
      \bgent\ConditionalActive\cgent
      \eqfinp       
      \tag{ because  \( \bgent, \cgent \in \Complementary{\AgentSubsetW} \)
      by assumption
      }      
  \end{align*}
  Thus, by taking the negation, we have
  obtained~\eqref{eq:ConditionalDirectionalSeparation_IFF_topology_bis},
  hence~\eqref{eq:ConditionalDirectionalSeparation_IFF_topology}
  by Definition~\ref{de:d-separated}. 
\end{proof}

\subsection{Characterization of t-separation between subsets}
\label{Characterization_of_t-separation_between_subsets}

We put forward a practical characterization of t-separation between subsets of
vertices. 
For this purpose, we introduce the notion of splitting, which slightly
generalizes the notion of partition. 

For any subset \( \Bgent \subset \AGENT\) and for any family
\( \sequence{\Bgent_\scenario}{\scenario\in\SCENARIO} \) of subsets
\( \Bgent_\scenario \subset \AGENT\),
we write \( \sqcup_{\scenario\in\SCENARIO}\Bgent_\scenario = \Bgent \)
when we have, on the one hand, 
\( \bp{ \scenario \neq \scenario' \implies \Bgent_\scenario \cap
  \Bgent_{\scenario'}  = \emptyset } \)
and, on the other hand, \( \bigcup_{\scenario\in\SCENARIO}\Bgent_\scenario = \Bgent \). 
We will also say that 
\( \sequence{\Bgent_\scenario}{\scenario\in\SCENARIO} \) 
is a \emph{splitting} of~\( \Bgent \) (we do not use the vocable of partition because it is not required that
the subsets~\( \Bgent_\scenario \) be nonempty).

\begin{proposition}[Topological separation between subsets]
  Let \( \npOrientedGraph \) be a graph, 
  and $\AgentSubsetW\subset\VERTEX$ be a subset of vertices.
  Let $\Bgent, \Cgent \subset \AGENT$ be two subsets of vertices such that 
  \begin{equation}
    \Bgent \cap \Cgent = \emptyset \eqsepv
    \Bgent \cap \AgentSubsetW=\emptyset \eqsepv
    \Cgent \cap \AgentSubsetW=\emptyset
    \eqfinp
    \label{eq:empty_intersection_Bgent_Cgent_AgentSubsetW}
  \end{equation}
  The following statements are equivalent:
  \begin{enumerate}
  \item
    \label{it:topologically_separated_one}
    For any \( \bgent\in\Bgent \), \( \cgent\in\Cgent \), we have that
    $\bgent \Tsep \cgent \mid \AgentSubsetW$, as in Definition~\ref{de:conditionally_topologically_separated},
  \item
    \label{it:topologically_separated_two}    
    There exists a splitting
    \( \AgentSubsetW_\Bgent , \AgentSubsetW_\Cgent \) of~$\AgentSubsetW$ 
    such that 
    \begin{equation}
      \AgentSubsetW_\Bgent \sqcup \AgentSubsetW_\Cgent = \AgentSubsetW 
      \text{ and } 
      \TopologicalClosure{\AgentSubsetW}{\Bgent \cup \AgentSubsetW_\Bgent}
      \cap 
      \TopologicalClosure{\AgentSubsetW}{\Cgent \cup \AgentSubsetW_\Cgent}
      = \emptyset 
      \eqfinp
      \label{eq:topologically_separated_TopologicalClosure}      
    \end{equation}
  \end{enumerate}
  \label{pr:topologically_separated}
\end{proposition}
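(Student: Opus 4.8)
The plan is to prove the two implications separately, using the topological machinery established in Proposition~\ref{pr:TopologyLowerEDGE} and the pointwise equivalence already captured in the proof of Theorem~\ref{th:ConditionalDirectionalSeparation_IFF_topology}. Throughout, I would exploit the key algebraic fact that, in an Alexandrov topology, the closure distributes over arbitrary unions, i.e. $\TopologicalClosure{\AgentSubsetW}{\bigcup_\scenario \Bgent_\scenario} = \bigcup_\scenario \TopologicalClosure{\AgentSubsetW}{\Bgent_\scenario}$ from~\eqref{eq:TopologicalClosureAlexandrovTopology}, together with the identification $\TopologicalClosure{\AgentSubsetW}{\Bgent} = \ConditionalAncestor\Bgent$ from~\eqref{eq:conditional_topology}.

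For the direction \ref{it:topologically_separated_two}~$\implies$~\ref{it:topologically_separated_one}: given a splitting $\AgentSubsetW_\Bgent \sqcup \AgentSubsetW_\Cgent = \AgentSubsetW$ with disjoint closures, I would take arbitrary $\bgent \in \Bgent$, $\cgent \in \Cgent$ and show $\TopologicalClosure{\AgentSubsetW}{\TopologicalRelation\bgent} \cap \TopologicalClosure{\AgentSubsetW}{\TopologicalRelation\cgent} = \emptyset$. The idea is that $\TopologicalRelation\bgent$, which by~\eqref{eq:TopologicalRelation} consists of $\bgent$ together with $\Cousinhood(\ConverseConditionalAscendent \cup \ConverseConditionalActiveTwo)$-successors of $\bgent$, reaches $\bgent$ and possibly some vertices of $\AgentSubsetW$ (via the $\Cousinhood$ factor, which lands in $\Delta_\AgentSubsetW$-related vertices). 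The vertices of $\AgentSubsetW$ so reached must lie in $\AgentSubsetW_\Bgent$: one shows $\TopologicalClosure{\AgentSubsetW}{\TopologicalRelation\bgent} \subset \TopologicalClosure{\AgentSubsetW}{\Bgent \cup \AgentSubsetW_\Bgent}$. The crux is an inclusion of the form $\ConditionalAncestor\,\TopologicalRelation\,\bgent \subset \ConditionalAncestor(\{\bgent\} \cup (\AgentSubsetW \cap \ConditionalAncestor\,\TopologicalRelation\,\bgent))$ and then using that any $w \in \AgentSubsetW \cap \ConditionalAncestor\,\TopologicalRelation\,\bgent$ must be assigned to $\AgentSubsetW_\Bgent$ — because if it were in $\AgentSubsetW_\Cgent$, then $w$ lies in both closures, contradicting~\eqref{eq:topologically_separated_TopologicalClosure}. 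So one actually needs to build the splitting \emph{from} the t-separation data in the other direction and check consistency; here it suffices to observe the containment and conclude disjointness from~\eqref{eq:topologically_separated_TopologicalClosure}.

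For the harder direction \ref{it:topologically_separated_one}~$\implies$~\ref{it:topologically_separated_two}: assume pairwise t-separation and construct the splitting. The natural candidate is to put $w \in \AgentSubsetW$ into $\AgentSubsetW_\Bgent$ if $w$ is ``closure-reachable from $\Bgent$'' in a suitable sense — e.g. $\AgentSubsetW_\Bgent = \AgentSubsetW \cap \TopologicalClosure{\AgentSubsetW}{\TopologicalRelation\Bgent}$ (or a variant using the cousinhood-saturation), $\AgentSubsetW_\Cgent = \AgentSubsetW \setminus \AgentSubsetW_\Bgent$. Then one must verify (a) disjointness is automatic by construction, (b) the union is all of $\AgentSubsetW$ by construction, and (c) the closure-disjointness in~\eqref{eq:topologically_separated_TopologicalClosure}. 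Step (c) is the main obstacle: one must show $\TopologicalClosure{\AgentSubsetW}{\Bgent \cup \AgentSubsetW_\Bgent} \cap \TopologicalClosure{\AgentSubsetW}{\Cgent \cup \AgentSubsetW_\Cgent} = \emptyset$ knowing only the pointwise statements $\TopologicalClosure{\AgentSubsetW}{\TopologicalRelation\bgent} \cap \TopologicalClosure{\AgentSubsetW}{\TopologicalRelation\cgent} = \emptyset$ for all $\bgent \in \Bgent, \cgent \in \Cgent$. Using distributivity of closure over unions reduces the left side to a union of pairwise intersections $\TopologicalClosure{\AgentSubsetW}{\{\bgent\}} \cap \TopologicalClosure{\AgentSubsetW}{\{w'\}}$ and mixed terms involving elements of $\AgentSubsetW_\Bgent$ and $\AgentSubsetW_\Cgent$; the delicate point is ruling out a ``chain'' $\bgent \to w_1 \to \cdots \to w_k \to \cgent$ that is split inconsistently, i.e. showing that the assignment rule is transitive enough that no $\AgentSubsetW$-vertex gets pulled toward both sides. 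This is where one invokes the structure of $\TopologicalRelation$ and $\Cousinhood$ — in particular that $\Cousinhood$ is a partial equivalence relation on $\AgentSubsetW$, so cousinhood-classes of $\AgentSubsetW$-vertices are entirely assigned to one side — and the fact established in the appendix-lemma~\eqref{eq:equation_for_topological_equivalence} that couples the composition $\ConverseTopologicalRelation \ConverseConditionalAncestor \ConditionalAncestor \TopologicalRelation$ back to $\ConditionalActive$. I expect the proof to require care in handling $\AgentSubsetW$-vertices that are endpoints of $\Cousinhood$ but reachable from neither $\Bgent$ nor $\Cgent$ (these go, harmlessly, into either block), and the bulk of the work is the bookkeeping showing the candidate splitting indeed witnesses~\eqref{eq:topologically_separated_TopologicalClosure}.
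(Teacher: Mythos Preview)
Your overall plan matches the paper's: both directions hinge on the Alexandrov distributivity~\eqref{eq:TopologicalClosureAlexandrovTopology}, the identification $\TopologicalClosure{\AgentSubsetW}{\TopologicalRelation\Bgent}=\TopologicalClosure{\AgentSubsetW}{\Bgent\cup\AgentSubsetW'_\Bgent}$ with $\AgentSubsetW'_\Bgent=\Cousinhood(\ConverseConditionalAscendent\cup\ConditionalActiveTwo)\Bgent$, and the fact that $\Cousinhood$ is an equivalence relation on~$\AgentSubsetW$. Your sketch for \ref{it:topologically_separated_one}$\Rightarrow$\ref{it:topologically_separated_two} is essentially the paper's: define $\AgentSubsetW'_\Bgent$ as above, get $\TopologicalClosure{\AgentSubsetW}{\Bgent\cup\AgentSubsetW'_\Bgent}\cap\TopologicalClosure{\AgentSubsetW}{\Cgent\cup\AgentSubsetW'_\Cgent}=\emptyset$ from the pointwise hypothesis, then assign the leftover $\widetilde{\AgentSubsetW}=\AgentSubsetW\setminus(\AgentSubsetW'_\Bgent\cup\AgentSubsetW'_\Cgent)$ by cousinhood classes. (The paper proves a disjunction for each class and sorts them accordingly; your remark that leftover classes ``go, harmlessly, into either block'' is also correct and actually slightly simpler.)

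There is, however, a genuine gap in your argument for \ref{it:topologically_separated_two}$\Rightarrow$\ref{it:topologically_separated_one}. You want $\TopologicalClosure{\AgentSubsetW}{\TopologicalRelation\bgent}\subset\TopologicalClosure{\AgentSubsetW}{\Bgent\cup\AgentSubsetW_\Bgent}$, which reduces to showing that every $w\in\AgentSubsetW\cap\TopologicalRelation\bgent$ lies in~$\AgentSubsetW_\Bgent$. Your justification --- ``if $w\in\AgentSubsetW_\Cgent$, then $w$ lies in both closures'' --- does not work as stated: since $\AgentSubsetW$ is $\TopologyLower{\ParentalPrecedence}$-open, one has $\TopologicalClosure{\AgentSubsetW}{\Bgent\cup\AgentSubsetW_\Bgent}\cap\AgentSubsetW=\AgentSubsetW_\Bgent$, so $w\in\AgentSubsetW_\Cgent$ is \emph{not} in $\TopologicalClosure{\AgentSubsetW}{\Bgent\cup\AgentSubsetW_\Bgent}$ and no contradiction arises directly. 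What is needed is the two-step argument the paper packages as Lemmas~\ref{lem:TopologicalAw} and~\ref{lem:cousinhood-and-closure}: first, any $w'\in(\ConverseConditionalAscendent\cup\ConditionalActiveTwo)\bgent\cap\AgentSubsetW$ satisfies $\TopologicalClosure{\AgentSubsetW}{\{w'\}}\cap\TopologicalClosure{\AgentSubsetW}{\{\bgent\}}\neq\emptyset$, which (since $\TopologicalClosure{\AgentSubsetW}{\{w'\}}\subset\TopologicalClosure{\AgentSubsetW}{\Cgent\cup\AgentSubsetW_\Cgent}$ would then meet $\TopologicalClosure{\AgentSubsetW}{\Bgent\cup\AgentSubsetW_\Bgent}$) forces $w'\in\AgentSubsetW_\Bgent$; second, if $w\Cousinhood w'$ with $w'\in\AgentSubsetW_\Bgent$, then $w\in\AgentSubsetW_\Bgent$ as well, because a splitting with disjoint closures cannot separate a $\Cousinhood$-class (Lemma~\ref{lem:cousinhood-and-closure}, second part). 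You allude to this cousinhood rigidity later in your sketch, but it is precisely here that it must be invoked; without it the containment you assert is unjustified. The paper's own route for this direction is a contradiction argument built on the same two lemmas.
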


\begin{proof}\quad 

  \noindent $\bullet$ (Item~\ref{it:topologically_separated_one} $\implies$ Item~\ref{it:topologically_separated_two}).
  We consider two subsets $\Bgent, \Cgent \subset \AGENT$ such that \eqref{eq:empty_intersection_Bgent_Cgent_AgentSubsetW}
holds true, and we prove the existence of a a splitting
  \( \AgentSubsetW_\Bgent , \AgentSubsetW_\Cgent \) of~$\AgentSubsetW$
  satisfying~\eqref{eq:empty_intersection_Bgent_Cgent_AgentSubsetW}
  in two steps. 
\medskip
  
  First, we set
  \( \AgentSubsetW'_\Bgent = \Cousinhood
  \bp{\ConverseConditionalAscendent \cup \ConverseConditionalActiveTwo}\Bgent
  \subset \AgentSubsetW \) by definition~\eqref{eq:Cousinhood} of~$\Cousinhood$ 
  and
  \( \AgentSubsetW'_\Cgent =\Cousinhood
  \bp{\ConverseConditionalAscendent \cup \ConverseConditionalActiveTwo}\Cgent
  \subset \AgentSubsetW \), and we prove that
  \begin{equation}
    \TopologicalClosure{\AgentSubsetW}{ \Bgent \cup \AgentSubsetW'_\Bgent }
    \cap
    \TopologicalClosure{\AgentSubsetW}{ \Cgent \cup \AgentSubsetW'_\Cgent }
    = \emptyset
   \label{eq:topologically_separated_proof_1}   
  \end{equation}
  We have that
  \begin{align*}
    \TopologicalClosure{\AgentSubsetW}{ \Bgent \cup \AgentSubsetW'_\Bgent }
    \cap
    \TopologicalClosure{\AgentSubsetW}{ \Cgent \cup \AgentSubsetW'_\Cgent }
    &=
      \TopologicalClosure{\AgentSubsetW}{ \Bgent \cup
      \Cousinhood
  \bp{\ConverseConditionalAscendent \cup \ConverseConditionalActiveTwo}\Bgent }
      \cap
      \TopologicalClosure{\AgentSubsetW}{ \Cgent \cup
      \bp{\ConverseConditionalAscendent \cup
      \ConverseConditionalActiveTwo}\Cgent }
     \\    
    &=
      \TopologicalClosure{\AgentSubsetW}{\TopologicalRelation\Bgent}
      \cap
      \TopologicalClosure{\AgentSubsetW}{\TopologicalRelation\Cgent}
      \tag{by definition~\eqref{eq:TopologicalRelation} of~$\TopologicalRelation$}
    \\    
    &= \bp{ \bigcup_{\bgent\in\Bgent} 
      \TopologicalClosure{\AgentSubsetW}{\TopologicalRelation\bgent} }
      \cap \bp{ \bigcup_{\cgent\in\Cgent} 
      \TopologicalClosure{\AgentSubsetW}{\TopologicalRelation\cgent} }
      \intertext{by property~\eqref{eq:TopologicalClosureAlexandrovTopology} of the
      topological closure in an Alexandrov topology}
    &=       
      \bigcup_{\bgent\in\Bgent,\cgent\in\Cgent} \bp{ 
      \underbrace{ \TopologicalClosure{\AgentSubsetW}{\TopologicalRelation\bgent}
      \cap \TopologicalClosure{\AgentSubsetW}{\TopologicalRelation\cgent} }_{=
      \emptyset} }
  \end{align*}
  by~\eqref{eq:conditionally_topologically_separated}
  as  $\bgent \Tsep \cgent \mid \AgentSubsetW$ by assumption,
  with \( \bgent\in\Bgent \subset \Complementary{\AgentSubsetW} \)
  and \( \cgent\in\Cgent \subset \Complementary{\AgentSubsetW} \)
  by~\eqref{eq:empty_intersection_Bgent_Cgent_AgentSubsetW}. 
  
Thus, we have proven~\eqref{eq:topologically_separated_proof_1}.
%
\medskip

  Second, we are now going to prove that we can enlarge the subsets~$\AgentSubsetW'_\Bgent$ and
  $\AgentSubsetW'_\Cgent$ to obtain a splitting of~$\AgentSubsetW$ satisfying
  Equation~\eqref{eq:topologically_separated_TopologicalClosure}.
  
  For this purpose, we set 
  $\widetilde{\AgentSubsetW} = \AgentSubsetW \backslash \np{\AgentSubsetW'_\Bgent
    \cup \AgentSubsetW'_\Cgent} \subset \AgentSubsetW $.
  The cousinhood relation~$\Cousinhood$ in~\eqref{eq:Cousinhood} is a 
  partial equivalence relation on~$\VERTEX$, and it is easily seen to be an equivalence relation on~$\AgentSubsetW$.
  This is why we consider the partition $\widetilde{\AgentSubsetW}= \sqcup_{i\in
    I} \widetilde{\AgentSubsetW}_i $ of $\widetilde{\AgentSubsetW}$,
  where, for each $i\in I$, the elements of 
  the subset $\widetilde{\AgentSubsetW}_{i}$ belong to the same equivalence class
  of the equivalence relation \( \Cousinhood \subset \AgentSubsetW^2 \)
  (understood as the restriction of the cousinhood relation~$\Cousinhood$ to~$\AgentSubsetW$).
  We are now going to prove that
  \begin{equation}
    \label{eq:decomposeWtilde}
\forall i\in I     \eqsepv 
    \text{  either  }\quad
    \TopologicalClosure{\AgentSubsetW}{\widetilde{\AgentSubsetW}_i} \cap \TopologicalClosure{\AgentSubsetW}{ \Bgent \cup \AgentSubsetW'_\Bgent } =
    \emptyset \quad \text{  or  }\quad
    \TopologicalClosure{\AgentSubsetW}{\widetilde{\AgentSubsetW}_i} \cap \TopologicalClosure{\AgentSubsetW}{ \Cgent \cup \AgentSubsetW'_\Cgent } =
    \emptyset
    \eqfinp
  \end{equation}
  The proof is by contradiction. 
  Let $i\in I$ be fixed and suppose that both
  \( \TopologicalClosure{\AgentSubsetW}{\widetilde{\AgentSubsetW}_i} \cap
  \TopologicalClosure{\AgentSubsetW}{ \Bgent \cup \AgentSubsetW'_\Bgent }
  \not= \emptyset \)  and  \(
  \TopologicalClosure{\AgentSubsetW}{\widetilde{\AgentSubsetW}_i} \cap
  \TopologicalClosure{\AgentSubsetW}{ \Cgent \cup \AgentSubsetW'_\Cgent }
  \not= \emptyset \).
  First, as $\TopologicalClosure{\AgentSubsetW}{\widetilde{\AgentSubsetW}_i} \cap
  \TopologicalClosure{\AgentSubsetW}{ \Bgent \cup \AgentSubsetW'_\Bgent } \not=\emptyset$,
  there would exist \( \cgent \in \TopologicalClosure{\AgentSubsetW}{\widetilde{\AgentSubsetW}_i} \cap
  \TopologicalClosure{\AgentSubsetW}{ \Bgent \cup \AgentSubsetW'_\Bgent } \),
  hence there would exist $w_{i}^{(1)} \in \widetilde{\AgentSubsetW}_i$ such that
  $\cgent \ConditionalAncestor w_{i}^{(1)}$ (as
  \( \TopologicalClosure{\AgentSubsetW}{\widetilde{\AgentSubsetW}_i}=
  \ConditionalAncestor\widetilde{\AgentSubsetW}_i \)
  by~\eqref{eq:TopologicalClosureEDGE}), 
  and there would exist   $\bgent \in \Bgent$ such that
  $\cgent\ConditionalAncestor
  \bp{ \Delta \cup \Cousinhood\np{\ConverseConditionalAscendent \cup \ConverseConditionalActiveTwo}} \bgent$ 
  (as \( \TopologicalClosure{\AgentSubsetW}{ \Bgent \cup \AgentSubsetW'_\Bgent }
  =\ConditionalAncestor\np{ \Bgent \cup \AgentSubsetW'_\Bgent }
  =  \ConditionalAncestor 
  \bp{ \Delta \cup \Cousinhood\np{\ConverseConditionalAscendent \cup \ConverseConditionalActiveTwo}} \Bgent
  \)
  by~\eqref{eq:TopologicalClosureEDGE} and by definition of~$\AgentSubsetW'_\Bgent$).
  Thus, we would have that \\
  \(     w_{i}^{(1)} \ConverseConditionalAncestor\ConditionalAncestor 
  \bp{ \Delta \cup \Cousinhood\np{\ConverseConditionalAscendent \cup \ConverseConditionalActiveTwo}}
  \bgent \).
  Second, as $\TopologicalClosure{\AgentSubsetW}{\widetilde{\AgentSubsetW}_i} \cap
  \TopologicalClosure{\AgentSubsetW}{ \Cgent \cup \AgentSubsetW'_\Cgent }
  \not=\emptyset$ and proceeding in the same way,
  there would exist $w_{i}^{(2)} \in \widetilde{\AgentSubsetW}_i$ and $\cgent \in \Cgent$ such that
  $w_{i}^{(2)} \ConverseConditionalAncestor\ConditionalAncestor 
  \bp{ \Delta \cup\Cousinhood \np{\ConverseConditionalAscendent \cup \ConverseConditionalActiveTwo}}
  \cgent$.
  Now, as $w_{i}^{(1)}$ and $w_{i}^{(2)}$ would be both in $\widetilde{\AgentSubsetW}_i$ 
  they would be in the same equivalence class for the equivalence
  relation~$\Cousinhood$, giving thus $w_{i}^{(1)}\Cousinhood
  w_{i}^{(2)}$. Finally, we would obtain that
  \begin{equation}
    \bgent\Delta_{\Complementary{\AgentSubsetW}} 
    \bp{ \Delta \cup \np{\ConditionalAscendent \cup \ConverseConditionalActiveTwo}\Cousinhood}
    \ConverseConditionalAncestor\ConditionalAncestor \Cousinhood \ConverseConditionalAncestor\ConditionalAncestor 
    \bp{ \Delta \cup\Cousinhood \np{\ConverseConditionalAscendent \cup \ConverseConditionalActiveTwo}}
    \Delta_{\Complementary{\AgentSubsetW}}\cgent
    \eqfinp
    \label{eq:gamma-rel-lambda}
  \end{equation}
  Combining~\eqref{eq:gamma-rel-lambda}, Equation~\eqref{eq:gamma-rel-lambda-1}
  and the definition~\eqref{eq:conditional_active_relation}
  of~$\ConditionalActiveNew$, we would obtain that 
  $\bgent\ConditionalActiveNew \cgent$.
  Thus, we would arrive at a contradiction as we have,
  by assumption, $\bgent \Tsep \cgent \mid \AgentSubsetW$, hence 
\( \neg \np{ \bgent\ConditionalActiveNew \cgent } \)
by~\eqref{eq:ConditionalDirectionalSeparation_IFF_topology_bis}. 

Thus, we have proven that the disjunction~\eqref{eq:decomposeWtilde} holds true,
from which we obtain a splitting $I = I_{\Bgent}\sqcup I_{\Cgent}$
and a splitting 
  $\widetilde{\AgentSubsetW}= \widetilde{\AgentSubsetW}_{\Bgent} \sqcup  \widetilde{\AgentSubsetW}_{\Cgent}$
  defined by 
  \begin{align}
    I_{\Bgent} = \bset{ i \in
    I}{\TopologicalClosure{\AgentSubsetW}{\widetilde{\AgentSubsetW}_i}
    \cap \TopologicalClosure{\AgentSubsetW}{ \Cgent \cup \AgentSubsetW'_\Cgent }
    = \emptyset}
    & \quad \text{ and  }\quad
      I_{\Cgent} = I \backslash I_{\Bgent} \eqfinv
      \nonumber  \\
    \widetilde{\AgentSubsetW}_{\Bgent}=
    \bigcup_{i\in I_{\Bgent}}
    \TopologicalClosure{\AgentSubsetW}{\widetilde{\AgentSubsetW}_i}
    & \quad \text{ and  }\quad
      \widetilde{\AgentSubsetW}_{\Cgent}=
      \bigcup_{i\in I_{\Cgent}}
      \TopologicalClosure{\AgentSubsetW}{\widetilde{\AgentSubsetW}_i}
      \eqfinv
      \nonumber
      \intertext{which, by construction, satisfies}
      \TopologicalClosure{\AgentSubsetW}{\widetilde{\AgentSubsetW}_{\Bgent}}
      \cap \TopologicalClosure{\AgentSubsetW}{ \Cgent \cup \AgentSubsetW'_\Cgent } =\emptyset
    & \quad\text{and}\quad
      \TopologicalClosure{\AgentSubsetW}{\widetilde{\AgentSubsetW}_{\Cgent}}
      \cap \TopologicalClosure{\AgentSubsetW}{ \Bgent \cup \AgentSubsetW'_\Bgent }= \emptyset
      \eqfinp 
      \label{eq:Wtildedef}     
  \end{align}
  
  Now, we define ${\AgentSubsetW}_{\Bgent}= {\AgentSubsetW}'_{\Bgent}\cup \widetilde{\AgentSubsetW}_{\Bgent}$
  and  ${\AgentSubsetW}_{\Cgent}= {\AgentSubsetW}'_{\Cgent}\cup
  \widetilde{\AgentSubsetW}_{\Cgent}$,
  and we are going to prove
  that~\eqref{eq:topologically_separated_TopologicalClosure} holds true.

  To check the second part
  of~\eqref{eq:topologically_separated_TopologicalClosure},
  we calculate  
  \begin{align*}
    \TopologicalClosure{\AgentSubsetW}{ \Bgent \cup \AgentSubsetW_\Bgent }
    \cap
    \TopologicalClosure{\AgentSubsetW}{ \Cgent \cup \AgentSubsetW_\Cgent }
    &=
      \TopologicalClosure{\AgentSubsetW}{ \Bgent \cup \AgentSubsetW'_\Bgent \cup \widetilde{\AgentSubsetW}_{\Bgent} }
      \cap
      \TopologicalClosure{\AgentSubsetW}{ \Cgent \cup \AgentSubsetW'_\Cgent\cup \widetilde{\AgentSubsetW}_{\Cgent} }
    \\
    &=
      \bp{\TopologicalClosure{\AgentSubsetW}{ \Bgent \cup \AgentSubsetW'_\Bgent} \cup
      \TopologicalClosure{\AgentSubsetW}{\widetilde{\AgentSubsetW}_{\Bgent} }}
      \cap
      \bp{
      \TopologicalClosure{\AgentSubsetW}{ \Cgent \cup \AgentSubsetW'_\Cgent}
      \cup
      \TopologicalClosure{\AgentSubsetW}{\widetilde{\AgentSubsetW}_{\Cgent} }}
      \tag{by~\eqref{eq:TopologicalClosureAlexandrovTopology}}
    \\
    &=
      \underbrace{\bp{\TopologicalClosure{\AgentSubsetW}{ \Bgent \cup \AgentSubsetW'_\Bgent}
      \cap
      \TopologicalClosure{\AgentSubsetW}{ \Cgent \cup \AgentSubsetW'_\Cgent}}
      }_{=\emptyset \text{ by~\eqref{eq:topologically_separated_proof_1} }}
      \cup
      \underbrace{ \bp{\TopologicalClosure{\AgentSubsetW}{ \Bgent \cup \AgentSubsetW'_\Bgent}
      \cap
      \TopologicalClosure{\AgentSubsetW}{\widetilde{\AgentSubsetW}_{\Cgent} }}
      }_{=\emptyset \text{ by~\eqref{eq:Wtildedef}}}
    \\
    &\hspace{1cm}
      \cup
      \underbrace{
      \bp{\TopologicalClosure{\AgentSubsetW}{\widetilde{\AgentSubsetW}_{\Bgent} }
      \cap
      \TopologicalClosure{\AgentSubsetW}{ \Cgent \cup \AgentSubsetW'_\Cgent}}
      }_{=\emptyset \text{ by~\eqref{eq:Wtildedef}}}
      \cup
      {\bp{\TopologicalClosure{\AgentSubsetW}{\widetilde{\AgentSubsetW}_{\Bgent} }
      \cap
      \TopologicalClosure{\AgentSubsetW}{\widetilde{\AgentSubsetW}_{\Cgent} }}
      }
    \\
    &=
      \bp{\bigcup_{i\in I_{\Bgent}} \TopologicalClosure{\AgentSubsetW}{\widetilde{\AgentSubsetW}_i}}
      \cap
      \bp{ 
      \bigcup_{j\in I_{\Cgent}}
      \TopologicalClosure{\AgentSubsetW}{\widetilde{\AgentSubsetW}_j}}
      \intertext{by~\eqref{eq:TopologicalClosureAlexandrovTopology} and by definition
      of
      \( \widetilde{\AgentSubsetW}_{\Bgent} \) and \( \widetilde{\AgentSubsetW}_{\Cgent} \)}
    &=
      \bigcup_{i\in I_{\Bgent}} \bigcup_{j\in I_{\Cgent}}
      \underbrace{ \bp{
      \TopologicalClosure{\AgentSubsetW}{\widetilde{\AgentSubsetW}_i} 
      \cap
      \TopologicalClosure{\AgentSubsetW}{\widetilde{\AgentSubsetW}_j} }
      }_{=\emptyset}
      = \emptyset
  \end{align*}
  as \( I_{\Bgent} \cap I_{\Cgent} =  \emptyset \), 
  using the postponed Lemma~\ref{lem:cousinhood-and-closure}
  which gives \( \TopologicalClosure{\AgentSubsetW}{\widetilde{\AgentSubsetW}_i} 
    \cap 
    \TopologicalClosure{\AgentSubsetW}{\widetilde{\AgentSubsetW}_j} =\emptyset
    \), for any \( i,j \in I \) with \( i\not=j \).
    From the just proven equality \( \TopologicalClosure{\AgentSubsetW}{ \Bgent \cup \AgentSubsetW_\Bgent }
    \cap
    \TopologicalClosure{\AgentSubsetW}{ \Cgent \cup \AgentSubsetW_\Cgent }
    =\emptyset \), we readily get that
    \( {\AgentSubsetW}_{\Bgent} \cap {\AgentSubsetW}_{\Cgent}   =\emptyset \).
      Therefore, to check the first part
  of~\eqref{eq:topologically_separated_TopologicalClosure}, it remains 
  to calculate
  \[
    {\AgentSubsetW}_{\Bgent} \cup {\AgentSubsetW}_{\Cgent}
=
      \np{\AgentSubsetW'_\Bgent\cup \AgentSubsetW'_\Cgent}\cup
\underbrace{ \bigcup_{i\in I_{\Bgent}}
    \TopologicalClosure{\AgentSubsetW}{\widetilde{\AgentSubsetW}_i}
\cup
      \bigcup_{i\in I_{\Cgent}}
      \TopologicalClosure{\AgentSubsetW}{\widetilde{\AgentSubsetW}_i} }_{ =\widetilde{\AgentSubsetW}= \AgentSubsetW \backslash
      \np{\AgentSubsetW'_\Bgent\cup \AgentSubsetW'_\Cgent} }
=      
      \np{\AgentSubsetW'_\Bgent\cup \AgentSubsetW'_\Cgent}\cup
     \bp{ \AgentSubsetW \backslash
      \np{\AgentSubsetW'_\Bgent\cup \AgentSubsetW'_\Cgent} }
      = \AgentSubsetW
      \eqfinp
 \]
  \medskip

  \noindent $\bullet$ (Item~\ref{it:topologically_separated_two} $\implies$ Item~\ref{it:topologically_separated_one}).
The proof is by contradiction.
For this purpose,
we suppose     we suppose that there exists a splitting
    $\AgentSubsetW=\AgentSubsetW_\Bgent \sqcup \AgentSubsetW_\Cgent $ such that
  $\TopologicalClosure{\AgentSubsetW}{\Bgent \cup \AgentSubsetW_\Bgent}
  \cap 
  \TopologicalClosure{\AgentSubsetW}{\Cgent \cup \AgentSubsetW_\Cgent}
  = \emptyset$ (Item~\ref{it:topologically_separated_two})
  and that there exists $\bgent \in \Bgent$ and $\cgent \in \Cgent$ such that
    $\bgent\ConditionalActive\cgent$
    ($\neg$~Item~\ref{it:topologically_separated_one}). 
  We show that we arrive at a contradiction.

  Using the fact that $\bgent\ConditionalActive\cgent$
  and that $
  \TopologicalClosure{\AgentSubsetW}{\na{\bgent}}
  \cap 
  \TopologicalClosure{\AgentSubsetW}{\na{\cgent}}
  \subset
  \TopologicalClosure{\AgentSubsetW}{\Bgent \cup \AgentSubsetW_\Bgent}
  \cap 
  \TopologicalClosure{\AgentSubsetW}{\Cgent \cup \AgentSubsetW_\Cgent}
  = \emptyset$, 
  we would obtain, by the postponed Lemma~\ref{lem:TopologicalAw}, that there
  would exist a nonempty subset $\AgentSubsetW_{\bgent,\cgent} \subset \AgentSubsetW$ such that all the elements of 
  $\AgentSubsetW_{\bgent,\cgent}$ would be in the same class for
  the~$\Cousinhood$ partial equivalence relation, 
  and such that 
  \begin{equation}
    \TopologicalClosure{\AgentSubsetW}{\na{\bgent}} 
    \cap \TopologicalClosure{\AgentSubsetW}{\AgentSubsetW_{\bgent,\cgent}} \not=\emptyset 
    \quad\text{and}\quad
    \TopologicalClosure{\AgentSubsetW}{\na{\cgent}} 
    \cap \TopologicalClosure{\AgentSubsetW}{\AgentSubsetW_{\bgent,\cgent}} \not=\emptyset
    \eqfinp
    \label{eq:Wgammalambdaintersect}
  \end{equation}
  Now, using the fact that
  $\TopologicalClosure{\AgentSubsetW}{\AgentSubsetW_\Bgent}
  \cap 
  \TopologicalClosure{\AgentSubsetW}{\AgentSubsetW_\Cgent}
  \subset 
  \TopologicalClosure{\AgentSubsetW}{\Bgent \cup \AgentSubsetW_\Bgent}
  \cap 
  \TopologicalClosure{\AgentSubsetW}{\Cgent \cup \AgentSubsetW_\Cgent}
  =\emptyset $, we would obtain that $\TopologicalClosure{\AgentSubsetW}{\AgentSubsetW_\Bgent}
  \cap 
  \TopologicalClosure{\AgentSubsetW}{\AgentSubsetW_\Cgent}=\emptyset$,
  which, would imply that
$\AgentSubsetW_{\bgent,\cgent}$ would necessarily be included either in $\AgentSubsetW_\Bgent$ 
or in $\AgentSubsetW_\Cgent$, by using the second part of
  Lemma~\ref{lem:cousinhood-and-closure}.
  Therefore, 
  Assuming that $\AgentSubsetW_{\bgent,\cgent} \subset \AgentSubsetW_\Bgent$, we
  obtain a contradiction
  using Equation~\eqref{eq:Wgammalambdaintersect}, as we would have that
  \[
    \emptyset \not= \TopologicalClosure{\AgentSubsetW}{\na{\cgent}}
    \cap  \TopologicalClosure{\AgentSubsetW}{\AgentSubsetW_{\bgent,\cgent}}
    \subset  
    \TopologicalClosure{\AgentSubsetW}{\na{\cgent}} \cap  \TopologicalClosure{\AgentSubsetW}{\AgentSubsetW_\Bgent}
    \subset 
    \TopologicalClosure{\AgentSubsetW}{\Cgent \cup \AgentSubsetW_\Cgent}
    \cap 
    \TopologicalClosure{\AgentSubsetW}{\Bgent \cup \AgentSubsetW_\Bgent}
    =\emptyset
    \eqfinp
  \]
  Proceeding in a similar way in the case $\AgentSubsetW_{\bgent,\cgent} \subset \AgentSubsetW_\Cgent$ 
  leads to a similar contradiction.
\end{proof}



We end with Lemma~\ref{lem:TopologicalAw} and
Lemma~\ref{lem:cousinhood-and-closure} which are instrumental in the proof of
Proposition~\ref{pr:topologically_separated}.

\begin{lemma}
  Suppose that the assumptions of Proposition~\ref{pr:topologically_separated}
  are satisfied. 
  Let \( \bgent\in\Bgent \) and \( \cgent\in\Cgent \) be given such that
  $\bgent\ConditionalActive\cgent$
  and $
  \TopologicalClosure{\AgentSubsetW}{\na{\bgent}}
  \cap 
  \TopologicalClosure{\AgentSubsetW}{\na{\cgent}} = \emptyset$.
  Then, there exists $w_{\bgent}$, $w_{\cgent}\in \AgentSubsetW$ such that
  $w_{\bgent}$ and  $w_{\cgent}$ are in the same equivalence class of the
  partial equivalence relation $\Cousinhood$ (that is, $w_{\bgent} \Cousinhood
  w_{\cgent}$) and
  such that $\TopologicalClosure{\AgentSubsetW}{\na{\bgent}}\cap\TopologicalClosure{\AgentSubsetW}{\na{w_\bgent}}\not=\emptyset$ and
  $\TopologicalClosure{\AgentSubsetW}{\na{\cgent}}\cap\TopologicalClosure{\AgentSubsetW}{\na{w_\cgent}}\not=\emptyset$.
  \label{lem:TopologicalAw}
\end{lemma}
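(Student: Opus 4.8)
The plan is to exploit the hypothesis $\TopologicalClosure{\AgentSubsetW}{\na{\bgent}} \cap \TopologicalClosure{\AgentSubsetW}{\na{\cgent}} = \emptyset$ to show that, among the five terms defining~$\ConditionalActive$ in~\eqref{eq:conditional_active_relation}, only the last one, $\bp{\ConditionalAscendent \cup \ConditionalActiveTwo}\Cousinhood\bp{\ConverseConditionalAscendent \cup \ConverseConditionalActiveTwo}$, can account for $\bgent\ConditionalActive\cgent$; the vertices $w_\bgent$ and $w_\cgent$ will then be the two internal vertices of a decomposition witnessing this last term. Everything rests on one elementary observation, which I would establish first. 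Since $\bgent \in \Bgent \subset \Complementary\AgentSubsetW$ and $\cgent \in \Cgent \subset \Complementary\AgentSubsetW$ by~\eqref{eq:empty_intersection_Bgent_Cgent_AgentSubsetW}, and since $\ParentalPrecedence = \Delta_{\Complementary\AgentSubsetW}\Precedence$ by~\eqref{eq:conditional_parental_relation}, an initial $\Precedence$-step issued from $\bgent$ or from $\cgent$ is automatically a $\ParentalPrecedence$-step. Hence, for any vertex~$v$, the relation $\bgent\ConditionalAscendent v$ (where $\ConditionalAscendent = \Precedence\,\TransitiveReflexiveClosureParentalPrecedence$ by~\eqref{eq:conditional_ascendent_relation}) implies $\bgent\,\TransitiveClosureParentalPrecedence\, v$, so that $\bgent \in \ConditionalAncestor\na{v} = \TopologicalClosure{\AgentSubsetW}{\na{v}}$ by~\eqref{eq:conditional_topology}, and symmetrically for~$\cgent$; and, by~\eqref{eq:common_cause}, the relation $\bgent\ConditionalActiveTwo v$ means there is a vertex~$w$ with $w\,\TransitiveClosureParentalPrecedence\,\bgent$ and $w\,\TransitiveClosureParentalPrecedence\, v$, so that $w \in \TopologicalClosure{\AgentSubsetW}{\na{\bgent}} \cap \TopologicalClosure{\AgentSubsetW}{\na{v}}$.

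With this observation I would rule out the first four terms of~\eqref{eq:conditional_active_relation}. The diagonal term would give $\bgent = \cgent$, whence $\bgent$ lies in $\TopologicalClosure{\AgentSubsetW}{\na{\bgent}} \cap \TopologicalClosure{\AgentSubsetW}{\na{\cgent}}$, contradicting the hypothesis. If $\bgent\ConditionalAscendent\cgent$, then $\bgent \in \TopologicalClosure{\AgentSubsetW}{\na{\cgent}}$ and trivially $\bgent \in \TopologicalClosure{\AgentSubsetW}{\na{\bgent}}$, a contradiction; the case $\bgent\ConverseConditionalAscendent\cgent$, that is $\cgent\ConditionalAscendent\bgent$, is symmetric. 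If $\bgent\ConditionalActiveTwo\cgent$, the observation produces a common conditional ancestor lying in $\TopologicalClosure{\AgentSubsetW}{\na{\bgent}} \cap \TopologicalClosure{\AgentSubsetW}{\na{\cgent}}$, again a contradiction. Therefore $\bgent$ is related to $\cgent$ through the last term of~\eqref{eq:conditional_active_relation}: there are vertices $w_\bgent, w_\cgent$ with $\bgent\,\bp{\ConditionalAscendent \cup \ConditionalActiveTwo}\, w_\bgent$, with $w_\bgent\Cousinhood w_\cgent$, and with $w_\cgent\,\bp{\ConverseConditionalAscendent \cup \ConverseConditionalActiveTwo}\, \cgent$.

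It remains to check that these $w_\bgent, w_\cgent$ are as claimed. Since $\Cousinhood \subset \AgentSubsetW\times\AgentSubsetW$ by~\eqref{eq:Cousinhood}, both belong to~$\AgentSubsetW$, and, as $w_\bgent\Cousinhood w_\cgent$, they lie in the same class of the partial equivalence relation~$\Cousinhood$. The non-disjointness $\TopologicalClosure{\AgentSubsetW}{\na{\bgent}} \cap \TopologicalClosure{\AgentSubsetW}{\na{w_\bgent}} \neq \emptyset$ follows from the observation of the first paragraph applied with $v = w_\bgent$: in the subcase $\bgent\ConditionalAscendent w_\bgent$ the vertex $\bgent$ itself lies in the intersection, while in the subcase $\bgent\ConditionalActiveTwo w_\bgent$ a common conditional ancestor does. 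For the symmetric conclusion, taking converses and using that $\ConditionalActiveTwo$ is symmetric (so $\ConverseConditionalActiveTwo = \ConditionalActiveTwo$) rewrites $w_\cgent\,\bp{\ConverseConditionalAscendent \cup \ConverseConditionalActiveTwo}\,\cgent$ as $\cgent\,\bp{\ConditionalAscendent \cup \ConditionalActiveTwo}\,w_\cgent$, and the same argument with $\cgent, w_\cgent$ in the roles of $\bgent, w_\bgent$ gives $\TopologicalClosure{\AgentSubsetW}{\na{\cgent}} \cap \TopologicalClosure{\AgentSubsetW}{\na{w_\cgent}} \neq \emptyset$. I expect the elimination of the first four terms of~$\ConditionalActive$ to be the crux; after that, the extraction of $w_\bgent, w_\cgent$ and the verification are routine, the only real care needed being the relational bookkeeping (compositions, converses, transitive closures of $\ParentalPrecedence, \ConditionalAscendent, \ConditionalActiveTwo$) and the systematic use of $\bgent, \cgent \in \Complementary\AgentSubsetW$ to turn an initial $\Precedence$-step into a $\ParentalPrecedence$-step.
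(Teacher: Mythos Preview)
Your proposal is correct and follows essentially the same approach as the paper's proof: both first rule out the four terms $\Delta$, $\ConditionalAscendent$, $\ConverseConditionalAscendent$, $\ConditionalActiveTwo$ of~\eqref{eq:conditional_active_relation} by showing each would force $\TopologicalClosure{\AgentSubsetW}{\na{\bgent}}\cap\TopologicalClosure{\AgentSubsetW}{\na{\cgent}}\neq\emptyset$, then extract $w_\bgent,w_\cgent$ from the remaining term and verify the two closure intersections using the same observation that $\bgent\in\Complementary\AgentSubsetW$ turns an initial $\Precedence$-step into a $\ParentalPrecedence$-step. The only cosmetic difference is that for the $\Delta$ case the paper invokes $\Bgent\cap\Cgent=\emptyset$ while you use the closure hypothesis directly; both are valid.
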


\begin{proof}
  As a preliminary result, 
  we prove that $\bgent \bp{\Delta \cup \ConditionalAscendent \cup \ConverseConditionalAscendent \cup \ConditionalActiveTwo}\cgent$
  contradicts the assumptions of Lemma~\ref{lem:TopologicalAw}.
  First, $\bgent \Delta \cgent$ contradicts the assumption $\Bgent \cap \Cgent =
  \emptyset$ in~\eqref{eq:empty_intersection_Bgent_Cgent_AgentSubsetW}.
  Second, $\bgent \ConditionalAscendent \cgent$ contradicts the assumption
  $ \TopologicalClosure{\AgentSubsetW}{\na{\bgent}} \cap \TopologicalClosure{\AgentSubsetW}{\na{\cgent}}=\emptyset$.
  Indeed, using the definition~\eqref{eq:conditional_ascendent_relation}
  of the conditional ascendent relation~$\ConditionalAscendent$, and using the fact that
  $\bgent \in \Bgent \subset \complementary{\AgentSubsetW}$, we have that
  $\bgent \ConditionalAscendent \cgent =\bgent \Precedence \TransitiveReflexiveClosureParentalPrecedence  \cgent
  = \bgent \Delta_{\complementary{\AgentSubsetW}} \Precedence \TransitiveReflexiveClosureParentalPrecedence  \cgent =
  \bgent \TransitiveReflexiveClosureParentalPrecedence  \cgent$. Thus, $\bgent \ConditionalAscendent \cgent$ implies that
  $\bgent \in \TransitiveReflexiveClosureParentalPrecedence  \cgent=
  \TopologicalClosure{\AgentSubsetW}{\na{\cgent}}$ and thus a contradiction as
  \( \emptyset = \TopologicalClosure{\AgentSubsetW}{\na{\bgent}} \cap
  \TopologicalClosure{\AgentSubsetW}{\na{\cgent}}
  \supset \bgent \cap \TopologicalClosure{\AgentSubsetW}{\na{\cgent}} \).
  Third, following the same lines, $\bgent \ConverseConditionalAscendent\cgent$ implies that
  $\cgent \in \TopologicalClosure{\AgentSubsetW}{\na{\bgent}}$ and contradicts $
  \TopologicalClosure{\AgentSubsetW}{\na{\bgent}} \cap
  \TopologicalClosure{\AgentSubsetW}{\na{\cgent}}=\emptyset$.
  Fourth, $\bgent \ConditionalActiveTwo \cgent= \bgent \TransitiveClosureConverseParentalPrecedence \TransitiveClosureParentalPrecedence\cgent$
  and thus $\bgent \ConditionalActiveTwo \cgent$ implies that
  \( \TransitiveClosureParentalPrecedence\bgent \cap
  \TransitiveClosureParentalPrecedence\cgent \neq \emptyset \),
  hence that \( \TransitiveReflexiveClosureParentalPrecedence \bgent \cap
  \TransitiveReflexiveClosureParentalPrecedence\cgent \neq \emptyset \),
  that is, 
  $\TopologicalClosure{\AgentSubsetW}{\na{\bgent}} \cap
  \TopologicalClosure{\AgentSubsetW}{\na{\cgent}}\not=\emptyset$
  by~\eqref{eq:conditional_topology}.
  This again leads to a contradiction.

  Therefore, we get that 
  \( \neg \bp{ \bgent \bp{\Delta \cup \ConditionalAscendent \cup
      \ConverseConditionalAscendent \cup \ConditionalActiveTwo}\cgent } \) and 
  $\bgent\ConditionalActive\cgent$ imply that  
  $\bgent \bp{\ConditionalAscendent \cup \ConditionalActiveTwo}
  \Cousinhood
  \bp{\ConverseConditionalAscendent \cup \ConverseConditionalActiveTwo} \cgent$,
  using the definition~\eqref{eq:conditional_active_relation}
  of the conditional active relation~$\ConditionalActive$.
  Thus, there exist
  $w_{\bgent}$, $w_{\cgent}\in \AgentSubsetW$ such that 
  \begin{equation*}
    \bgent \bp{\ConditionalAscendent \cup \ConditionalActiveTwo}w_{\bgent}
    \text{ and }
    w_{\cgent} \bp{\ConverseConditionalAscendent \cup \ConverseConditionalActiveTwo} \cgent 
    \text{ and }
    w_{\bgent} \Cousinhood w_{\cgent}
    \eqfinp
  \end{equation*}
  
  Now, we prove that
  $\bgent \bp{\ConditionalAscendent \cup \ConditionalActiveTwo}w_{\bgent}$
  implies that we have
  $\TopologicalClosure{\AgentSubsetW}{\na{\bgent}}\cap\TopologicalClosure{\AgentSubsetW}{\na{w_\bgent}}\not=\emptyset$. Indeed,
  as $\bgent \bp{\ConditionalAscendent \cup \ConditionalActiveTwo}w_{\bgent}$,
  we have two possibilities.
  First, suppose that $\bgent \ConditionalAscendent w_{\bgent}$. Then, as
  $\bgent \in \Complementary{\AgentSubsetW}$, this implies that
  $\bgent \Delta_{\Complementary{\AgentSubsetW}} \ConditionalAscendent
  w_{\bgent} = \bgent \TransitiveReflexiveClosureParentalPrecedence w_{\bgent}$
  which implies that
  $\bgent \in \TransitiveReflexiveClosureParentalPrecedence w_{\bgent}=
  \TopologicalClosure{\AgentSubsetW}{\na{w_\bgent}}$
  by~\eqref{eq:conditional_topology}.  Therefore, we get that
  $\TopologicalClosure{\AgentSubsetW}{\na{\bgent}}\cap\TopologicalClosure{\AgentSubsetW}{\na{w_\bgent}}\not=\emptyset$.
  Second, if $\bgent \ConditionalActiveTwo w_{\bgent}$, then, as already seen at the beginning of the proof, we obtain that 
  $\TopologicalClosure{\AgentSubsetW}{\na{\bgent}}\cap\TopologicalClosure{\AgentSubsetW}{\na{w_\bgent}}\not=\emptyset$. 
  
  Then, following similar arguments, 
  we prove that $w_{\cgent} \bp{\ConverseConditionalAscendent \cup \ConverseConditionalActiveTwo} \cgent$ implies that 
  we have $\TopologicalClosure{\AgentSubsetW}{\na{\cgent}}\cap\TopologicalClosure{\AgentSubsetW}{\na{w_\cgent}}\not=\emptyset$.
  
  Finally, we have obtained that 
  $\TopologicalClosure{\AgentSubsetW}{\na{\bgent}}\cap\TopologicalClosure{\AgentSubsetW}{\na{w_\bgent}}\not=\emptyset$ and 
  $\TopologicalClosure{\AgentSubsetW}{\na{\cgent}}\cap\TopologicalClosure{\AgentSubsetW}{\na{w_\cgent}}\not=\emptyset$. 
  Moreover $w_{\bgent}$ and $w_{\cgent}$ are in the same equivalence class of the partial equivalence
  relation $\Cousinhood$ as $w_{\bgent} \Cousinhood w_{\cgent}$.
  Thus, we have found two elements $w_{\bgent}$, $w_{\cgent} \in \AgentSubsetW$ satisfying the conclusion of 
  Lemma~\ref{lem:TopologicalAw}.
  This concludes the proof.
\end{proof}

\begin{lemma} Let $\AgentSubsetW'$ and $\AgentSubsetW''$ be two subsets of $\AgentSubsetW$ which are included in
  two distinct equivalence classes of the partial equivalence
  relation~$\Cousinhood$. 
  Then, we have that $\TopologicalClosure{\AgentSubsetW}{\AgentSubsetW'}\cap\TopologicalClosure{\AgentSubsetW}{\AgentSubsetW''}
  =\emptyset$.
  
Conversely, assume given a splitting
    $\AgentSubsetW=\AgentSubsetW'\sqcup \AgentSubsetW''$ such that
  $\TopologicalClosure{\AgentSubsetW}{\AgentSubsetW'}\cap\TopologicalClosure{\AgentSubsetW}{\AgentSubsetW''}=\emptyset$.
  Then, there does not exists $w' \in \AgentSubsetW'$ and 
  $w'' \in \AgentSubsetW''$ such that $w'$ and $w''$ are in the same equivalence classes of~$\Cousinhood$
  
  \label{lem:cousinhood-and-closure}
\end{lemma}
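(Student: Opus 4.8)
The plan is to translate both assertions into statements about the relations $\ConditionalActiveTwo$ and $\Cousinhood$ via the identity $\TopologicalClosure{\AgentSubsetW}{\Bgent} = \ConditionalAncestor\Bgent$ of~\eqref{eq:conditional_topology} (so that $v \in \TopologicalClosure{\AgentSubsetW}{\na{w}}$ is exactly $v \ConditionalAncestor w$), together with one elementary observation: since $\ParentalPrecedence = \Delta_{\Complementary{\AgentSubsetW}}\Precedence$ by~\eqref{eq:conditional_parental_relation}, no vertex of $\AgentSubsetW$ is the tail of a $\ParentalPrecedence$-edge, hence $\ParentalPrecedence^{k}$ has no pair with first component in $\AgentSubsetW$ for $k\geq 1$; consequently, for every $v \in \AgentSubsetW$ and $y \in \AGENT$, $v \ConditionalAncestor y$ forces $v = y$ (only the diagonal part of $\ConditionalAncestor = \npTransitiveReflexiveClosure{\ParentalPrecedence}$ is available). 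I would also record that $w \ConditionalActiveTwo w'$ means, by~\eqref{eq:common_cause} and the definitions of converse and transitive closures, that there is $v \in \Complementary{\AgentSubsetW}$ with $v \TransitiveClosureParentalPrecedence w$ and $v \TransitiveClosureParentalPrecedence w'$; such a $v$ lies in $\TopologicalClosure{\AgentSubsetW}{\na{w}} \cap \TopologicalClosure{\AgentSubsetW}{\na{w'}}$.

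For the direct part I would argue by contradiction. If $\AgentSubsetW'$ or $\AgentSubsetW''$ is empty the corresponding closure is empty and there is nothing to prove, so assume $x \in \TopologicalClosure{\AgentSubsetW}{\AgentSubsetW'} \cap \TopologicalClosure{\AgentSubsetW}{\AgentSubsetW''}$, and pick $w' \in \AgentSubsetW'$, $w'' \in \AgentSubsetW''$ with $x \ConditionalAncestor w'$ and $x \ConditionalAncestor w''$. If $x \in \AgentSubsetW$, the observation above gives $x = w' = w''$, so $w' \Cousinhood w''$ through $\Delta_{\AgentSubsetW} \subset \Cousinhood$; if $x \in \Complementary{\AgentSubsetW}$, then $x \neq w'$ and $x \neq w''$ (both lie in $\AgentSubsetW$), hence $x \TransitiveClosureParentalPrecedence w'$ and $x \TransitiveClosureParentalPrecedence w''$, which yields $w' \ConditionalActiveTwo w''$, hence $w' \bp{\Delta_{\AgentSubsetW}\ConditionalActiveTwo\Delta_{\AgentSubsetW}} w''$ since $w', w'' \in \AgentSubsetW$, hence $w' \Cousinhood w''$ by~\eqref{eq:Cousinhood}. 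In both cases $w'$ and $w''$ lie in the same $\Cousinhood$-class, contradicting that $\AgentSubsetW'$ and $\AgentSubsetW''$ are contained in distinct classes.

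For the converse part, assume the splitting $\AgentSubsetW = \AgentSubsetW' \sqcup \AgentSubsetW''$ has disjoint closures and, for contradiction, that $w' \Cousinhood w''$ for some $w' \in \AgentSubsetW'$ and $w'' \in \AgentSubsetW''$. Since $w' \neq w''$, the definition~\eqref{eq:Cousinhood} of $\Cousinhood$ gives $w' \bpTransitiveClosure{\Delta_{\AgentSubsetW}\ConditionalActiveTwo\Delta_{\AgentSubsetW}} w''$, hence a finite chain $w' = u_0, u_1, \dots, u_m = w''$ with all $u_i \in \AgentSubsetW$ and $u_{i-1} \ConditionalActiveTwo u_i$ for each $i$. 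As $u_0 \in \AgentSubsetW'$, $u_m \in \AgentSubsetW''$ and $\AgentSubsetW' \cap \AgentSubsetW'' = \emptyset$, there is an index $i$ with $u_{i-1} \in \AgentSubsetW'$ and $u_i \in \AgentSubsetW''$; unfolding $u_{i-1} \ConditionalActiveTwo u_i$ as in the first paragraph produces $v \in \TopologicalClosure{\AgentSubsetW}{\na{u_{i-1}}} \cap \TopologicalClosure{\AgentSubsetW}{\na{u_i}}$, which by monotonicity of the closure (a consequence of~\eqref{eq:TopologicalClosureAlexandrovTopology}) lies in $\TopologicalClosure{\AgentSubsetW}{\AgentSubsetW'} \cap \TopologicalClosure{\AgentSubsetW}{\AgentSubsetW''}$ --- a contradiction. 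The only step needing care is this last one: one must notice that $\bpTransitiveClosure{\Delta_{\AgentSubsetW}\ConditionalActiveTwo\Delta_{\AgentSubsetW}}$ genuinely decomposes into single $\ConditionalActiveTwo$-steps whose vertices all remain inside $\AgentSubsetW$, so that the passage from $\AgentSubsetW'$ to $\AgentSubsetW''$ occurs across one of these steps; everything else is routine unfolding of the definitions in Definition~\ref{de:all_the_relations}.
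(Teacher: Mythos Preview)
Your proof is correct and follows essentially the same strategy as the paper's: both parts argue by contradiction, deriving $w' \Cousinhood w''$ from a common point in the two closures (direct part) and, for the converse, unfolding $\Cousinhood$ into a finite $\Delta_{\AgentSubsetW}\ConditionalActiveTwo\Delta_{\AgentSubsetW}$-chain and locating a $\ConditionalActiveTwo$-step that crosses from $\AgentSubsetW'$ to $\AgentSubsetW''$. The only difference is that, in the direct part, the paper invokes the identity~\eqref{eq:AA} from the companion paper to pass from $w' \ConverseConditionalAncestor \ConditionalAncestor w''$ to $w' \ConditionalActiveTwo w''$, whereas you reach the same conclusion by a self-contained case split on whether the common point~$x$ lies in~$\AgentSubsetW$ (forcing $x=w'=w''$ via your observation on tails of $\ParentalPrecedence$-edges) or in~$\Complementary{\AgentSubsetW}$ (giving $x \TransitiveClosureParentalPrecedence w'$ and $x \TransitiveClosureParentalPrecedence w''$ directly).
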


\begin{proof} For the first assertion, we make a proof by contradiction.
  For this purpose, we consider $\AgentSubsetW'$ and $\AgentSubsetW''$, two subsets of $\AgentSubsetW$ which are included in
  two distinct equivalence classes of the partial equivalence
  relation~$\Cousinhood$, and we 
suppose that $\TopologicalClosure{\AgentSubsetW}{\AgentSubsetW'}\cap\TopologicalClosure{\AgentSubsetW}{\AgentSubsetW''}
  \not=\emptyset$. Then, by~\eqref{eq:TopologicalClosureEDGE}, 
  there exists $w'\in \AgentSubsetW'$ and $w''\in \AgentSubsetW''$ and $\bgent \in \VERTEX$ such that 
  $\bgent \in \ConditionalAncestor w'$ and $\bgent \in  \ConditionalAncestor w''$. Therefore we have that
  $w' \ConverseConditionalAncestor \ConditionalAncestor w''$, from which we
  deduce that
  $w'  \np{\Delta 
    \cup
    \Delta_{\Complementary{\AgentSubsetW}} \ConditionalAscendent 
    \cup \ConverseConditionalAscendent \Delta_{\Complementary{\AgentSubsetW}}
    \cup \ConditionalActiveTwo
  } w''$,  using Equation~\eqref{eq:AA}.
  Moreover, as by assumption, $\AgentSubsetW'$ and $\AgentSubsetW''$ are two subsets of $\AgentSubsetW$ which are included in two distinct equivalence classes of $\Cousinhood$, we have that $w'\in \AgentSubsetW$, $w''\in \AgentSubsetW$ and
  $w' \not= w''$. Hence, among the four possible cases corresponding to the
  union of four terms, only the last one is possible:
  we must necessarily have that $w' \Delta_{\AgentSubsetW} \ConditionalActiveTwo \Delta_{\AgentSubsetW}  w''$,
  which implies that $w' \Cousinhood  w''$ by definition~\eqref{eq:Cousinhood}
  of $\Cousinhood$.
Thus, $w'$ and $w''$ belong to the same equivalence class of~$\Cousinhood$,
but this contradicts that $w'\in \AgentSubsetW$ and $w''\in \AgentSubsetW$ 
where $\AgentSubsetW'$ and $\AgentSubsetW''$ are included in two distinct equivalence
classes of~$\Cousinhood$. 
\medskip

  Now, we prove the converse assertion again by contradiction. 
  For this purpose, consider a 
  splitting $\AgentSubsetW=\AgentSubsetW'\sqcup \AgentSubsetW''$ such that
  $\TopologicalClosure{\AgentSubsetW}{\AgentSubsetW'}\cap\TopologicalClosure{\AgentSubsetW}{\AgentSubsetW''}=\emptyset$,
  and suppose that there exists $w' \in \AgentSubsetW'$ and 
  $w'' \in \AgentSubsetW''$ such that $w'$ and $w''$ are in the same equivalence classes of~$\Cousinhood$
  or otherwise said, such that $w' \Cousinhood w''$.
  Using Equation~\eqref{eq:Cousinhood} and the fact that
  $w'\not= w''$, as $\AgentSubsetW'\cap \AgentSubsetW''=\emptyset$,
  we deduce that
  $w' \bpTransitiveClosure{\Delta_{\AgentSubsetW} \ConditionalActiveTwo
    \Delta_{\AgentSubsetW} } w''$. Hence, there exists $k\in \NN$, $k\ge 1$, and
  a sequence $\sequence{w_i}{i\in \ic{1,k}}$ in~$\AgentSubsetW$ such that, for all $i\in \ic{1,k-1}$,
  we have $w_i \ConditionalActiveTwo w_{i+1}$ and $w' \ConditionalActiveTwo w_1$ and
  $w_k \ConditionalActiveTwo w''$. 
  Setting $w_0=w'$ and $w_{k+1}=w''$ and using the property that
$\bgent \ConditionalActiveTwo \cgent \implies 
\TopologicalClosure{\AgentSubsetW}{\na{\bgent}} \cap
  \TopologicalClosure{\AgentSubsetW}{\na{\cgent}}\not=\emptyset$
(shown at the beginning of the proof of Lemma~\ref{lem:TopologicalAw}), we get that
  $\TopologicalClosure{\AgentSubsetW}{\na{w_i}} \cap
  \TopologicalClosure{\AgentSubsetW}{\na{w_{i+1}}}\not=\emptyset$ 
for $i\in \ic{0,k}$. 
Now, the sequence $\sequence{w_i}{i\in \ic{0,k+1}}$ is in~$\AgentSubsetW$,
with the first element, $w'$, in $\AgentSubsetW'$ and the last one, $w''$, in
$\AgentSubsetW''$.
As $\AgentSubsetW = \AgentSubsetW'\sqcup \AgentSubsetW''$, 
  we can find two consecutive elements in the sequence such that 
  one is in $\AgentSubsetW'$ and the other one is in $\AgentSubsetW''$ and which are such that the intersection 
  of their topological closure is not empty. Thus, we have obtained that
  $\TopologicalClosure{\AgentSubsetW}{\AgentSubsetW'}\cap
  \TopologicalClosure{\AgentSubsetW}{\AgentSubsetW''}\not= \emptyset$, which
  gives a contradiction.
  
  This ends the proof.
\end{proof}

\section{Conclusion}

Together with its two companion
papers~\cite{Chancelier-De-Lara-Heymann-2021,Heymann-De-Lara-Chancelier-2021}, 
this paper is a contribution to providing
another perspective on conditional independence and do-calculus. 
In this paper, we consider directed graphs  (DGs), not necessarily acyclic, and we 
introduce a suitable topology on the set of vertices and 
the new notion of topological conditional separation on DGs. 
Then, we prove its equivalence with an extension of Pearl's d-separation on DGs.
What is more, we put forward a practical characterization of t-separation between subsets of
vertices. The proofs partially rely on results proven in~\cite{Chancelier-De-Lara-Heymann-2021}.

Checking topological separation is a two steps process. The first one, which is combinatorial,
consists in exploring the possible splitting of the conditioning set $\AgentSubsetW$ and the second one
consists in checking that the two closures induced by the splitting do not intersect.
It should be noted that, once given the splitting, the second step is computationaly easy and thus
the splitting appears as a ``certificate'' of conditional independence.
By contrast, checking d-separation is a one step combinatorial process as it requires to check that
\emph{all} the paths that connect two variables are blocked.

\appendix

\section{Additional material  on Alexandrov topology}
\label{Additional_results_alexandrov}

We use the material introduced
in~\S\ref{Topologies_induced_by_binary_relations},
and we provide additional results on Alexandrov topologies. 

\begin{proposition}
  Let  \( \EDGE_1,\EDGE_2 \) be two binary relations on the set~\( \VERTEX \).
  We have that
  \begin{align}
    \EDGE_1\subset\EDGE_2
    &\implies
      \TopologyLower{\EDGE_2} \subset \TopologyLower{\EDGE_1}
      \eqfinv
    \\
    \TopologyLower{\EDGE_1\cup\EDGE_2}
    &=
      \TopologyLower{\EDGE_1} \cap \TopologyLower{\EDGE_2}
      \label{eq:TopologyLower_EDGE_1cupEDGE_2}      
      \eqfinp             
  \end{align}
  The product topology \( \TopologyLower{\EDGE_1} \otimes \TopologyLower{\EDGE_2}
  \) on the product set~\( \VERTEX^2 \) coincides with
  the topology \(  \TopologyLower{\EDGE_1{\times}\EDGE_2} \)
  in~\eqref{eq:TopologyLowerEDGE},
  where \( \EDGE_1\times\EDGE_2 \subset \VERTEX^2\times\VERTEX^2 \)
  is the product binary relation on the product set~\( \VERTEX^2 \):
  \begin{equation}
    \TopologyLower{\EDGE_1} \otimes \TopologyLower{\EDGE_2}
    = \TopologyLower{\EDGE_1{\times}\EDGE_2} 
    \eqfinp
    \label{eq:product_topology}
  \end{equation}
  The topological closure of a subset~$\relation \subset \VERTEX^2$
  \wrt\ the topology \( \TopologyLower{\EDGE_1} \otimes \TopologyLower{\EDGE_2}
  = \TopologyLower{\EDGE_1{\times}\EDGE_2} \) is given by 
  \begin{equation}
    \TopologicalClosure{\EDGE_1{\times}\EDGE_2}{\relation} =
    \TransitiveReflexiveClosure{\EDGE_1}\relation
    \TransitiveReflexiveClosureConverse{\EDGE_2} 
    \eqfinp
    \label{eq:product_topology_topological_closure}
  \end{equation}
\end{proposition}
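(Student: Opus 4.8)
The plan is to dispatch the four assertions in order; the first two are one-line consequences of the description of open sets in~\eqref{eq:TopologyLowerEDGE}, the substance lies in the identification~\eqref{eq:product_topology} of the product topology, and~\eqref{eq:product_topology_topological_closure} then follows by unfolding.

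Both the monotonicity statement and~\eqref{eq:TopologyLower_EDGE_1cupEDGE_2} come directly from the characterization \( \OpenSet\in\TopologyLower{\EDGE}\iff\OpenSet\EDGE\subset\OpenSet \): the afterset \( \OpenSet\EDGE \) is monotone in its relation argument, so \( \EDGE_1\subset\EDGE_2 \) together with \( \OpenSet\EDGE_2\subset\OpenSet \) forces \( \OpenSet\EDGE_1\subset\OpenSet \); and \( \OpenSet\np{\EDGE_1\cup\EDGE_2}=\OpenSet\EDGE_1\cup\OpenSet\EDGE_2 \), so \( \OpenSet\np{\EDGE_1\cup\EDGE_2}\subset\OpenSet \) is equivalent to the conjunction of \( \OpenSet\EDGE_1\subset\OpenSet \) and \( \OpenSet\EDGE_2\subset\OpenSet \), which is~\eqref{eq:TopologyLower_EDGE_1cupEDGE_2}.

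For~\eqref{eq:product_topology} I would prove the two inclusions in turn, recalling that \( \EDGE_1\times\EDGE_2 \) is the Cartesian product relation on \( \VERTEX^2 \), under which \( (x_1,x_2) \) precedes \( (y_1,y_2) \) exactly when either \( x_1\EDGE_1 y_1 \) and \( x_2=y_2 \), or \( x_1=y_1 \) and \( x_2\EDGE_2 y_2 \). First, for open sets \( \OpenSet_i\in\TopologyLower{\EDGE_i} \) the rectangle \( \OpenSet_1\times\OpenSet_2 \) lies in \( \TopologyLower{\EDGE_1{\times}\EDGE_2} \), since its afterset under \( \EDGE_1\times\EDGE_2 \) is \( \np{\OpenSet_1\EDGE_1\times\OpenSet_2}\cup\np{\OpenSet_1\times\OpenSet_2\EDGE_2}\subset\OpenSet_1\times\OpenSet_2 \) by \( \OpenSet_i\EDGE_i\subset\OpenSet_i \); as the product topology is the smallest topology containing all such rectangles, \( \TopologyLower{\EDGE_1}\otimes\TopologyLower{\EDGE_2}\subset\TopologyLower{\EDGE_1{\times}\EDGE_2} \). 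Conversely, let \( \OpenSet\in\TopologyLower{\EDGE_1{\times}\EDGE_2} \), so that \( \OpenSet\npTransitiveReflexiveClosure{\EDGE_1\times\EDGE_2}=\OpenSet \) by~\eqref{eq:TopologyLowerEDGE_OpenSet}; for each \( (x_1,x_2)\in\OpenSet \) the set \( \np{x_1\npTransitiveReflexiveClosure{\EDGE_1}}\times\np{x_2\npTransitiveReflexiveClosure{\EDGE_2}} \) is a product of \( \TopologyLower{\EDGE_i} \)-open sets (each factor being an afterset of the transitive relation \( \npTransitiveReflexiveClosure{\EDGE_i} \), hence open by~\eqref{eq:TopologyLowerEDGE_OpenSet}), it contains \( (x_1,x_2) \), and it is contained in \( \OpenSet \) by the identity below together with \( \OpenSet\npTransitiveReflexiveClosure{\EDGE_1\times\EDGE_2}=\OpenSet \). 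Writing \( \OpenSet \) as the union of these rectangles over its points shows \( \OpenSet\in\TopologyLower{\EDGE_1}\otimes\TopologyLower{\EDGE_2} \).

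The linchpin — and the step I expect to carry the real weight — is the identity
\begin{equation*}
  (x_1,x_2)\npTransitiveReflexiveClosure{\EDGE_1\times\EDGE_2}(y_1,y_2)
  \iff
  x_1\npTransitiveReflexiveClosure{\EDGE_1}y_1 \mtext{ and } x_2\npTransitiveReflexiveClosure{\EDGE_2}y_2
  \eqfinp
\end{equation*}
For \( \Leftarrow \), one first moves in the first coordinate (freezing the second) along an \( \EDGE_1 \)-path from \( x_1 \) to \( y_1 \), then in the second coordinate (freezing the first) along an \( \EDGE_2 \)-path from \( x_2 \) to \( y_2 \); every such move is legal for \( \EDGE_1\times\EDGE_2 \). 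For \( \Rightarrow \), one projects an \( \EDGE_1\times\EDGE_2 \)-path onto each coordinate, each elementary product step contributing either an \( \EDGE_i \)-step or a repetition in coordinate \( i \); the only genuine care is the bookkeeping of the two interleaved projected paths. Finally,~\eqref{eq:product_topology_topological_closure} follows by applying~\eqref{eq:TopologicalClosureEDGE} of Proposition~\ref{pr:TopologyLowerEDGE} to the relation \( \EDGE_1\times\EDGE_2 \) on the set \( \VERTEX^2 \): the closure of \( \relation\subset\VERTEX^2 \) is the foreset \( \npTransitiveReflexiveClosure{\EDGE_1\times\EDGE_2}\,\relation \), which by the identity above consists of those \( (a_1,a_2) \) for which there is \( (b_1,b_2)\in\relation \) with \( a_1\npTransitiveReflexiveClosure{\EDGE_1}b_1 \) and \( a_2\npTransitiveReflexiveClosure{\EDGE_2}b_2 \); reading \( \relation \) and this set of couples as binary relations on \( \VERTEX \), this is exactly the composition \( \npTransitiveReflexiveClosure{\EDGE_1}\,\relation\,\TransitiveReflexiveClosureConverse{\EDGE_2} \), as claimed.
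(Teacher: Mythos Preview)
Your argument is correct and, on~\eqref{eq:product_topology}, more complete than the paper's. The paper only records that a rectangle \(\Bgent_1\times\Bgent_2\) is \(\TopologyLower{\EDGE_1}\otimes\TopologyLower{\EDGE_2}\)-open iff it is \(\TopologyLower{\EDGE_1\times\EDGE_2}\)-open; this yields \(\TopologyLower{\EDGE_1}\otimes\TopologyLower{\EDGE_2}\subset\TopologyLower{\EDGE_1\times\EDGE_2}\) at once but leaves the reverse inclusion unargued. You supply that inclusion by covering an arbitrary \(\TopologyLower{\EDGE_1\times\EDGE_2}\)-open set with the open rectangles \((x_1\TransitiveReflexiveClosure{\EDGE_1})\times(x_2\TransitiveReflexiveClosure{\EDGE_2})\), which rests on your ``linchpin'' identity \(\npTransitiveReflexiveClosure{\EDGE_1\times\EDGE_2}=\TransitiveReflexiveClosure{\EDGE_1}\times\TransitiveReflexiveClosure{\EDGE_2}\). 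That identity holds precisely because you read \(\EDGE_1\times\EDGE_2\) as the one-coordinate-at-a-time relation; under the componentwise reading (both coordinates advancing simultaneously) the identity --- and with it the reverse inclusion --- can fail, so your interpretation is the one that makes the proposition true. For~\eqref{eq:product_topology_topological_closure} the paper takes a different path, combining the Alexandrov union property~\eqref{eq:TopologicalClosureAlexandrovTopology} with the standard product-topology fact \(\TopologicalClosure{}{\{(\bgent',\cgent')\}}=\TopologicalClosure{\EDGE_1}{\{\bgent'\}}\times\TopologicalClosure{\EDGE_2}{\{\cgent'\}}\), never invoking~\eqref{eq:product_topology} at all; your derivation via~\eqref{eq:TopologicalClosureEDGE} applied on \(\VERTEX^2\) together with the transitive-closure identity is an equally clean alternative.
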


\begin{proof}
  The proof of~\eqref{eq:product_topology}
  relies on the following identity between open rectangles:
  \begin{align*}
    \Bgent_1\times\Bgent_2 \in \TopologyLower{\EDGE_1} \otimes \TopologyLower{\EDGE_2}
    &\iff
      \Bgent_1\in \TopologyLower{\EDGE_1} \text{ and }
      \Bgent_2 \in \TopologyLower{\EDGE_2}
\tag{by definition of the product topology \( \TopologyLower{\EDGE_1} \otimes
      \TopologyLower{\EDGE_2} \)}      
    \\
    &\iff
      \EDGE_1\Bgent_1 \subset \Bgent_1  \text{ and } 
      \EDGE_2\Bgent_2 \subset \Bgent_2
\tag{by~\eqref{eq:TopologyLowerEDGE_OpenSet}}
    \\
    &\iff
      \np{\EDGE_1\times\EDGE_2} \np{\Bgent_1\times\Bgent_2}
      \subset \Bgent_1\times\Bgent_2
      \eqfinp       
  \end{align*}

  Regarding~\eqref{eq:product_topology_topological_closure},
  by property~\eqref{eq:TopologicalClosureAlexandrovTopology} of an Alexandrov topology, we have that
  \begin{equation*}
    \TopologicalClosure{\EDGE_1{\times}\EDGE_2}{\relation} =
    \bigcup_{\np{\bgent', \cgent'} \in\relation}
    \TopologicalClosure{\EDGE_1{\times}\EDGE_2}{\na{\np{\bgent', \cgent'}}} 
  \end{equation*}
  so that, for any \( \bgent, \cgent \in \VERTEX \), we have 
  \begin{align*}
    \np{\bgent, \cgent} \in 
    \TopologicalClosure{\EDGE_1{\times}\EDGE_2}{\relation} 
    & \iff
      \exists \np{\bgent', \cgent'} \in\relation \eqsepv
      \np{\bgent, \cgent} \in 
      \TopologicalClosure{\EDGE_1{\times}\EDGE_2}{\na{\np{\bgent', \cgent'}}} 
    \\
    & \iff
      \exists \np{\bgent', \cgent'} \in\relation \eqsepv
      \np{\bgent, \cgent} \in 
      \TopologicalClosure{\EDGE_1}{\na{\bgent'}} \times
      \TopologicalClosure{\EDGE_2}{\na{\cgent'}}
\intertext{by property of the topological closure of rectangles in the product topology}      
    & \iff
      \exists \np{\bgent', \cgent'} \in\relation \eqsepv
      \bgent \in \TransitiveReflexiveClosure{\EDGE_1}\bgent'
      \eqsepv
      \cgent \in \TransitiveReflexiveClosure{\EDGE_2}\cgent'
\tag{by~\eqref{eq:TopologicalClosureEDGE}}
    \\
    & \iff
      \bgent \TransitiveReflexiveClosure{\EDGE_1}\relation
      \TransitiveReflexiveClosureConverse{\EDGE_2} \cgent 
  \end{align*}
  \medskip

  This ends the proof.   
\end{proof}

\begin{proposition}
  Let \( \npOrientedGraph \) be a graph,
  and $\Vertex \subset \VERTEX$ a subset.
  The following statements are equivalent:
  \begin{enumerate}
  \item The subset
    $\Vertex$ is a connected component of the topological space
    \( \np{\VERTEX, \TopologyLower{\EDGE}} \),
  \item The subset 
    $\Vertex$ is a connected component of the topological space
    \( \np{\VERTEX, \TopologyLower{\EDGE}'}=
    \np{\VERTEX, \TopologyLower{\Converse{\EDGE}}}  \),
  \item The subset
    $\Vertex$ is a connected component of the topological space
    \( \np{\VERTEX, \TopologyLower{\EDGE \cup \Converse{\EDGE}}}
    = \np{\VERTEX, \TopologyLower{\EDGE} \cap \TopologyLower{\Converse{\EDGE}}}
    = \np{\VERTEX, \TopologyLower{\EDGE} \cap \TopologyLower{\EDGE}'}
    \),
  \item The subset
    $\Vertex$ is an equivalence class of the equivalence relation
    \(  \npTransitiveReflexiveClosure{\EDGE \cup \Converse{\EDGE}} \). 
  \end{enumerate}
  \label{pr:connected_component}
\end{proposition}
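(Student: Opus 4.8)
The plan is to prove the single statement that, for \emph{every} binary relation $\EDGE$ on $\VERTEX$, the connected components of the Alexandrov space $\np{\VERTEX,\TopologyLower{\EDGE}}$ are exactly the equivalence classes of the equivalence relation $\equiv \;=\; \npTransitiveReflexiveClosure{\EDGE \cup \Converse{\EDGE}}$ (reflexive, symmetric and transitive, being the reflexive--transitive closure of the symmetric relation $\EDGE\cup\Converse{\EDGE}$). Since $\equiv$ is unchanged when $\EDGE$ is replaced by $\Converse{\EDGE}$, and also when $\EDGE$ is replaced by $\EDGE\cup\Converse{\EDGE}$, this one fact applied three times yields the equivalence of Items~1, 2, 4 and, applied to $\EDGE\cup\Converse{\EDGE}$, of Item~3; the displayed identities between topological spaces in Items~2 and~3 are just $\TopologyLower{\EDGE}'=\TopologyLower{\Converse{\EDGE}}$ from Proposition~\ref{pr:TopologyLowerEDGE} together with $\TopologyLower{\EDGE_1\cup\EDGE_2}=\TopologyLower{\EDGE_1}\cap\TopologyLower{\EDGE_2}$ from~\eqref{eq:TopologyLower_EDGE_1cupEDGE_2}.

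First I would identify the $\TopologyLower{\EDGE}$-clopen sets. Combining~\eqref{eq:TopologyLowerEDGE_OpenSet} and~\eqref{eq:TopologyLowerEDGE_ClosedSet} with the elementary identity $\EDGE\ClosedSet=\ClosedSet\Converse{\EDGE}$ (the foreset of $\ClosedSet$ under $\EDGE$ equals the afterset of $\ClosedSet$ under $\Converse{\EDGE}$, after renaming bound variables), a subset $\ClosedSet\subset\VERTEX$ is $\TopologyLower{\EDGE}$-clopen iff $\ClosedSet\EDGE\subset\ClosedSet$ and $\ClosedSet\Converse{\EDGE}\subset\ClosedSet$, i.e. iff $\ClosedSet\np{\EDGE\cup\Converse{\EDGE}}\subset\ClosedSet$, i.e. (applying~\eqref{eq:TopologyLowerEDGE_OpenSet} to the relation $\EDGE\cup\Converse{\EDGE}$) iff $\ClosedSet\,\npTransitiveReflexiveClosure{\EDGE \cup \Converse{\EDGE}}=\ClosedSet$; as $\equiv$ is an equivalence relation, this says precisely that $\ClosedSet$ is a union of $\equiv$-classes. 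In particular the clopen sets of $\TopologyLower{\EDGE}$, of $\TopologyLower{\Converse{\EDGE}}$ and of $\TopologyLower{\EDGE\cup\Converse{\EDGE}}$ all coincide, and every single $\equiv$-class is $\TopologyLower{\EDGE}$-clopen.

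Then I would prove the two inclusions between connected components and $\equiv$-classes. For ``$\subset$'': if a $\TopologyLower{\EDGE}$-connected set $\Vertex$ met two distinct $\equiv$-classes, then, taking one such class $\ConnectedComponent$, the set $\ConnectedComponent\cap\Vertex$ would be a nonempty proper relatively clopen subset of $\Vertex$, contradicting connectedness; hence every connected set, in particular every connected component, lies in a single $\equiv$-class. For ``$\supset$'': I would show each $\equiv$-class $\ConnectedComponent$ is $\TopologyLower{\EDGE}$-connected, which then forces $\ConnectedComponent$ to coincide with the connected component of any of its points (a connected set lies in its point's component, and that component lies in an $\equiv$-class by ``$\subset$''). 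Given $x,y\in\ConnectedComponent$, there is, by definition of $\npTransitiveReflexiveClosure{\EDGE \cup \Converse{\EDGE}}$, a finite chain $x=x_0,x_1,\dots,x_n=y$ with $x_i\EDGE x_{i+1}$ or $x_{i+1}\EDGE x_i$ for each $i$; each two-point set $\na{x_i,x_{i+1}}$ is $\TopologyLower{\EDGE}$-connected, since whenever $a\EDGE b$ any $\TopologyLower{\EDGE}$-open set $\OpenSet$ containing $a$ contains $b$ (because $\OpenSet\EDGE\subset\OpenSet$), so $\na{a}$ is not open in the subspace $\na{a,b}$, which rules out the only candidate splitting; and a chain of connected sets in which consecutive ones overlap has connected union, so $x$ and $y$ lie in the same connected component. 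Combining the two inclusions gives that the connected components of $\np{\VERTEX,\TopologyLower{\EDGE}}$ are exactly the $\equiv$-classes, completing the claim and hence the proposition.

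The main obstacle is the inclusion ``$\supset$'', that is, the connectedness of a single $\equiv$-class in the genuinely directed topology $\TopologyLower{\EDGE}$: being clopen does not by itself imply being connected, and the point that makes it work is that one edge $a\EDGE b$ already glues $a$ and $b$ into the same connected component, because $\TopologyLower{\EDGE}$-openness is ``forward closed'' along $\EDGE$. Everything else is bookkeeping with the characterizations of open and closed sets from Proposition~\ref{pr:TopologyLowerEDGE} and the stated topology identities.
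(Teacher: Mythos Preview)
Your proof is correct and takes a genuinely different route from the paper for the key inclusion. Both arguments first observe that the $\TopologyLower{\EDGE}$-clopen sets are exactly the $\equiv$-saturated sets, and both obtain ``connected component $\subset$ $\equiv$-class'' by the standard clopen-splitting argument. For the reverse inclusion, however, the paper argues globally: in an Alexandrov topology, every connected component is clopen (its complement is a union of closed connected components, hence closed), so the connected component of~$v$ contains the smallest clopen set through~$v$, which is the $\equiv$-class of~$v$. You instead argue locally and constructively: a single edge $a\EDGE b$ already makes $\{a,b\}$ connected (since $\{a\}$ fails to be relatively open), and then a finite $\np{\EDGE\cup\Converse{\EDGE}}$-chain stitches any two $\equiv$-related points into a common connected set. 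The paper's route is slicker and showcases the Alexandrov property directly; yours is more hands-on and makes explicit why directed edges suffice to connect points even in the asymmetric topology~$\TopologyLower{\EDGE}$. One small tightening: your chain argument literally shows that any two points of an $\equiv$-class~$K$ lie in a common connected subset of~$K$; to conclude that $K$ itself is connected you should note (as is standard) that $K=\bigcup_{y\in K}P_{x,y}$ is a union of connected sets all containing a fixed~$x$, or else combine directly with your ``$\subset$'' step as you sketch parenthetically.
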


\begin{proof}
  For any \( \vertex \in \VERTEX \), we denote by \( \hat\vertex \subset \VERTEX \) the
  $\TopologyLower{\EDGE}$-connected component of the topological space
  \( \np{\VERTEX, \TopologyLower{\EDGE}} \) that contains~$\vertex$, that is,
  the union of all $\TopologyLower{\EDGE}$-connected subsets of~$\VERTEX$
  that contain~$\vertex$, and we prove that
  \( \hat\vertex =\npTransitiveReflexiveClosure{\EDGE \cup
    \Converse{\EDGE}}\vertex \). For this purpose, we notice that,  
  by~\eqref{eq:TopologicalClosureEDGE}, we have that \( \npTransitiveReflexiveClosure{\EDGE \cup
    \Converse{\EDGE}}\vertex = \TopologicalClosure{\EDGE \cup
    \Converse{\EDGE}}{\vertex} \), 
  which is also the smallest $\TopologyLower{\EDGE}$-clopen set containing~$\vertex$
  by~\eqref{eq:TopologyLower_EDGE_1cupEDGE_2}.
  On the one hand, it is known, and can readily be shown, that \( \hat\vertex \subset \TopologicalClosure{\EDGE \cup
    \Converse{\EDGE}}{\vertex} \). 
  Indeed, if $\ClopenSet$ is any $\TopologyLower{\EDGE}$-clopen set containing~$\vertex$, then
  \( \hat\vertex \cap \ClopenSet \) and \( \hat\vertex \cap \Complementary{\ClopenSet} \) 
  are two disjoint $\TopologyLower{\EDGE}$-open sets whose union equal~\( \hat\vertex \).
  As this latter set is $\TopologyLower{\EDGE}$-connected, and as \( \vertex \in \hat\vertex \cap
  \ClopenSet \), we deduce that \( \hat\vertex \cap \ClopenSet = \hat\vertex \),
  hence that \( \hat\vertex \subset \ClopenSet \).
  On the other hand, it is well-known that the $\TopologyLower{\EDGE}$-connected component~\(
  \hat\vertex \) is closed. In an Alexandrov topology, \( \hat\vertex \) is also
  $\TopologyLower{\EDGE}$-open since the connected components form a partition of~$\VERTEX$, so that
  \( \Complementary{\hat\vertex} \) is a union of $\TopologyLower{\EDGE}$-closed sets, hence is
  closed. Therefore, \( \hat\vertex \) is a $\TopologyLower{\EDGE}$-clopen set, and we deduce that
  \( \TopologicalClosure{\EDGE \cup \Converse{\EDGE}}{\vertex}
  \subset \hat\vertex \).
  Therefore, we have obtained that \( \hat\vertex =
  \TopologicalClosure{\EDGE \cup \Converse{\EDGE}}{\vertex}=
  \npTransitiveReflexiveClosure{\EDGE \cup \Converse{\EDGE}}\vertex \). 

  Then, we easily deduce the equivalence between the four assertions.
  This ends the proof. 
\end{proof}
\section{Technical lemmas}
\label{Additional_Lemmas}

Here below, the relations~$\TopologicalRelation$ and $\ConverseTopologicalRelation$
have been introduced in~\eqref{eq:TopologicalRelation}
and \eqref{eq:ConverseTopologicalRelation}.
The following lemma is proved in~\cite{Chancelier-De-Lara-Heymann-2021}.

\begin{lemma}[\cite{Chancelier-De-Lara-Heymann-2021}]
  We have that
  \begin{subequations}
    \begin{align}
      \ConverseConditionalAncestor \ConditionalAncestor
      &=
        {\Delta 
        \cup
        \Delta_{\Complementary{\AgentSubsetW}} \ConditionalAscendent 
        \cup \ConverseConditionalAscendent \Delta_{\Complementary{\AgentSubsetW}}
        \cup \ConditionalCommonCause
        }
        \eqfinv
        \label{eq:AA}
      \\
      \Cousinhood \ConverseConditionalAncestor \ConditionalAncestor
      &=
        \Cousinhood \bp{
        \Delta \cup
        \ConverseConditionalAscendent
        \Delta_{{\Complementary{\AgentSubsetW}}}
        \cup \ConditionalCommonCause}
        \eqfinv
        \label{eq:CEmEsubset}
     \\
      \Cousinhood \ConverseConditionalAncestor \ConditionalAncestor \Cousinhood
      &= \Cousinhood
        \eqfinv
        \label{eq:gammaAAgamma}
      \\
      \Cousinhood \ConverseConditionalAncestor \ConditionalAncestor
      \Delta_{\Complementary{\AgentSubsetW}}
      &= \Cousinhood \bp{
        \ConverseConditionalAscendent 
        \cup \ConditionalCommonCause}
        \Delta_{\Complementary{\AgentSubsetW}}
        \eqfinv
        \label{eq:gammaAADelta}
      \\
      \Delta_{\Complementary{\AgentSubsetW}} \ConverseConditionalAncestor \ConditionalAncestor    \Cousinhood 
      &=
        \Delta_{\Complementary{\AgentSubsetW}}
        \bp{
        \ConditionalAscendent 
        \cup \ConditionalCommonCause}
        \Cousinhood
        \eqfinp
        \label{eq:DeltaAAgamma}
     \end{align}
  \end{subequations}
\end{lemma}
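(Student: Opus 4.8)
The plan is to derive the five identities in turn by elementary relation algebra, resting on three preliminary observations. First, from \( \ParentalPrecedence = \Delta_{\Complementary{\AgentSubsetW}}\Precedence \) and \( \ConditionalAscendent = \Precedence\ConditionalAncestor \) (by~\eqref{eq:conditional_parental_relation} and~\eqref{eq:conditional_ascendent_relation}) one gets \( \Delta_{\Complementary{\AgentSubsetW}}\ConditionalAscendent = \ParentalPrecedence\ConditionalAncestor = \TransitiveClosureParentalPrecedence \) and, applying the converse, \( \ConverseConditionalAscendent\Delta_{\Complementary{\AgentSubsetW}} = \TransitiveClosureConverseParentalPrecedence \); moreover \( \ConditionalCommonCause = \TransitiveClosureConverseParentalPrecedence\,\TransitiveClosureParentalPrecedence \) by~\eqref{eq:common_cause}, while \( \ConditionalAncestor = \Delta \cup \TransitiveClosureParentalPrecedence \), \( \ConverseConditionalAncestor = \Delta \cup \TransitiveClosureConverseParentalPrecedence \) and \( \ConverseConditionalAncestor = \npConverse{\ConditionalAncestor} \). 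Second, the subdiagonal relations annihilate one another, \( \Delta_{\AgentSubsetW}\Delta_{\Complementary{\AgentSubsetW}} = \Delta_{\Complementary{\AgentSubsetW}}\Delta_{\AgentSubsetW} = \emptyset \); and, since by~\eqref{eq:Cousinhood} every power of \( \Delta_{\AgentSubsetW}\ConditionalCommonCause\Delta_{\AgentSubsetW} \) begins and ends with \( \Delta_{\AgentSubsetW} \), one has \( \Cousinhood = \Delta_{\AgentSubsetW}\Cousinhood = \Cousinhood\Delta_{\AgentSubsetW} \), hence \( \Cousinhood\Delta_{\Complementary{\AgentSubsetW}} = \Delta_{\Complementary{\AgentSubsetW}}\Cousinhood = \emptyset \). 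Third, \( \Cousinhood \) is a partial equivalence relation and therefore idempotent for composition, \( \Cousinhood\Cousinhood = \Cousinhood \).

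With these facts in hand, \eqref{eq:AA} follows by expanding \( \ConverseConditionalAncestor\ConditionalAncestor = \np{\Delta \cup \TransitiveClosureConverseParentalPrecedence}\,\np{\Delta \cup \TransitiveClosureParentalPrecedence} = \Delta \cup \TransitiveClosureParentalPrecedence \cup \TransitiveClosureConverseParentalPrecedence \cup \TransitiveClosureConverseParentalPrecedence\,\TransitiveClosureParentalPrecedence \) and substituting the rewritings of the first observation. Equation~\eqref{eq:CEmEsubset} is then obtained by left-composing \eqref{eq:AA} with \( \Cousinhood \) and distributing: the summand \( \Cousinhood\Delta_{\Complementary{\AgentSubsetW}}\ConditionalAscendent = \np{\Cousinhood\Delta_{\Complementary{\AgentSubsetW}}}\ConditionalAscendent \) vanishes, which is precisely the term that drops. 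For~\eqref{eq:gammaAADelta} one right-composes \eqref{eq:CEmEsubset} with \( \Delta_{\Complementary{\AgentSubsetW}} \): now \( \Cousinhood\Delta\Delta_{\Complementary{\AgentSubsetW}} = \Cousinhood\Delta_{\Complementary{\AgentSubsetW}} = \emptyset \) removes the first summand, while \( \Delta_{\Complementary{\AgentSubsetW}}\Delta_{\Complementary{\AgentSubsetW}} = \Delta_{\Complementary{\AgentSubsetW}} \) leaves the other two unchanged. Finally, \eqref{eq:DeltaAAgamma} is the converse of \eqref{eq:gammaAADelta}: taking converses on both sides and using that \( \Cousinhood \), \( \Delta_{\Complementary{\AgentSubsetW}} \) and \( \ConditionalCommonCause \) are symmetric, that \( \npConverse{\ConditionalAncestor} = \ConverseConditionalAncestor \), and that \( \npConverse{\ConverseConditionalAscendent} = \ConditionalAscendent \), turns one identity into the other.

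The only step that needs more than mere distribution is~\eqref{eq:gammaAAgamma}. Right-composing \eqref{eq:CEmEsubset} with \( \Cousinhood \), the middle summand carries a factor \( \Delta_{\Complementary{\AgentSubsetW}}\Cousinhood = \emptyset \) and disappears, leaving \( \Cousinhood\ConverseConditionalAncestor\ConditionalAncestor\Cousinhood = \Cousinhood\Delta\Cousinhood \cup \Cousinhood\ConditionalCommonCause\Cousinhood \). The inclusion \( \supset \) is immediate, since \( \Delta \subset \ConditionalAncestor \cap \ConverseConditionalAncestor \) and \( \Cousinhood\Delta\Cousinhood = \Cousinhood \) by idempotence. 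For \( \subset \), write \( \Cousinhood\ConditionalCommonCause\Cousinhood = \Cousinhood\Delta_{\AgentSubsetW}\ConditionalCommonCause\Delta_{\AgentSubsetW}\Cousinhood \) using \( \Cousinhood = \Cousinhood\Delta_{\AgentSubsetW} = \Delta_{\AgentSubsetW}\Cousinhood \); since \( \Delta_{\AgentSubsetW}\ConditionalCommonCause\Delta_{\AgentSubsetW} \) is contained in the transitive closure appearing in~\eqref{eq:Cousinhood}, hence in \( \Cousinhood \), one gets \( \Cousinhood\ConditionalCommonCause\Cousinhood \subset \Cousinhood\Cousinhood\Cousinhood = \Cousinhood \), whence \( \subset \), and equality. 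I expect this to be the main obstacle, though a mild one: it only requires unwinding the transitive-closure definition of \( \Cousinhood \) once and invoking its partial-equivalence structure. Everything else is bookkeeping with composition, distribution of composition over unions, and the annihilation identity \( \Delta_{\AgentSubsetW}\Delta_{\Complementary{\AgentSubsetW}} = \emptyset \).
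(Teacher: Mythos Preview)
Your argument is correct: the three preliminary observations are sound (in particular, \( \Cousinhood=\Delta_{\AgentSubsetW}\Cousinhood=\Cousinhood\Delta_{\AgentSubsetW} \) and the idempotence \( \Cousinhood\Cousinhood=\Cousinhood \) both follow from the explicit form~\eqref{eq:Cousinhood}), and the five identities then fall out exactly as you describe by distribution, the annihilation \( \Cousinhood\Delta_{\Complementary{\AgentSubsetW}}=\emptyset \), and one application of \( \Delta_{\AgentSubsetW}\ConditionalCommonCause\Delta_{\AgentSubsetW}\subset\Cousinhood \) for~\eqref{eq:gammaAAgamma}.

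As for comparison: the present paper does not prove this lemma at all --- it is stated with the attribution ``proved in~\cite{Chancelier-De-Lara-Heymann-2021}'' and used as a black box in the subsequent arguments. Your self-contained derivation is therefore a genuine addition relative to what appears here; whether it matches the companion paper's proof cannot be judged from this source alone, but the relation-algebra route you take is the natural one and there is no visible gap.
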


\begin{lemma}
  We have that
  \begin{equation}
    \begin{split}
  \Delta_{\Complementary{\AgentSubsetW}}
      \bp{ \Delta \cup \np{\ConditionalAscendent \cup \ConverseConditionalActiveTwo}\Cousinhood}
      \ConverseConditionalAncestor\ConditionalAncestor \Cousinhood \ConverseConditionalAncestor\ConditionalAncestor 
      \bp{ \Delta \cup\Cousinhood \np{\ConverseConditionalAscendent \cup \ConverseConditionalActiveTwo}}
      \Delta_{\Complementary{\AgentSubsetW}}
      \\ =
     \Delta_{\Complementary{\AgentSubsetW}}
      { \np{\ConditionalAscendent \cup \ConverseConditionalActiveTwo}\Cousinhood}
      { \np{\ConverseConditionalAscendent \cup \ConverseConditionalActiveTwo}}
      \Delta_{\Complementary{\AgentSubsetW}}
    \end{split}
         \label{eq:gamma-rel-lambda-1}
  \end{equation}
\end{lemma}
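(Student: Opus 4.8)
The plan is to prove~\eqref{eq:gamma-rel-lambda-1} by pure composition-of-relations bookkeeping, using three of the identities from the cited Lemma in Appendix~\ref{Additional_Lemmas}: the "middle collapse" $\Cousinhood\ConverseConditionalAncestor\ConditionalAncestor\Cousinhood=\Cousinhood$ in~\eqref{eq:gammaAAgamma}, and the two "boundary" identities $\Delta_{\Complementary{\AgentSubsetW}}\ConverseConditionalAncestor\ConditionalAncestor\Cousinhood=\Delta_{\Complementary{\AgentSubsetW}}\bp{\ConditionalAscendent\cup\ConditionalCommonCause}\Cousinhood$ in~\eqref{eq:DeltaAAgamma} and $\Cousinhood\ConverseConditionalAncestor\ConditionalAncestor\Delta_{\Complementary{\AgentSubsetW}}=\Cousinhood\bp{\ConverseConditionalAscendent\cup\ConditionalCommonCause}\Delta_{\Complementary{\AgentSubsetW}}$ in~\eqref{eq:gammaAADelta}. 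Before starting I would record that $\ConditionalActiveTwo=\ConverseConditionalActiveTwo=\ConditionalCommonCause$ is symmetric and that $\Cousinhood$, $\Delta_{\Complementary{\AgentSubsetW}}$ and $\ConverseConditionalAncestor\ConditionalAncestor=\npConverse{(\ConverseConditionalAncestor\ConditionalAncestor)}$ are symmetric; in particular the left factor of~\eqref{eq:gamma-rel-lambda-1} is $\ConverseTopologicalRelation$ and the right one is $\TopologicalRelation$, and the occurrences $\bp{\ConditionalAscendent\cup\ConverseConditionalActiveTwo}\Cousinhood$ and $\Cousinhood\bp{\ConverseConditionalAscendent\cup\ConverseConditionalActiveTwo}$ in the statement literally match the right-hand sides of~\eqref{eq:DeltaAAgamma} and~\eqref{eq:gammaAADelta}.

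\emph{Collapsing the two outer factors.} The first step is to show $\Delta_{\Complementary{\AgentSubsetW}}\bp{\Delta\cup\np{\ConditionalAscendent\cup\ConverseConditionalActiveTwo}\Cousinhood}\ConverseConditionalAncestor\ConditionalAncestor\Cousinhood=\Delta_{\Complementary{\AgentSubsetW}}\ConverseConditionalAncestor\ConditionalAncestor\Cousinhood$: distributing the union, the left-hand side is $\Delta_{\Complementary{\AgentSubsetW}}\ConverseConditionalAncestor\ConditionalAncestor\Cousinhood\cup\Delta_{\Complementary{\AgentSubsetW}}\bp{\ConditionalAscendent\cup\ConditionalCommonCause}\bp{\Cousinhood\ConverseConditionalAncestor\ConditionalAncestor\Cousinhood}$, and by~\eqref{eq:gammaAAgamma} the second term is $\Delta_{\Complementary{\AgentSubsetW}}\bp{\ConditionalAscendent\cup\ConditionalCommonCause}\Cousinhood$, which by~\eqref{eq:DeltaAAgamma} coincides with the first. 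Since in~\eqref{eq:gamma-rel-lambda-1} the word $\ConverseConditionalAncestor\ConditionalAncestor$ just after the left factor is immediately followed by $\Cousinhood$, this identity lets me replace the whole prefix, so that the left-hand side of~\eqref{eq:gamma-rel-lambda-1} equals $\Delta_{\Complementary{\AgentSubsetW}}\ConverseConditionalAncestor\ConditionalAncestor\Cousinhood\ConverseConditionalAncestor\ConditionalAncestor\bp{\Delta\cup\Cousinhood\np{\ConverseConditionalAscendent\cup\ConverseConditionalActiveTwo}}\Delta_{\Complementary{\AgentSubsetW}}$. Symmetrically — either directly in the same way, using~\eqref{eq:gammaAAgamma} and~\eqref{eq:gammaAADelta}, or by applying $(\cdot)^{-1}$ to the previous identity and using $\npConverse{\ConditionalAncestor}=\ConverseConditionalAncestor$ together with the symmetries above — one gets $\Cousinhood\ConverseConditionalAncestor\ConditionalAncestor\bp{\Delta\cup\Cousinhood\np{\ConverseConditionalAscendent\cup\ConverseConditionalActiveTwo}}\Delta_{\Complementary{\AgentSubsetW}}=\Cousinhood\ConverseConditionalAncestor\ConditionalAncestor\Delta_{\Complementary{\AgentSubsetW}}$, and since this sub-word is now preceded by $\Cousinhood$, substituting yields $\Delta_{\Complementary{\AgentSubsetW}}\ConverseConditionalAncestor\ConditionalAncestor\Cousinhood\ConverseConditionalAncestor\ConditionalAncestor\Delta_{\Complementary{\AgentSubsetW}}$ for the left-hand side of~\eqref{eq:gamma-rel-lambda-1}.

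\emph{Finishing with the boundary identities.} It then remains to apply~\eqref{eq:DeltaAAgamma} to the prefix $\Delta_{\Complementary{\AgentSubsetW}}\ConverseConditionalAncestor\ConditionalAncestor\Cousinhood$ and~\eqref{eq:gammaAADelta} to the suffix $\Cousinhood\ConverseConditionalAncestor\ConditionalAncestor\Delta_{\Complementary{\AgentSubsetW}}$, which turns $\Delta_{\Complementary{\AgentSubsetW}}\ConverseConditionalAncestor\ConditionalAncestor\Cousinhood\ConverseConditionalAncestor\ConditionalAncestor\Delta_{\Complementary{\AgentSubsetW}}$ into $\Delta_{\Complementary{\AgentSubsetW}}\bp{\ConditionalAscendent\cup\ConditionalCommonCause}\Cousinhood\bp{\ConverseConditionalAscendent\cup\ConditionalCommonCause}\Delta_{\Complementary{\AgentSubsetW}}$; since $\ConverseConditionalActiveTwo=\ConditionalCommonCause$, this is exactly the right-hand side of~\eqref{eq:gamma-rel-lambda-1}, completing the proof.

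I do not expect a genuine obstacle here: the whole argument is formal manipulation of composition and union of binary relations. The two points that require a little care are (i) keeping track of the single $\Cousinhood$ sitting in the middle of the left-hand side of~\eqref{eq:gamma-rel-lambda-1} — it is precisely what makes~\eqref{eq:gammaAAgamma} applicable once the two outer binary unions are distributed — and (ii) identifying, at the outset, every occurrence of $\ConditionalActiveTwo$ and $\ConverseConditionalActiveTwo$ with $\ConditionalCommonCause$, so that the five forms of the common-cause/cousinhood relations appearing in the cited Lemma line up with those in the statement. A slightly more computational alternative would be to distribute both outer unions into four terms and check each of them separately reduces to the common right-hand side via~\eqref{eq:gammaAAgamma}, \eqref{eq:DeltaAAgamma} and~\eqref{eq:gammaAADelta}; the route above just avoids that case split.
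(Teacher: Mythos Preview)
Your proof is correct and relies on exactly the same three identities~\eqref{eq:gammaAAgamma}, \eqref{eq:gammaAADelta}, \eqref{eq:DeltaAAgamma} as the paper's proof. The paper takes the route you mention at the end as the ``more computational alternative'': it distributes both outer unions at once into four terms and simplifies each via~\eqref{eq:gammaAAgamma}, \eqref{eq:gammaAADelta}, \eqref{eq:DeltaAAgamma}, whereas you first absorb each outer factor into the adjacent $\ConverseConditionalAncestor\ConditionalAncestor\Cousinhood$ block and only then apply~\eqref{eq:DeltaAAgamma} and~\eqref{eq:gammaAADelta} to the remaining core $\Delta_{\Complementary{\AgentSubsetW}}\ConverseConditionalAncestor\ConditionalAncestor\Cousinhood\ConverseConditionalAncestor\ConditionalAncestor\Delta_{\Complementary{\AgentSubsetW}}$. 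This is a tidy reorganization of the same argument rather than a different approach.
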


\begin{proof}
 We have that 
  \begin{align*}
    \Delta_{\Complementary{\AgentSubsetW}}
    &
      \bp{ \Delta \cup \np{\ConditionalAscendent \cup \ConverseConditionalActiveTwo}\Cousinhood}
      \ConverseConditionalAncestor\ConditionalAncestor \Cousinhood \ConverseConditionalAncestor\ConditionalAncestor 
      \bp{ \Delta \cup\Cousinhood \np{\ConverseConditionalAscendent \cup \ConverseConditionalActiveTwo}}
      \Delta_{\Complementary{\AgentSubsetW}}
      \nonumber 
    \\
    &= 
      \Delta_{\Complementary{\AgentSubsetW}}
      \ConverseConditionalAncestor\ConditionalAncestor \Cousinhood \ConverseConditionalAncestor\ConditionalAncestor 
      \Delta_{\Complementary{\AgentSubsetW}}
      \nonumber \tag{by developing}
    \\
    &\hspace{0.5cm}\cup 
      \Delta_{\Complementary{\AgentSubsetW}}
      \ConverseConditionalAncestor\ConditionalAncestor \Cousinhood \ConverseConditionalAncestor\ConditionalAncestor 
      \bp{ \Cousinhood \np{\ConverseConditionalAscendent \cup \ConverseConditionalActiveTwo}}
      \Delta_{\Complementary{\AgentSubsetW}}
      \nonumber 
    \\
    &\hspace{0.5cm}\cup
      \Delta_{\Complementary{\AgentSubsetW}}
      \bp{ \np{\ConditionalAscendent \cup \ConverseConditionalActiveTwo}\Cousinhood}
      \ConverseConditionalAncestor\ConditionalAncestor \Cousinhood \ConverseConditionalAncestor\ConditionalAncestor 
      \Delta_{\Complementary{\AgentSubsetW}}
      \nonumber 
    \\
    &\hspace{0.5cm}\cup
      \Delta_{\Complementary{\AgentSubsetW}}
      \bp{ \np{\ConditionalAscendent \cup \ConverseConditionalActiveTwo}\Cousinhood}
      \ConverseConditionalAncestor\ConditionalAncestor \Cousinhood \ConverseConditionalAncestor\ConditionalAncestor 
      \bp{ \Cousinhood \np{\ConverseConditionalAscendent \cup \ConverseConditionalActiveTwo}}
      \Delta_{\Complementary{\AgentSubsetW}}
      \nonumber 
    \\
    &= 
      \Delta_{\Complementary{\AgentSubsetW}}
      \ConverseConditionalAncestor\ConditionalAncestor \Cousinhood \ConverseConditionalAncestor\ConditionalAncestor 
      \Delta_{\Complementary{\AgentSubsetW}}
      \nonumber 
    \\
    &\hspace{0.5cm}\cup 
      \Delta_{\Complementary{\AgentSubsetW}}
      \ConverseConditionalAncestor\ConditionalAncestor \Cousinhood 
      { \np{\ConverseConditionalAscendent \cup \ConverseConditionalActiveTwo}}
      \Delta_{\Complementary{\AgentSubsetW}}
      \tag{
      as $\Cousinhood \ConverseConditionalAncestor \ConditionalAncestor \Cousinhood = \Cousinhood$
      by~\eqref{eq:gammaAAgamma}
      }
    \\
    &\hspace{0.5cm}\cup
      \Delta_{\Complementary{\AgentSubsetW}}
      { \np{\ConditionalAscendent \cup \ConverseConditionalActiveTwo}\Cousinhood}
      \ConverseConditionalAncestor\ConditionalAncestor 
      \Delta_{\Complementary{\AgentSubsetW}}
      \tag{also by~\eqref{eq:gammaAAgamma}
      }
    \\
    &\hspace{0.5cm}\cup
      \Delta_{\Complementary{\AgentSubsetW}}
      { \np{\ConditionalAscendent \cup \ConverseConditionalActiveTwo}\Cousinhood}
      { \np{\ConverseConditionalAscendent \cup \ConverseConditionalActiveTwo}}
      \Delta_{\Complementary{\AgentSubsetW}}
      \tag{also by~\eqref{eq:gammaAAgamma} applied twice}
    \\
    &= 
      \Delta_{\Complementary{\AgentSubsetW}}\bp{\ConditionalAscendent \cup \ConditionalActiveTwo}
      \Cousinhood \bp{\ConverseConditionalAscendent \cup \ConditionalActiveTwo} \Delta_{\Complementary{\AgentSubsetW}}
      \tag{by~\eqref{eq:gammaAADelta} and~\eqref{eq:DeltaAAgamma}}
    \\
    &\hspace{0.5cm}\cup 
      \Delta_{\Complementary{\AgentSubsetW}}\bp{\ConditionalAscendent \cup \ConditionalActiveTwo}
      { \np{\ConverseConditionalAscendent \cup \ConverseConditionalActiveTwo}}
      \Delta_{\Complementary{\AgentSubsetW}}
      \tag{by~\eqref{eq:DeltaAAgamma}}
    \\
    &\hspace{0.5cm}\cup
      \Delta_{\Complementary{\AgentSubsetW}}
      \np{\ConditionalAscendent \cup \ConverseConditionalActiveTwo}
      \Cousinhood \bp{\ConverseConditionalAscendent \cup \ConditionalActiveTwo} \Delta_{\Complementary{\AgentSubsetW}}
      \tag{by~\eqref{eq:gammaAADelta}}
    \\
    &\hspace{0.5cm}\cup
      \Delta_{\Complementary{\AgentSubsetW}}
      { \np{\ConditionalAscendent \cup \ConverseConditionalActiveTwo}\Cousinhood}
      { \np{\ConverseConditionalAscendent \cup \ConverseConditionalActiveTwo}}
      \Delta_{\Complementary{\AgentSubsetW}}
      \nonumber
    \\
    &= 
      \Delta_{\Complementary{\AgentSubsetW}}
      { \np{\ConditionalAscendent \cup \ConverseConditionalActiveTwo}\Cousinhood}
      { \np{\ConverseConditionalAscendent \cup \ConverseConditionalActiveTwo}}
      \Delta_{\Complementary{\AgentSubsetW}}
       \eqfinp
  \end{align*}
  
  This ends the proof.
\end{proof}

\begin{lemma}
  \label{le:equation_for_topological_equivalence}      
  We have that
  \begin{equation}
    \Delta_{\Complementary{\AgentSubsetW}}
    \ConverseTopologicalRelation\ConverseConditionalAncestor \ConditionalAncestor
    \TopologicalRelation\Delta_{\Complementary{\AgentSubsetW}}
    =
    \Delta_{\Complementary{\AgentSubsetW}}
    \ConditionalActiveNew
    \Delta_{\Complementary{\AgentSubsetW}}
    \eqfinp
    \label{eq:equation_for_topological_equivalence}      
  \end{equation}
\end{lemma}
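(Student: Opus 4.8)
The plan is to expand $\ConverseTopologicalRelation$ and $\TopologicalRelation$ via their definitions~\eqref{eq:ConverseTopologicalRelation} and~\eqref{eq:TopologicalRelation}, distribute the relation composition over the unions, and then collapse each of the resulting products using the identities of the immediately preceding Lemma (from~\cite{Chancelier-De-Lara-Heymann-2021}), so as to recover term by term the five-summand expansion of $\Delta_{\Complementary{\AgentSubsetW}}\ConditionalActiveNew\Delta_{\Complementary{\AgentSubsetW}}$ read off from~\eqref{eq:conditional_active_relation}. Throughout, I would use freely that $\ConditionalActiveTwo$, $\ConverseConditionalActiveTwo$ and $\ConditionalCommonCause$ all denote the same symmetric relation, that $\Delta_{\Complementary{\AgentSubsetW}}$ is idempotent, and that $\Delta_{\Complementary{\AgentSubsetW}}\Delta = \Delta_{\Complementary{\AgentSubsetW}} = \Delta\,\Delta_{\Complementary{\AgentSubsetW}}$.

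Concretely, writing $\ConverseTopologicalRelation = \Delta \cup \bp{\ConditionalAscendent \cup \ConditionalActiveTwo}\Cousinhood$ and $\TopologicalRelation = \Delta \cup \Cousinhood\bp{\ConverseConditionalAscendent \cup \ConverseConditionalActiveTwo}$ and distributing, the left-hand side of~\eqref{eq:equation_for_topological_equivalence} becomes
\begin{align*}
  \Delta_{\Complementary{\AgentSubsetW}}\ConverseTopologicalRelation\ConverseConditionalAncestor \ConditionalAncestor\TopologicalRelation\Delta_{\Complementary{\AgentSubsetW}}
  &= \Delta_{\Complementary{\AgentSubsetW}}\ConverseConditionalAncestor \ConditionalAncestor\Delta_{\Complementary{\AgentSubsetW}}
    \cup \Delta_{\Complementary{\AgentSubsetW}}\ConverseConditionalAncestor \ConditionalAncestor\Cousinhood\bp{\ConverseConditionalAscendent \cup \ConverseConditionalActiveTwo}\Delta_{\Complementary{\AgentSubsetW}} \\
  &\quad \cup \Delta_{\Complementary{\AgentSubsetW}}\bp{\ConditionalAscendent \cup \ConditionalActiveTwo}\Cousinhood\ConverseConditionalAncestor \ConditionalAncestor\Delta_{\Complementary{\AgentSubsetW}}
    \cup \Delta_{\Complementary{\AgentSubsetW}}\bp{\ConditionalAscendent \cup \ConditionalActiveTwo}\Cousinhood\ConverseConditionalAncestor \ConditionalAncestor\Cousinhood\bp{\ConverseConditionalAscendent \cup \ConverseConditionalActiveTwo}\Delta_{\Complementary{\AgentSubsetW}}
  \eqfinp
\end{align*}
The first summand, by~\eqref{eq:AA} together with the conventions above, equals $\Delta_{\Complementary{\AgentSubsetW}} \cup \Delta_{\Complementary{\AgentSubsetW}}\ConditionalAscendent\Delta_{\Complementary{\AgentSubsetW}} \cup \Delta_{\Complementary{\AgentSubsetW}}\ConverseConditionalAscendent\Delta_{\Complementary{\AgentSubsetW}} \cup \Delta_{\Complementary{\AgentSubsetW}}\ConditionalCommonCause\Delta_{\Complementary{\AgentSubsetW}}$, i.e.\ the first four summands of $\Delta_{\Complementary{\AgentSubsetW}}\ConditionalActiveNew\Delta_{\Complementary{\AgentSubsetW}}$. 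For the second summand, group $\Delta_{\Complementary{\AgentSubsetW}}\ConverseConditionalAncestor \ConditionalAncestor\Cousinhood$ and apply~\eqref{eq:DeltaAAgamma} to rewrite it as $\Delta_{\Complementary{\AgentSubsetW}}\bp{\ConditionalAscendent \cup \ConditionalCommonCause}\Cousinhood$, which yields exactly the fifth summand $\Delta_{\Complementary{\AgentSubsetW}}\bp{\ConditionalAscendent \cup \ConditionalActiveTwo}\Cousinhood\bp{\ConverseConditionalAscendent \cup \ConverseConditionalActiveTwo}\Delta_{\Complementary{\AgentSubsetW}}$. Symmetrically, the third summand gives the same fifth summand via~\eqref{eq:gammaAADelta} applied to $\Cousinhood\ConverseConditionalAncestor \ConditionalAncestor\Delta_{\Complementary{\AgentSubsetW}}$, and the fourth summand gives it via~\eqref{eq:gammaAAgamma}, which collapses the central block $\Cousinhood\ConverseConditionalAncestor \ConditionalAncestor\Cousinhood$ to $\Cousinhood$. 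Taking the union of the four reduced summands produces precisely $\Delta_{\Complementary{\AgentSubsetW}}\ConditionalActiveNew\Delta_{\Complementary{\AgentSubsetW}}$, which is the claim; since every step is an equality, both inclusions come at once.

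I do not expect a genuinely hard step here: the analytic content has been pushed into the preceding Lemma of~\cite{Chancelier-De-Lara-Heymann-2021}, and what remains is algebra of binary relations. The one place demanding care is the bookkeeping — invoking each one-sided identity~\eqref{eq:AA}, \eqref{eq:DeltaAAgamma}, \eqref{eq:gammaAADelta}, \eqref{eq:gammaAAgamma} with $\Delta_{\Complementary{\AgentSubsetW}}$ on the side where it actually applies, and consistently rewriting $\ConditionalCommonCause$ as $\ConditionalActiveTwo$ or $\ConverseConditionalActiveTwo$ — so that exactly the five nominal terms of $\ConditionalActiveNew$ appear, with no term missing and no spurious term introduced.
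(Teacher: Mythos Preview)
Your proposal is correct and follows essentially the same approach as the paper: expand $\ConverseTopologicalRelation$ and $\TopologicalRelation$, distribute into four summands, and collapse each via~\eqref{eq:AA}, \eqref{eq:DeltaAAgamma}, \eqref{eq:gammaAADelta}, \eqref{eq:gammaAAgamma} to recover the five terms of $\Delta_{\Complementary{\AgentSubsetW}}\ConditionalActiveNew\Delta_{\Complementary{\AgentSubsetW}}$. The only cosmetic difference is that the paper first simplifies $\ConverseTopologicalRelation\ConverseConditionalAncestor\ConditionalAncestor\TopologicalRelation$ (notably reducing the fourth summand via~\eqref{eq:gammaAAgamma}) before sandwiching with $\Delta_{\Complementary{\AgentSubsetW}}$, whereas you sandwich from the outset; the identities invoked and the resulting reductions are identical.
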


\begin{proof}
  First, we write 
  \begin{align*}
    \ConverseTopologicalRelation
    &\ConverseConditionalAncestor \ConditionalAncestor
      \TopologicalRelation
    \\
    &=
      \underbrace{\Bp{\Delta \cup \bp{\ConditionalAscendent \cup
      \ConditionalActiveTwo}\Cousinhood}}_{\ConverseTopologicalRelation
      \textrm{~by~\eqref{eq:ConverseTopologicalRelation}}}
      \ConverseConditionalAncestor \ConditionalAncestor
      \underbrace{\Bp{\Delta \cup \Cousinhood\bp{\ConverseConditionalAscendent \cup \ConverseConditionalActiveTwo}}}
      _{\TopologicalRelation       \textrm{~by~\eqref{eq:ConverseTopologicalRelation}}}
    \\
    &=\np{\ConverseConditionalAncestor \ConditionalAncestor}
      \cup
      \Bp{\bp{\ConditionalAscendent \cup \ConditionalActiveTwo}\Cousinhood
      \ConverseConditionalAncestor \ConditionalAncestor}
      \cup
      \Bp{
      \ConverseConditionalAncestor \ConditionalAncestor
      \Cousinhood\bp{\ConverseConditionalAscendent \cup \ConverseConditionalActiveTwo}
      }
      \nonumber
    \\
    &\hspace{1cm}
      \cup
      \Bp{
      \bp{\ConditionalAscendent \cup \ConditionalActiveTwo}
      \underbrace{\Cousinhood\ConverseConditionalAncestor\ConditionalAncestor\Cousinhood}%
      _{= \Cousinhood \textrm{ by~\eqref{eq:gammaAAgamma}}}
      \bp{\ConverseConditionalAscendent \cup \ConverseConditionalActiveTwo}
      }
      \tag{by developing} 
    \\
    &=\np{\ConverseConditionalAncestor \ConditionalAncestor}
      \cup
      \Bp{\bp{\ConditionalAscendent \cup \ConditionalActiveTwo}\Cousinhood
      \ConverseConditionalAncestor \ConditionalAncestor}
      \cup
      \Bp{
      \ConverseConditionalAncestor \ConditionalAncestor
      \Cousinhood\bp{\ConverseConditionalAscendent \cup \ConverseConditionalActiveTwo}
      }
      \nonumber
    \\
    &\hspace{1cm}
      \cup \ConditionalActiveThree
      \eqfinp
  \end{align*}
  Second, we obtain that 
  \begin{align*}
    \Delta_{\Complementary{\AgentSubsetW}}
    &
      \ConverseTopologicalRelation\ConverseConditionalAncestor \ConditionalAncestor  \TopologicalRelation
      \Delta_{\Complementary{\AgentSubsetW}}
      \nonumber
    \\
    &=\Bp{\Delta_{\Complementary{\AgentSubsetW}}\np{\ConverseConditionalAncestor \ConditionalAncestor}\Delta_{\Complementary{\AgentSubsetW}}}
      \cup
      \Bp{
      \Delta_{\Complementary{\AgentSubsetW}}
      \bp{\ConditionalAscendent \cup \ConditionalActiveTwo}
      \underbrace{\Cousinhood
      \ConverseConditionalAncestor \ConditionalAncestor\Delta_{\Complementary{\AgentSubsetW}}}_{=\Cousinhood \np{
      \ConverseConditionalAscendent 
      \cup \ConditionalActiveTwo}
      \Delta_{\Complementary{\AgentSubsetW}} \text{ by~\eqref{eq:gammaAADelta}}
      }}
      \nonumber 
    \\
    &\hspace{1cm}
      \cup
      \Bp{\underbrace{\Delta_{\Complementary{\AgentSubsetW}}
      \ConverseConditionalAncestor \ConditionalAncestor
      \Cousinhood
      }_{ =
      \Delta_{\Complementary{\AgentSubsetW}}
      \np{
      \ConditionalAscendent 
      \cup \ConditionalActiveTwo}
      \Cousinhood \text{ by~\eqref{eq:DeltaAAgamma}}
      }
      \bp{\ConverseConditionalAscendent \cup \ConverseConditionalActiveTwo}
      \Delta_{\Complementary{\AgentSubsetW}}    }\nonumber 
    \\
    &\hspace{1cm}
      \cup \Bp{\Delta_{\Complementary{\AgentSubsetW}} \ConditionalActiveThree\Delta_{\Complementary{\AgentSubsetW}}}
      \nonumber 
    \\
    &=
      \bp{\Delta_{\Complementary{\AgentSubsetW}}
      \np{\ConverseConditionalAncestor \ConditionalAncestor}
      \Delta_{\Complementary{\AgentSubsetW}}}
      \cup
      \bgp{\Delta_{\Complementary{\AgentSubsetW}}
      \ConditionalActiveThree
      \Delta_{\Complementary{\AgentSubsetW}}}
      \intertext{because the three last terms in the union are all equal}      
      \nonumber 
    &=
      \bp{\Delta_{\Complementary{\AgentSubsetW}}
      \np{
      \Delta 
      \cup
      \Delta_{\Complementary{\AgentSubsetW}} \ConditionalAscendent 
      \cup \ConverseConditionalAscendent \Delta_{\Complementary{\AgentSubsetW}}
      \cup \ConditionalActiveTwo
      }
      \Delta_{\Complementary{\AgentSubsetW}}}
      \nonumber
    \\
    &\hspace{1cm}
      \cup
      \bgp{\Delta_{\Complementary{\AgentSubsetW}}
      \ConditionalActiveThree
      \Delta_{\Complementary{\AgentSubsetW}}}
      \tag{by~\eqref{eq:AA}}
    \\
    &=
      \Delta_{\Complementary{\AgentSubsetW}}
      \bgp{
      \Delta 
      \cup \ConditionalAscendent 
      \cup \ConverseConditionalAscendent 
      \cup \ConditionalActiveTwo
      \cup
      \ConditionalActiveThree
      }
      \Delta_{\Complementary{\AgentSubsetW}}
    \\
    &=
      \Delta_{\Complementary{\AgentSubsetW}}
      \ConditionalActiveNew
      \Delta_{\Complementary{\AgentSubsetW}}
      \eqfinp
      \tag{by definition of $\ConditionalActive$ in~\eqref{eq:conditional_active_relation}}
  \end{align*}
  
  This ends the proof.
\end{proof}

\ifpreprint
\preprintstart 
In Lemma~\ref{le:equation_for_topological_equivalence}, it was proved that the two relations 
$\ConverseTopologicalRelation\ConverseConditionalAncestor \ConditionalAncestor\TopologicalRelation$ and 
$\ConditionalActiveNew$ coincide when restricted to the subset $\Complementary{\AgentSubsetW}$. 
More generally, we give in this last lemma the relationship between these two relations. 

\begin{lemma}
  We have that 
  \begin{equation}
    \bp{\ConverseTopologicalRelation\ConverseConditionalAncestor \ConditionalAncestor  \TopologicalRelation}
    \cup
    \ConverseTopologicalRelation 
    \ConverseConditionalAscendent \Delta_{\AgentSubsetW}
    \cup
    \Delta_{\AgentSubsetW}\ConditionalAscendent
    \TopologicalRelation 
    =
    \ConditionalActiveNew  \cup  \ConverseTopologicalRelation \cup \TopologicalRelation
    \eqfinp     
  \end{equation}
\end{lemma}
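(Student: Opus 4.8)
The plan is to prove the two inclusions separately. Throughout I would treat every relation that appears as a union of the ``atoms'' $\Delta$, $\ConditionalAscendent$, $\ConverseConditionalAscendent$, $\ConditionalActiveTwo$ and their pre-/post-compositions with $\Cousinhood$, and I would use a handful of elementary facts: $\Delta \subset \ConverseConditionalAncestor\ConditionalAncestor$ (from~\eqref{eq:AA}), $\Delta \subset \ConverseTopologicalRelation$ and $\Delta \subset \TopologicalRelation$ (from~\eqref{eq:ConverseTopologicalRelation} and~\eqref{eq:TopologicalRelation}); $\Delta = \Delta_{\AgentSubsetW} \cup \Delta_{\Complementary{\AgentSubsetW}}$; the symmetry $\ConditionalActiveTwo = \ConverseConditionalActiveTwo = \ConditionalCommonCause$; and, crucially, that since $\Cousinhood \subset \AgentSubsetW\times\AgentSubsetW$ and $\Delta_{\AgentSubsetW} \subset \Cousinhood$ (by~\eqref{eq:Cousinhood}), one has $\Cousinhood\Delta_{\Complementary{\AgentSubsetW}} = \Delta_{\Complementary{\AgentSubsetW}}\Cousinhood = \emptyset$ and $\Cousinhood^2 = \Cousinhood$ (the latter because a partial equivalence relation is reflexive on its domain). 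For brevity let $T_1 = \ConverseTopologicalRelation\ConverseConditionalAncestor \ConditionalAncestor \TopologicalRelation$, $T_2 = \ConverseTopologicalRelation\ConverseConditionalAscendent\Delta_{\AgentSubsetW}$, $T_3 = \Delta_{\AgentSubsetW}\ConditionalAscendent\TopologicalRelation$ denote the three summands on the left-hand side.

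For the inclusion ``$\supseteq$'', I would first note that $\Delta \subset \ConverseConditionalAncestor\ConditionalAncestor$ forces $T_1 \supseteq \ConverseTopologicalRelation\TopologicalRelation$, and a direct expansion using $\Cousinhood^2 = \Cousinhood$ gives $\ConverseTopologicalRelation\TopologicalRelation = \Delta \cup (\ConditionalAscendent \cup \ConditionalActiveTwo)\Cousinhood \cup \Cousinhood(\ConverseConditionalAscendent \cup \ConverseConditionalActiveTwo) \cup \ConditionalActiveThree$, so that $\ConverseTopologicalRelation$, $\TopologicalRelation$, $\Delta$ and the last summand $\ConditionalActiveThree$ of $\ConditionalActiveNew$ all lie in $T_1$. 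It then remains to fit $\ConditionalAscendent$, $\ConverseConditionalAscendent$, $\ConditionalActiveTwo$ into the left-hand side: $\ConditionalActiveTwo \subset \ConverseConditionalAncestor\ConditionalAncestor \subset T_1$ by~\eqref{eq:AA}; splitting $\ConditionalAscendent = \Delta_{\AgentSubsetW}\ConditionalAscendent \cup \Delta_{\Complementary{\AgentSubsetW}}\ConditionalAscendent$, the first piece sits in $T_3$ (since $\Delta \subset \TopologicalRelation$) and the second in $\ConverseConditionalAncestor\ConditionalAncestor \subset T_1$ by~\eqref{eq:AA}; symmetrically $\ConverseConditionalAscendent = \ConverseConditionalAscendent\Delta_{\AgentSubsetW} \cup \ConverseConditionalAscendent\Delta_{\Complementary{\AgentSubsetW}}$ lands in $T_2 \cup T_1$. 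Hence $\ConditionalActiveNew \cup \ConverseTopologicalRelation \cup \TopologicalRelation \subseteq T_1 \cup T_2 \cup T_3$.

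For the inclusion ``$\subseteq$'', the main input is the expansion of $T_1$ already carried out in the proof of Lemma~\ref{le:equation_for_topological_equivalence}, namely $T_1 = (\ConverseConditionalAncestor\ConditionalAncestor) \cup (\ConditionalAscendent \cup \ConditionalActiveTwo)\Cousinhood\ConverseConditionalAncestor\ConditionalAncestor \cup \ConverseConditionalAncestor\ConditionalAncestor\Cousinhood(\ConverseConditionalAscendent \cup \ConverseConditionalActiveTwo) \cup \ConditionalActiveThree$. The first summand is in $\ConditionalActiveNew$ by~\eqref{eq:AA}, the last by definition~\eqref{eq:conditional_active_relation}. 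For the two middle summands I would substitute $\Cousinhood\ConverseConditionalAncestor\ConditionalAncestor = \Cousinhood(\Delta \cup \ConverseConditionalAscendent\Delta_{\Complementary{\AgentSubsetW}} \cup \ConditionalCommonCause)$ from~\eqref{eq:CEmEsubset} and its converse $\ConverseConditionalAncestor\ConditionalAncestor\Cousinhood = (\Delta \cup \Delta_{\Complementary{\AgentSubsetW}}\ConditionalAscendent \cup \ConditionalCommonCause)\Cousinhood$, expand, and check that each resulting product falls into $\ConverseTopologicalRelation$, into $\TopologicalRelation$, or into $\ConditionalActiveThree \subset \ConditionalActiveNew$ --- here $\ConditionalActiveTwo = \ConverseConditionalActiveTwo$ and $\Delta_{\Complementary{\AgentSubsetW}}\ConditionalAscendent \subset \ConditionalAscendent$ are what make the products collapse. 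Finally $T_2$ and $T_3$ are handled by distributing $\ConverseTopologicalRelation = \Delta \cup (\ConditionalAscendent \cup \ConditionalActiveTwo)\Cousinhood$ and $\TopologicalRelation = \Delta \cup \Cousinhood(\ConverseConditionalAscendent \cup \ConverseConditionalActiveTwo)$: every summand lies in $\ConverseConditionalAscendent$, in $\ConditionalAscendent$, or in $\ConditionalActiveThree$, all inside $\ConditionalActiveNew$. Combining the two inclusions yields the claimed equality.

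The difficulty here is purely one of bookkeeping: the expansions of $T_1$, $T_2$, $T_3$ produce a dozen or so three- and fourfold products, and one must verify that not a single one of them escapes $\ConditionalActiveNew \cup \ConverseTopologicalRelation \cup \TopologicalRelation$. The two observations that keep this under control, and that are easy to overlook, are $\Cousinhood\Delta_{\Complementary{\AgentSubsetW}} = \Delta_{\Complementary{\AgentSubsetW}}\Cousinhood = \emptyset$ and $\Cousinhood^2 = \Cousinhood$; together with the symmetry of $\ConditionalActiveTwo$ and the already-available identities~\eqref{eq:AA},~\eqref{eq:CEmEsubset} and~\eqref{eq:gammaAAgamma}, they annihilate or absorb every unwanted term.
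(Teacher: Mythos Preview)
Your argument is correct and relies on the same algebraic identities as the paper (\eqref{eq:AA}, \eqref{eq:CEmEsubset}, \eqref{eq:gammaAAgamma}, together with $\Cousinhood^2=\Cousinhood$), but the organization differs. The paper does not split into two inclusions; instead it maintains a chain of \emph{equalities}: after expanding $T_1$ via \eqref{eq:CEmEsubset} and \eqref{eq:gammaAAgamma} into
\[
(\ConverseConditionalAncestor\ConditionalAncestor)\ \cup\ \RelTheta\Cousinhood\bp{\Delta\cup\ConverseConditionalAscendent\Delta_{\Complementary{\AgentSubsetW}}\cup\ConditionalActiveTwo}\ \cup\ \ConditionalActiveThreeTheta\ \cup\ \bp{\Delta\cup\Delta_{\Complementary{\AgentSubsetW}}\ConditionalAscendent\cup\ConditionalActiveTwo}\Cousinhood\ConverseRelTheta
\]
(where $\RelTheta=\ConditionalAscendent\cup\ConditionalActiveTwo$), it takes the union on \emph{both sides} first with $\RelTheta\Cousinhood\ConverseConditionalAscendent\Delta_{\AgentSubsetW}\cup\Delta_{\AgentSubsetW}\ConditionalAscendent\Cousinhood\ConverseRelTheta$ and then with $\ConverseConditionalAscendent\Delta_{\AgentSubsetW}\cup\Delta_{\AgentSubsetW}\ConditionalAscendent$ --- exactly the four summands that together constitute $T_2\cup T_3$. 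Each such union turns a $\Delta_{\Complementary{\AgentSubsetW}}$-restricted term into its unrestricted counterpart, so the right-hand side collapses directly to $\ConditionalActiveNew\cup\ConverseTopologicalRelation\cup\TopologicalRelation$. This trick saves the separate $\supseteq$ verification and the dozen-term case analysis you describe; on the other hand your two-inclusion approach makes it more transparent which summand of the left-hand side contributes which atom of $\ConditionalActiveNew$. (Incidentally, the observation $\Cousinhood\Delta_{\Complementary{\AgentSubsetW}}=\emptyset$ that you flag is not actually needed in either argument; the work is done entirely by $\Cousinhood^2=\Cousinhood$ and the cited identities.)
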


\begin{proof}
  We use the notation
  $\RelTheta=\ConditionalAscendent \cup \ConditionalActiveTwo$ and
  $\ConverseRelTheta = \ConverseConditionalAscendent \cup \ConditionalActiveTwo = \npConverse{\RelTheta}$
  to simplify the reading of the proof, so that
  \begin{align*}
    \ConverseTopologicalRelation
    & \ConverseConditionalAncestor\ConditionalAncestor\TopologicalRelation
    \\
    &=
      \np{\Delta \cup
      \bp{\ConditionalAscendent \cup \ConditionalActiveTwo}\Cousinhood}
      \ConverseConditionalAncestor\ConditionalAncestor
      \np{\Delta \cup \Cousinhood\bp{\ConverseConditionalAscendent \cup \ConverseConditionalActiveTwo}}
      \tag{by definition of the relations~$\TopologicalRelation$ and $\ConverseTopologicalRelation$
      in~\eqref{eq:TopologicalRelation} and \eqref{eq:ConverseTopologicalRelation}}
    \\
    &=
      \np{\Delta \cup \RelTheta\Cousinhood}
      \ConverseConditionalAncestor\ConditionalAncestor
      \np{\Delta \cup \Cousinhood\ConverseRelTheta}
      \tag{using the just defined $\RelTheta=\ConditionalAscendent \cup \ConditionalActiveTwo$
      and 
      $\ConverseRelTheta = \ConverseConditionalAscendent \cup \ConditionalActiveTwo$}
    \\
    &=
      \np{\ConverseConditionalAncestor \ConditionalAncestor}
      \cup
      \np{\RelTheta\Cousinhood
      \ConverseConditionalAncestor \ConditionalAncestor}
      \cup
      \ConditionalActiveThreeTheta
      \cup
      \np{
      \ConverseConditionalAncestor \ConditionalAncestor
      \Cousinhood\ConverseRelTheta
      }
      \tag{by developing and by~\eqref{eq:gammaAAgamma} giving
      $\Cousinhood \ConverseConditionalAncestor \ConditionalAncestor \Cousinhood= \Cousinhood$
      }
    \\
    &=
      \np{\ConverseConditionalAncestor \ConditionalAncestor}
      \cup
      {\RelTheta\Cousinhood}
      \bp{\Delta 
      \cup \ConverseConditionalAscendent
      \Delta_{\Complementary{\AgentSubsetW}}
      \cup \ConditionalActiveTwo}
      \cup
      \ConditionalActiveThreeTheta
      \cup
      \bp{\Delta 
      \cup \Delta_{\Complementary{\AgentSubsetW}} \ConditionalAscendent
      \cup \ConditionalActiveTwo }
      {\Cousinhood\ConverseRelTheta}
  \end{align*}
  as by~\eqref{eq:CEmEsubset} ${\RelTheta\Cousinhood\ConverseConditionalAncestor \ConditionalAncestor}
  = {\RelTheta\Cousinhood}
  \bp{\Delta 
    \cup \ConverseConditionalAscendent
    \Delta_{\Complementary{\AgentSubsetW}}
    \cup \ConditionalActiveTwo}$ and by symmetry for the last term. Thus, 
  using the last equality and performing a union with
  ${\RelTheta\Cousinhood} \ConverseConditionalAscendent \Delta_{\AgentSubsetW}
  \cup \Delta_{\AgentSubsetW}\ConditionalAscendent {\Cousinhood\ConverseRelTheta}$
  on both sides of the equality, we obtain
  \begin{align*}
    &\bp{\ConverseTopologicalRelation\ConverseConditionalAncestor \ConditionalAncestor  \TopologicalRelation}
      \cup {\RelTheta\Cousinhood} \ConverseConditionalAscendent \Delta_{\AgentSubsetW}
      \cup \Delta_{\AgentSubsetW}\ConditionalAscendent {\Cousinhood\ConverseRelTheta} 
    \\
    &\hspace{1cm}=
      \np{\ConverseConditionalAncestor \ConditionalAncestor}
      \cup
      {\RelTheta\Cousinhood}
      \bp{\Delta 
      \cup \ConverseConditionalAscendent
      \cup \ConditionalActiveTwo}
      \cup
      \ConditionalActiveThreeTheta
      \cup
      \bp{\Delta 
      \cup \ConditionalAscendent
      \cup \ConditionalActiveTwo }
      {\Cousinhood\ConverseRelTheta}
    \\
    &\hspace{1cm}=
      \np{\ConverseConditionalAncestor \ConditionalAncestor}
      \cup
      {\RelTheta\Cousinhood}
      \bp{\Delta \cup \ConverseRelTheta}
      \cup
      \ConditionalActiveThreeTheta
      \cup
      \bp{\Delta \cup \RelTheta}
      {\Cousinhood\ConverseRelTheta}
    \\
    &\hspace{1cm}=
      \np{\ConverseConditionalAncestor \ConditionalAncestor}
      \cup
      {\RelTheta\Cousinhood}
      \cup
      \ConditionalActiveThreeTheta
      \cup
      {\Cousinhood\ConverseRelTheta}
    \\
    &\hspace{1cm}=
      \np{\Delta
      \cup
      \Delta_{\Complementary{\AgentSubsetW}}     \ConditionalAscendent 
      \cup \ConverseConditionalAscendent\Delta_{\Complementary{\AgentSubsetW}}
      \cup  \ConditionalActiveTwo
      }
      \cup
      {\RelTheta\Cousinhood}
      \cup
      \ConditionalActiveThreeTheta
      \cup
      {\Cousinhood\ConverseRelTheta}
      \tag{by~\eqref{eq:AA}}
    \\
    &\hspace{1cm}=
      \np{\Delta
      \cup
      \Delta_{\Complementary{\AgentSubsetW}}     \ConditionalAscendent 
      \cup \ConverseConditionalAscendent\Delta_{\Complementary{\AgentSubsetW}}
      \cup  \ConditionalActiveTwo
      }
      \cup
      \underbrace{\np{\Delta \cup {\RelTheta\Cousinhood}}}_{=\ConverseTopologicalRelation}
      \cup
      \ConditionalActiveThreeTheta
      \cup
      \underbrace{\np{\Delta \cup{\Cousinhood\ConverseRelTheta}}}_{=\TopologicalRelation}
    \\
    &\hspace{1cm}=
      \Delta
      \cup
      \Delta_{\Complementary{\AgentSubsetW}}     \ConditionalAscendent 
      \cup \ConverseConditionalAscendent\Delta_{\Complementary{\AgentSubsetW}}
      \cup  \ConditionalActiveTwo
      \cup
      \ConditionalActiveThreeTheta
      \cup
      \ConverseTopologicalRelation
      \cup
      \TopologicalRelation
      \eqfinp
  \end{align*}
  Finaly,
  using the last equality and performing a union with
  $\Delta_{\AgentSubsetW}\ConditionalAscendent
  \cup \ConverseConditionalAscendent \Delta_{\AgentSubsetW}$
  on both sides of the equality, we obtain
  \begin{align*}
    &\bp{\ConverseTopologicalRelation\ConverseConditionalAncestor \ConditionalAncestor  \TopologicalRelation}
      \cup {\RelTheta\Cousinhood} \ConverseConditionalAscendent \Delta_{\AgentSubsetW}
      \cup \Delta_{\AgentSubsetW}\ConditionalAscendent {\Cousinhood\ConverseRelTheta}
      \cup \Delta_{\AgentSubsetW}\ConditionalAscendent
      \cup \ConverseConditionalAscendent \Delta_{\AgentSubsetW}
    \\
    &\hspace{1cm}=
      \Delta
      \cup
      \ConditionalAscendent 
      \cup \ConverseConditionalAscendent
      \cup  \ConditionalActiveTwo
      \cup
      \ConditionalActiveThreeTheta
      \cup
      \ConverseTopologicalRelation
      \cup
      \TopologicalRelation
    \\
    &\hspace{1cm}=
      \Delta
      \cup
      \ConditionalAscendent 
      \cup \ConverseConditionalAscendent
      \cup  \ConditionalActiveTwo
      \cup
      \ConditionalActiveThreeTheta
      \cup
      \ConverseTopologicalRelation
      \cup
      \TopologicalRelation
    \\
    &\hspace{1cm}=
      \ConditionalActiveNew
      \cup
      \ConverseTopologicalRelation
      \cup
      \TopologicalRelation
      \tag{by definition of $\ConditionalActive$ in~\eqref{eq:conditional_active_relation}}
      \eqfinp
  \end{align*}
  
  This ends the proof.
\end{proof}
\preprintstop
\fi

\newcommand{\noopsort}[1]{} \ifx\undefined\allcaps\def\allcaps#1{#1}\fi

\end{document}